\documentclass[a4paper,12pt]{article}

\usepackage[a4paper,margin=1in]{geometry}   
\usepackage{setspace}
\usepackage{graphicx}
\usepackage{amsmath,amsthm,amsfonts,mathtools}
\usepackage{amssymb}
\usepackage{subcaption}
\usepackage{algorithm}
\usepackage{algpseudocode}
\usepackage{array,booktabs,makecell,multicol}
\usepackage{physics}
\usepackage{graphicx} 
\usepackage{titlesec}
\titleformat{\section}{\bfseries\Large}{\thesection}{1em}{}          
\titlespacing*{\section}{0pt}{*2}{*1}                                
\titleformat{\subsection}{\bfseries\large}{\thesubsection}{0.8em}{}  
\titlespacing*{\subsection}{0pt}{*1.5}{*0.8}

\usepackage[colorlinks=true, citecolor=blue, linkcolor=red, urlcolor=blue]{hyperref}
\usepackage{natbib}                  
\newtheorem{assumption}{Assumption}
\newtheorem{theorem}{Theorem}[section]
\newtheorem{corollary}{Corollary}[theorem]
\newtheorem{lemma}[theorem]{Lemma}
\newtheorem{definition}{Definition}
\newtheorem{proposition}{Proposition}

\newcommand{\Eg}[1]{\mathbb{E}\!\left[\tilde g^{#1}\mid X=x\right]}

\begin{document}

\begin{center}
\LARGE
\textbf{Estimation of heterogeneous principal effects under principal ignorability} \\[14pt]

\large
Rui Zhang$^{1}$, Charles R.\ Doss$^{1}$, Jared D.\ Huling$^{2}$ \\[10pt]

\normalsize

$^{1}$School of Statistics, University of Minnesota, USA \\
$^{2}$Division of Biostatistics and Health Data Science, \\School of Public Health, University of Minnesota, USA\\[6pt]

\end{center}

\begin{abstract}

We study estimation and inference for heterogeneous principal causal effects with binary treatments and binary intermediate variables. Principal causal effects are subgroup effects within strata defined by potential values of an intermediate variable, including effects among compliers. We propose a framework for estimating and forming pointwise confidence intervals for heterogeneous principal causal effects under the principal ignorability assumption.
Several estimators are developed, and their robustness properties are characterized: one estimator is doubly robust, whereas the other two attain intermediate robustness between double and triple robustness; in contrast, principal causal effects can be estimated in a triply robust manner only. We establish large-sample theory under nonparametric smoothness conditions and analyze the bias contributions of each approach, providing insight into performance beyond the smooth setting, including in high-dimensional regimes. Camden Coalition hotspotting randomized trial are used to illustrate the methods by estimating heterogeneous complier effects.

\textbf{Keywords:} Complier effects; Double robustness; Effect heterogeneity; Influence functions; Nonparametric smoothing; Principal stratification

\end{abstract}

\newpage

\section{Introduction}\label{s:1}
In this paper, we develop a new framework for estimating and performing inference on heterogeneous principal causal effects under the principal ignorability assumption. Our framework allows the estimation of conditional principal causal effects among never takers, compliers, and always takers. Within our framework, we develop four estimators, three with robustness properties with respect to nuisance parameter estimation and one without. The three with robustness properties can be implemented with flexible machine learning methods. We apply our framework to the Camden Coalition’s hotspotting program, a care management intervention for high-utilizing patients, to assess whether its effects differ across patient characteristics among high engagers with the program.

The Camden Coalition’s hotspotting program \citep{finkelstein2020health} is a care management intervention designed to support high cost, medically and socially complex patients, with the goal of improving outcomes and reducing readmissions. To rigorously evaluate its effectiveness, the Coalition conducted a randomized controlled trial (RCT) in which recently hospitalized superutilizers were assigned to receive either the intervention or usual care, with hospital readmission as the primary outcome. \citet{finkelstein2020health} found no significant overall effect of the program on readmissions, raising questions about whether the intervention may nonetheless benefit certain subgroups of patients. \citet{yang2023hospital} conducted a secondary analysis which found that average null effects masked important subgroup differences: many individuals assigned to treatment did not fully engage, but among those with a high baseline probability of engagement, the program yielded significant reductions in readmissions.

In light of these findings, decision makers confront a fundamental question: what causes these subgroup differences? Two explanations are possible. The differences may be attributable to high engagement, and in this case, within the subgroup of highly engaged individuals, the treatment effect could be homogeneous, implying that the heterogeneous principal causal effect function (defined later) is constant within this subgroup. Alternatively, the difference may arise from how the treatment itself operates across individuals. Under this explanation, the heterogeneous principal causal effect function is non-constant and varies with baseline covariates, reflecting genuine effect heterogeneity within the principal stratum.

Distinguishing between these explanations is critical, as it determines whether observed benefits among high engagers reflect systematic differences in who participates in a uniformly effective intervention, or genuine heterogeneity in the causal effects of the intervention. This distinction, in turn, guides whether decision makers should prioritize improving outreach and engagement, refining the intervention itself, or both.
This challenge is not unique to hotspotting. In job training programs, for instance, positive outcomes may arise because more motivated individuals are the ones who engage, or because the program itself is genuinely more effective for those individuals. Similarly, in public health programs such as smoking cessation interventions, better results among compliers may reflect baseline differences in addiction severity, or true heterogeneity in how the intervention works across subgroups. Careful estimation of conditional principal causal effects is therefore essential to disentangle these mechanisms and provide a more accurate understanding of treatment effectiveness.

There is limited work on heterogeneous principal causal effects. Most existing work focuses on complier effects under an instrumental variable (IV) framework \citep{takatsu2025doubly, johnson2022detecting, bargagli2022heterogeneous, lee2023minimally}. For example, \citet{bargagli2022heterogeneous} derived a nonparametric identification formula for heterogeneous complier effects that parallels the classical LATE identification results of \citet{angrist1995identification} and \citet{angrist1996identification}.
However, the exclusion restriction (ER) can be difficult to justify in some applications. In the hotspotting trial, for example, the intervention was not double-blinded and patients knew their assignment, making ER particularly implausible. Likelihood-based alternatives (e.g., \citealp{chen2024bayesian}) can relax the ER, but are often sensitive to modeling assumptions and difficult to implement \citep{ho2022weak}. This motivates alternative identification strategies, such as those based on principal ignorability (PI) \citep{jo2009use}, which can provide more credible causal inference when ER assumptions fail and likelihood-based approaches are subject to misspecification.

The PI assumption requires that baseline covariates capture the dependence between stratum membership and potential outcomes \citep{jo2009use}. 
\citet{ding2017principal} proposed methods for estimating PCEs in randomized experiments under PI assumption, including a weighting-based identification formula and covariate-adjusted estimators. \citet{jiang2022multiply} developed a general framework for both randomized experiments and observational studies and developed efficient influence function (EIF)-based estimators that offer triple robustness and efficiency guarantees.

We develop a new framework for identifying, estimating, and conducting inference on conditional principal causal effects (CPCEs) for never-takers, compliers, and always-takers under a PI assumption. Our identification strategy is motivated by the conditional average treatment effect (CATE) literature \citep{kunzel2019metalearners, nie2021quasi, semenova2021debiased, kennedy2023towards}. As a starting point, we adopt a T-learner-style approach \citep{kunzel2019metalearners}, which models subgroup outcome means separately and takes their difference. While intuitive, this approach inherits the well-known limitations of the T-learner, including sensitivity to model misspecification, limited robustness, and tying the functional form of the estimated CPCE to the outcome model choice, limiting interpretability. 

To address these shortcomings, we develop three estimators that are robust to nuisance function estimation, the subset, EIF, and one-step estimators, and that are compatible with cross-fitting and flexible machine learning methods \citep{chernozhukov2018double}. For the subset estimator, we extend the DR-learner of \citet{kennedy2023towards} to a specific observed subset, which is tailored to principal stratification. The EIF estimator leverages the efficient influence function (EIF) for principal causal effects \citep{jiang2022multiply} and uses the full dataset, but we find it can be numerically unstable in small samples. To mitigate this, we propose a one-step estimator that refines a preliminary estimator (e.g., a T-learner) via an influence-function based correction.

Furthermore, we study pointwise error bounds for these three estimators under smoothness conditions \citep{kennedy2023towards}. The results show that the subset estimator is doubly robust, remaining consistent if either of two nuisance functions is correctly specified. Although \citet{jiang2022multiply} establish only triple robustness for marginal PCEs, we find stronger robustness for CPCEs: the EIF estimator and the one-step estimator (with the T-learner as the preliminary estimator) remain consistent if either the score components (i.e., the two score nuisance functions jointly) are correctly specified or the outcome regression is correctly specified. Together, these innovations provide a general and robust framework for learning heterogeneous principal causal effects, offering decision makers finer grained insights into which subpopulations stand to benefit most from interventions such as hotspotting.

The remainder of the paper is organized as follows. Section \ref{s:2} introduces notation, target estimands, and assumptions, and presents the T-learner estimator for CPCEs. Section \ref{s:3} develops multiple robust identification strategies for CPCEs and their oracle robustness properties, including the subset, EIF, and one-step estimators. Section \ref{s:4} establishes error bounds for the subset, EIF, and one-step estimators under stable second-stage regression. Section \ref{s:5} reports results from simulation experiments characterizing the performance of our proposed methods. Section \ref{s:6} applies the framework to data from the Healthcare Hotspotting Trial.

\section{Setup}\label{s:2}
\subsection{Estimand, notation and assumption}
Assume that $W=(X,Y,S,Z)$ and $\{W_i = (X_i, Y_i, S_i, Z_i) ,$ $\quad i = 1,...,n\}$ are independent and identically distributed observations, where $X_i \in \mathbb{R}^p$ denotes pretreatment covariates, $Y_i\in \mathbb{R}$ is the outcome of interest, $Z_i\in\{0,1\}$ is the binary treatment assignment, and $S_i\in\{0,1\}$ is the binary intermediate variable. For example, $S$ could represent an indicator of survival when the estimand of interest is the survivor average causal effect, or it could represent treatment usage (not just assignment) when the focus is on causal effects among compliers. To lighten notation, we omit the subscript $i$ denoting individual units. We adopt the potential outcomes framework. Under the treatment assignment $z$ for $z \in \{0,1\}$, we can define the potential values $Y(z)$ and $S(z)$. 

\citet{frangakis2002principal} proposed the notion of principal stratification by the joint potential values of the intermediate latent variable, $U=\left(S(1), S(0)\right)$. For simplicity, we denote the possible values of $U$,  $\{(0,0),(1,0),(0,1),(1,1)\}$, as ${00, 10, 01, 11}$. 
It is appropriate to condition on the principal stratum $U = (S(1), S(0))$ when defining (principal) causal effects, as $U$ is a pre-treatment characteristic with respect to potential outcomes and is unaffected by the actual treatment assignment. In contrast, conditioning on the observed post-treatment variable $S$ can induce bias, since $S$ may be affected by treatment and thus should not be used (naively) to define a well-posed causal contrast.
The PCEs are the average causal effect within a principal stratum $U$, and are defined as 
\begin{equation*}
    \tau^{u}=\mathbb{E}\left(Y(1)-Y(0) \mid U=u\right), \quad u=00,10,11,01.
\end{equation*}
Note that we use superscripts for principal strata in this article. We aim to estimate the CPCE function which characterizes the heterogeneity of causal effects within latent principal strata. For a principal stratum and covariates, the CPCEs are defined as
\begin{equation*}
    \tau^{u}(X)=\mathbb{E}\left(Y(1)-Y(0) \mid U=u,X\right), \quad u=00,10,11,01.
\end{equation*}
Because $U$ cannot be observed directly (since the potential values $S(1)$ and $S(0)$ cannot be observed simultaneously), to estimate $\tau^u (X)$, identifying assumptions are needed. \citet{jiang2022multiply} provided an in-depth discussion on the identification of PCEs, outlining a set of key assumptions necessary for identification. These are the same assumptions needed in this work.

\begin{assumption}[Consistency]\label{assump:con}
Assume $Y = Y(1)Z + Y(0)(1-Z)$ and $S = S(1)Z + S(0)(1-Z)$.
\end{assumption}
\noindent This assumption is necessary for identification.  Unlike the standard causal inference setup, where consistency is only required of $Y$, here it is required for both $Y$ and $S$.
\begin{assumption}[Treatment Ignorability]
\label{assump:pi}
Assume $(Y(1), Y(0), S(1), S(0)) \perp\!\!\!\perp  Z \mid X$, where $\perp\!\!\!\perp$ denotes independence.
\end{assumption}
\noindent  Ignorability of treatment assignment requires that, conditional on observed covariates $X$, treatment $Z$ is independent of all potential outcomes and intermediate variables, which eliminates confounding between the treatment and both the intermediate variable $S$ and the outcome $Y$. In the context of an RCT, this ignorability assumption holds by design, since treatment is independent of all unobserved patient characteristics. It extends the classical treatment ignorability assumption commonly invoked in observational studies to the setting of principal stratification. In observational studies, this assumption requires a rich set of observed covariates that capture all confounders of treatment, outcome, and the intermediate variable.

\begin{assumption}[Monotonicity]
\label{assump:monotonicity}
Assume $S(1) \geq S(0)$.
\end{assumption}
\noindent Monotonicity stipulates that the treatment has a non-negative effect on the intermediate variable for all units, which explicitly rules out the presence of so-called defiers $(U = 01)$. We develop our identification and estimation results under monotonicity rather than strong monotonicity,
which allows for the possibility of always-takers ($S(0)=1$). If strong monotonicity is known to hold (i.e., $S(0)=0$), such as in one-sided noncompliance settings \citep{sommer1991estimating},
then this assumption holds automatically and the always-taker stratum is absent. Then, the identification formulas for compliers and never-takers in Section~\ref{s:3} remain unchanged.

To identify the CPCE, we need an additional assumption.
\citet{angrist1995identification} and \citet{angrist1996identification} proposed the PCEs among compliers ($\tau^{10}$) under monotonicity and an ER assumption, which sets the PCEs for always-takers and never-takers to zero ($\tau^{11} = \tau^{00} =0$). 
These works inspired \cite{bargagli2022heterogeneous} to propose a formula for the CPCE among compliers ($\tau^{10}(x)$) under the ER assumption.

However, the ER assumption is often implausible and furthermore cannot be imposed when the CPCE for strata $U = 00$ and $U= 11$ are the primary targets of inference. \citet{jiang2022multiply} proposed a framework to identify PCEs under an alternative assumption, principal ignorability.

\begin{assumption}[Principal ignorability]\label{assump:ps}
Assume that $\mathbb{E}\left(Y(1)\mid U=11, X\right)$ is equal to $\mathbb{E}\left(Y(1) \mid U=10, X\right)$, and $\mathbb{E}(Y(0) \mid U=00, X)=E\left(Y(0) \mid U=10, X\right).$
\end{assumption}
\noindent Principal ignorability requires that, conditional on covariates $X$, the mean potential outcomes do not differ across the relevant principal strata. It requires that the covariates $X$ are sufficiently rich to eliminate all differences in potential outcomes between principal strata. \citet{jiang2022multiply} also gave an equivalent version of Assumption \ref{assump:ps} under Assumptions \ref{assump:pi}-\ref{assump:monotonicity}, which is 
$$
\begin{aligned}
& \mathbb{E}\left(Y(1) \mid U=11, Z=1, S=1, X\right)=\mathbb{E}\left(Y(1) \mid U=10, Z=1, S=1, X\right), \\
& \mathbb{E}\left(Y(0) \mid U=00, Z=0, S=0, X\right)=\mathbb{E}\left(Y(0) \mid U=10, Z=0, S=0, X\right).
\end{aligned}
$$
Intuitively, principal ignorability transforms a latent mixture problem, defined over unobserved strata, into an observed mixture problem, which enables identification of principal causal effects from the data. 

\subsection{Contrast 
identification for CPCEs}
We define the mean of the outcome within the observed strata $\{ Z=z, S=s \}$ as $\mu_{z s}(X)=\mathbb{E}(Y \mid Z=z, S=s, X)$. To distinguish between observed strata and principal strata, we use subscripts to denote observed strata and superscripts to denote principal strata. The following theorem provides an identification formula for each CPCE in terms of contrasts of the outcome mean functions.

\begin{theorem}\label{the1}
Let Assumptions~\ref{assump:con}--\ref{assump:ps} hold. Then the CPCEs are identified as
\[
\tau^{11}(x)=\mu_{11}(x)-\mu_{01}(x), \quad
\tau^{10}(x)=\mu_{11}(x)-\mu_{00}(x), \quad
\tau^{00}(x)=\mu_{10}(x)-\mu_{00}(x),
\quad \text{for all } x.
\]
\end{theorem}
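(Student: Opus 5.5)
The plan is to express each observed-stratum mean $\mu_{zs}(x)$ in terms of stratum-specific potential-outcome means, and then use principal ignorability to match them to the quantities entering $\tau^u(x)$. Under monotonicity (Assumption~\ref{assump:monotonicity}) only the strata $00,10,11$ exist. By consistency (Assumption~\ref{assump:con}), the observed cells decompose as
\[
\{Z=1,S=1\}=\{Z=1,U\in\{10,11\}\},\qquad \{Z=1,S=0\}=\{Z=1,U=00\},
\]
\[
\{Z=0,S=1\}=\{Z=0,U=11\},\qquad \{Z=0,S=0\}=\{Z=0,U\in\{00,10\}\}.
\]
Treatment ignorability (Assumption~\ref{assump:pi}) gives $(Y(z),U)\perp\!\!\!\perp Z\mid X$, since $U$ is a function of $(S(1),S(0))$. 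Hence
\[
\mathbb{E}(Y\mid Z=z,U=u,X)=\mathbb{E}(Y(z)\mid Z=z,U=u,X)=\mathbb{E}(Y(z)\mid U=u,X).
\]
Write $m_z^u(x)=\mathbb{E}(Y(z)\mid U=u,X=x)$, so that $\tau^u(x)=m_1^u(x)-m_0^u(x)$.

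\textbf{Single-stratum cells.} These follow immediately: $\mu_{10}(x)=m_1^{00}(x)$ and $\mu_{01}(x)=m_0^{11}(x)$.

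\textbf{Mixed cells.} Take $\{Z=1,S=1\}$. By the law of total expectation,
\[
\mu_{11}(x)=\sum_{u\in\{10,11\}} m_1^u(x)\,P(U=u\mid Z=1,S=1,X=x).
\]
Principal ignorability (Assumption~\ref{assump:ps}) says $m_1^{11}(x)=m_1^{10}(x)$, so the mixture collapses regardless of the weights, and $\mu_{11}(x)=m_1^{10}(x)=m_1^{11}(x)$. Symmetrically, the second part of Assumption~\ref{assump:ps} gives $\mu_{00}(x)=m_0^{00}(x)=m_0^{10}(x)$.

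\textbf{Assembling the contrasts.}
\begin{itemize}
\item $\tau^{11}(x)=m_1^{11}-m_0^{11}=\mu_{11}(x)-\mu_{01}(x)$.
\item $\tau^{10}(x)=m_1^{10}-m_0^{10}=\mu_{11}(x)-\mu_{00}(x)$.
\item $\tau^{00}(x)=m_1^{00}-m_0^{00}=\mu_{10}(x)-\mu_{00}(x)$.
\end{itemize}

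\textbf{Main obstacle.} The delicate step is the bookkeeping that justifies conditioning: each cell $\{Z=z,S=s\}$ must have positive probability given $X=x$, and the strata conditioned on must have positive probability. This requires implicit positivity of $P(Z=z\mid X)$ and of the relevant stratum probabilities. I would state this positivity as implicit, or note that the identities hold wherever the conditional means are defined. I would also verify carefully that conditioning on $U$ commutes with the ignorability of $Z$, which needs the joint independence in Assumption~\ref{assump:pi}, not just marginal independence.
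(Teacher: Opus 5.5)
Your proposal is correct and follows essentially the same route as the paper. The paper's proof goes through Lemma~\ref{le0}: it uses the monotonicity table to map each observed $(Z,S)$ cell to its principal strata, treatment ignorability to pass between conditioning on $U$ and on $(U,Z)$, and principal ignorability to collapse the two mixed cells, then assembles the contrasts exactly as you do. The only cosmetic difference is that you apply principal ignorability in its original form (conditioning only on $U$ and $X$) after invoking ignorability, whereas the paper works with the equivalent version that also conditions on $(Z,S)$.
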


\noindent Analogously to the CATE, which can be represented as the difference between the conditional mean outcomes of the treated and control groups under the ignorability assumption, Theorem \ref{the1} shows that under Assumptions \ref{assump:con}–\ref{assump:ps}, the CPCE can be written as a difference in conditional mean outcomes between observed groups. In contrast, the PCE generally cannot be expressed as a simple difference in means between the observed groups.

From the PCE identification formula in \citet{jiang2022multiply} and Theorem \ref{the1}, a clear connection between CPCEs and PCEs can be observed. First, let us introduce some notation.
Let $\pi(x):=\mathbb{P}(Z=1\mid X=x)$ denote the propensity score. For $u\in\{00,10,11\}$, let $e^{u}(x):=\mathbb{P}(U=u\mid X=x)$
denote the principal score, i.e., the conditional probability of belonging to principal stratum $u$ given $X=x$. Define $e^{u}:=\mathbb{E}\!\left[e^{u}(X)\right]$
as the marginal probability of principal stratum $u$ under the distribution of $X$. \cite{jiang2022multiply} showed that the principal scores are identified under Assumptions \ref{assump:pi} and \ref{assump:monotonicity} by
\begin{equation}
    e^{10}(x)=p_1(x)-p_0(x), \quad e^{00}(x)=1-p_1(x), \quad e^{11}(x)=p_0(x),
\end{equation}
where 
 $p_z(x) = P(S = 1|Z = z, X =x)$. 
 Building on this result, we state the corresponding PCE identification formulas below to clarify the relationship between PCEs and CPCEs:
\begin{equation*}
\tau^{u} = \mathbb{E}\left[\frac{e^{u}(X)}{e^{u}} \tau^{u}(X)\right].
\end{equation*}
These equations show that PCEs are weighted averages of CPCEs, where the weights reflect the distribution of covariates $X$ within each principal stratum. This differs from the relationship between the CATE and the ATE, where the ATE is simply the direct (unweighted) average of the CATE over the covariate distribution. The weights appear because we are averaging over the distribution of $X$ within each principal stratum which are unobserved, not over the overall population. 

In the following section, we present several novel estimators for CPCEs, drawing inspiration from recent advances in CATE estimation while addressing the unique challenges posed by principal stratification.

\subsection{T-learner estimators for CPCEs}

\citet{kunzel2019metalearners} refer to the contrast-based approach for estimating the CATE as the T-learner. In this framework, the CATE is estimated as the difference between two separately estimated outcome models, one for the treated group and one for the control group. Inspired by this framework, we can similarly estimate the CPCE by taking the difference between estimates of conditional mean outcomes across the relevant observed groups. To highlight the contrast-based structure of the estimator, we refer to our approach as a T-learner for CPCE, analogous to the T-learner framework for CATE. Based on Theorem~\ref{the1}, we define the T-learner estimators as
\begin{equation}
    \hat{\tau}^{00}(x):= \hat{\mu}_{10}(x)-\hat{\mu}_{00}(x),\quad
        \hat{\tau}^{11}(x):= \hat{\mu}_{11}(x)-\hat{\mu}_{01}(x),\quad
        \hat{\tau}^{10}(x):= \hat{\mu}_{11}(x)-\hat{\mu}_{00}(x),
\end{equation}
where $\hat{\mu}_{zs}(x)$ denotes an estimator of $\mu_{zs}(x)$.
The T-learner is simple and flexible, accommodating any machine learning method to model outcomes under each subgroups. However, it is sensitive to model misspecification: since the CPCE is estimated as the difference of two regressions, errors in either model directly bias the result. Moreover, because it solves two separate problems rather than targeting the CPCE directly, it can be especially biased under group imbalance, where the smaller group tends to be over-smoothed and the larger group under-smoothed.  

To illustrate this limitation, we adapt a toy example from \citet{kennedy2023towards}. The design balances the marginal proportions of the four observed subgroups but induces strong treatment imbalance conditional on $X$. In such settings, the T-learner over-smooths sparse regions and under-smooths dense ones, producing an unnecessarily complex estimate even though the true effect is constant at zero. The details and results of this example can be found in the Supplementary \ref{s:toy_example}. The next section introduces alternative estimators that overcome these issues.  

\section{Multiple robust identification for CPCEs}\label{s:3}
\subsection{Subset identification for CPCEs}
In this section, we present methods for identifying the CPCEs from a subset perspective: by applying the DR-learner to specific observable subsets of the data, we can attain a the doubly robust property despite the presence of three nuisance functions in the setup.

Revisiting Theorem~\ref{the1}, we observe an interesting structural pattern in the identification of the CPCEs. For example, $\tau^{00}(x) = \mu_{10}(x) - \mu_{00}(x)$ compares outcomes within $S=0$, with $\mu_{10}(x)$ from treated units ($Z=1$) and $\mu_{00}(x)$ from controls ($Z=0$). Similarly, $\tau^{10}(x)$ compares outcomes when $S=Z$, and $\tau^{11}(x)$ compares outcomes within $S=1$.
These patterns highlight that each CPCE contrasts outcomes across treatment groups within a shared observed subset, which enables estimation analogous to that of the CATE but within specific subpopulations. The following theorem offers a formal justification for the intuition outlined above and provides an identification result that motivates a DR-learner style estimator applied to certain subsets of the observed data. { To do so, we first define the following `pseudo-outcomes',
\begin{equation}
\label{eq:subset:pseudos}
    \begin{split}
        \varphi_{\tau^{00}}(W)
&=\frac{Z-\pi_{\mathcal{S}_{00}}(X)}
{\pi_{\mathcal{S}_{00}}(X)\{1-\pi_{\mathcal{S}_{00}}(X)\}}
\Bigl\{Y-\mu_{Z0}(X)\Bigr\}
+\mu_{10}(X)-\mu_{00}(X),\\[4pt]
\varphi_{\tau^{10}}(W)
&=\frac{Z-\pi_{\mathcal{S}_{10}}(X)}
{\pi_{\mathcal{S}_{10}}(X)\{1-\pi_{\mathcal{S}_{10}}(X)\}}
\Bigl\{Y-\mu_{ZZ}(X)\Bigr\}
+\mu_{11}(X)-\mu_{00}(X), \text{ and}\\[4pt]
\varphi_{\tau^{11}}(W)
&=\frac{Z-\pi_{\mathcal{S}_{11}}(X)}
{\pi_{\mathcal{S}_{11}}(X)\{1-\pi_{\mathcal{S}_{11}}(X)\}}
\Bigl\{Y-\mu_{Z1}(X)\Bigr\}
+\mu_{11}(X)-\mu_{01}(X),
    \end{split}
\end{equation}
where 
$\pi_{\mathcal{S}_{u}}(x)=\mathbb{E}(Z\mid \mathcal{S}_{u}, X =x )$, and 
 where $\mathcal{S}_{u}$ denotes the conditioning set we will use for subset identification of
$\tau^{u}(X)$, namely
$\mathcal{S}_{00}=\{S=0\},$
$\mathcal{S}_{10}=\{Z=S\},$ 
$ \mathcal{S}_{11}=\{S=1\}.$ }{Recall that $\mu_{zs}(x) = \mathbb{E}(Y| Z=z,S=s, X=x)$; note that the capital Z subscripts in \eqref{eq:subset:pseudos}, e.g.\ $\mu_{Z0}$, denote randomness in which function (e.g.\ $\mu_{10}$ or $\mu_{00}$) is selected.}
\begin{theorem}\label{the:subset}
Under Assumptions \ref{assump:con}--\ref{assump:ps}, the CPCEs can be identified by 
\begin{equation}
\tau^{u}(x)=\mathbb{E}\!\left(\varphi_{\tau^{u}}(W)\mid \mathcal{S}_{u},\,X=x\right),\quad  \text{for all }  x.  
\end{equation}
\end{theorem}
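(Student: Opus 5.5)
The plan is to reduce each case to a standard DR-learner (AIPW) identity within the subset $\mathcal{S}_u$, and then apply Theorem~\ref{the1}. Fix $u$ and condition throughout on the event $\mathcal{S}_u\cap\{X=x\}$. Two facts drive the argument.

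First, within $\mathcal{S}_u$ the treatment indicator $Z$ determines the observed cell $(Z,S)$:
\begin{itemize}
\item on $\mathcal{S}_{00}=\{S=0\}$ the cell is $(Z,0)$;
\item on $\mathcal{S}_{11}=\{S=1\}$ it is $(Z,1)$;
\item on $\mathcal{S}_{10}=\{Z=S\}$ it is $(Z,Z)$.
\end{itemize}
Hence in each case the cell is $(Z,s_u(Z))$, and $\mathbb{E}(Y\mid Z=z,\mathcal{S}_u,X=x)=\mu_{z s_u(z)}(x)$, where $s_u(z)$ is $0$, $1$, or $z$ respectively.

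Second, by definition $\pi_{\mathcal{S}_u}(x)=\mathbb{P}(Z=1\mid\mathcal{S}_u,X=x)$. I will assume, as an implicit positivity condition, that this lies in $(0,1)$, so that the pseudo-outcome is well defined.

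Next I will compute the conditional expectation of the weighted residual term by iterated expectations, conditioning further on $Z$:
\[
\mathbb{E}\!\left[\frac{Z-\pi_{\mathcal{S}_u}(X)}{\pi_{\mathcal{S}_u}(X)\{1-\pi_{\mathcal{S}_u}(X)\}}\bigl\{Y-\mu_{Z s_u(Z)}(X)\bigr\}\,\Big|\,\mathcal{S}_u,X=x\right]
=\mathbb{E}\!\left[\frac{Z-\pi_{\mathcal{S}_u}(x)}{\pi_{\mathcal{S}_u}(x)\{1-\pi_{\mathcal{S}_u}(x)\}}\,\mathbb{E}\bigl\{Y-\mu_{Z s_u(Z)}(x)\mid Z,\mathcal{S}_u,X=x\bigr\}\,\Big|\,\mathcal{S}_u,X=x\right].
\]
The inner expectation is zero by the first fact, because the event $\{Z=z\}\cap\mathcal{S}_u$ equals $\{Z=z,S=s_u(z)\}$. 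The only care needed is checking this set identity for $\mathcal{S}_{10}$: there $\{Z=1\}\cap\{Z=S\}=\{Z=1,S=1\}$ and $\{Z=0\}\cap\{Z=S\}=\{Z=0,S=0\}$.

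The augmentation term $\mu_{1s_u(1)}(X)-\mu_{0s_u(0)}(X)$ is a function of $X$ alone, so conditioning on $X=x$ leaves it equal to $\mu_{1s_u(1)}(x)-\mu_{0s_u(0)}(x)$. Theorem~\ref{the1} identifies this difference with $\tau^u(x)$ in each of the three cases, namely $\mu_{10}-\mu_{00}$, $\mu_{11}-\mu_{00}$, and $\mu_{11}-\mu_{01}$.

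The step I expect to need the most care is not analytic but a matter of well-definedness. The regression functions $\mu_{zs}$ must be defined on the relevant cells, which requires $\mathbb{P}(Z=z,S=s\mid X=x)>0$. This in turn requires positivity of $\pi(x)$ together with $p_z(x)\in(0,1)$ where needed. I will state these conditions explicitly, noting that they are also what make $\pi_{\mathcal{S}_u}(x)\in(0,1)$, since $\pi_{\mathcal{S}_u}$ can be written in terms of $\pi$ and $p_z$ by Bayes' rule. All other causal content is already contained in Theorem~\ref{the1}.

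Although it is not needed for the proof, I would also remark on double robustness. The argument shows that the weighted residual has conditional mean zero if either $\mu$ is correct, by the first fact, or $\pi_{\mathcal{S}_u}$ is correct. In the latter case $\mathbb{E}[(Z-\pi_{\mathcal{S}_u})/(\pi_{\mathcal{S}_u}(1-\pi_{\mathcal{S}_u}))\, f(Z,X)\mid \mathcal{S}_u,X]=f(1,X)-f(0,X)$, and the $\mu$-error terms then cancel against the augmentation term.
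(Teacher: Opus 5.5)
Your proposal is correct and essentially matches the paper's proof. The paper (worked out only for $u=10$) splits $\mathbb{E}(\cdot\mid\mathcal{S}_{10},X=x)$ over the cells $(1,1)$ and $(0,0)$ with weights $\pi_{\mathcal{S}_{10}}(x)$ and $1-\pi_{\mathcal{S}_{10}}(x)$, uses that each cell's residual has conditional mean zero, and concludes with Theorem~\ref{the1}; your $s_u(z)$ notation covering all three strata at once and your explicit positivity conditions (left implicit in the paper) are welcome additions.
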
 
\noindent Theorem~\ref{the:subset} provides a path, which we will follow later, to estimating the CPCEs by applying a DR-Learner to a defined subset of the data. However, it is important to note that the nuisance functions $\pi_{\mathcal{S}_u}(\cdot)$ in this setting are not the conventional propensity scores. Instead, they correspond to conditional treatment probabilities within an observed stratum (rather than a principal stratum). We refer to the functions $\pi_{\mathcal{S}_u}(\cdot)$ as ``subset propensity scores;" 
they have structured relationships with both propensity scores and principal scores.
For example, $\pi_{\mathcal S_{00}}(x) = \frac{(1 - p_1(x)) \cdot  \pi(x)}{ \pi(x)(1 - p_1(x)) + (1 - p_0(x))(1 -  \pi(x))}$ (see Supplementary \ref{s:subpropensity} for derivation and additional details). Thus, the analyst may either directly estimate the subset propensity score $\pi_{\mathcal S_{u}}(x)$, or alternatively, estimate both the overall propensity score $\pi(x)$ and the principal scores $p_1(x)$ and $p_0(x)$. In practice, we recommend estimating the subset propensity score $\pi_{\mathcal S_{u}}(x)$ directly when the propensity score or principal score is unavailable, since this requires fewer nuisance components. If both of these scores are known, however, it is preferable to use the latter form instead.

It is important to emphasize that the interpretation of subset identification for CPCEs is fundamentally different from that of identifying a CATE within an observed subgroup. Nonetheless, the corresponding subset plug-in limit parameter has analogous double robustness properties. For any stratum $u\in\{00,10,11\}$ and any collection of generic nuisance functions
$\tilde\eta_{sub}:=\{\tilde{\pi}_{\mathcal S_u}(\cdot), \tilde{\mu}_{zs}(\cdot)\}$, define the
plug-in pseudo-outcome $\tilde{\varphi}_{\tau^u}(W)$ by taking the functional form of
$\varphi_{\tau^u}(W)$ given in \eqref{eq:subset:pseudos} and replacing each population nuisance function by its
generic counterpart in $\tilde\eta_{sub}$. For example, when $u=00$, $\tilde{\varphi}_{\tau^{00}}(W)
= \frac{Z-\tilde{\pi}_{\mathcal S_{00}}(X)}{\tilde{\pi}_{\mathcal S_{00}}(X)\{1-\tilde{\pi}_{\mathcal S_{00}}(X)\}}
\Bigl(Y-\tilde{\mu}_{Z0}(X)\Bigr)
+ \tilde{\mu}_{10}(X)-\tilde{\mu}_{00}(X),$
with analogous definitions for $u\in\{10,11\}$.
We then define the corresponding (subset) plug-in limit parameter as
\begin{equation*}
\tilde{\tau}^{u}_{\mathrm{sub}}(x)\;:=\; \mathbb{E}\!\left(\tilde{\varphi}_{\tau^{u}}(W)\,\middle|\,\mathcal{S}_{u},\; X=x\right), \quad \text{for all } x.
\end{equation*}
  
\noindent Here, the nuisance functions $\tilde{\pi}_{\mathcal{S}_{u}}(\cdot)$ and $\tilde{\mu}_{zs}(\cdot)$ are simply fixed functions that are not assumed equal to the true population analogs. Conceptually, we think of these as plug-in estimates obtained from a separate training sample; in this context, our next theorem will provide the bias of the plug-in limit parameter. 
\begin{theorem}\label{robustness_subset} Under Assumption \ref{assump:con}--\ref{assump:ps}, the subset plug-in limit parameter has the following robustness properties:
\begin{itemize}
\item \textbf{(Weak Double Robustness)}  
Suppose, for all $x$, either $\tilde{\pi}_{\mathcal S_{u}} (x)=\pi_{\mathcal S_{u}}(x)$ or
$\tilde{\mu}_{zs}(x)=\mu_{zs}(x)$, then $\tilde{\tau}^{u}_{\mathrm{sub}}(x)=\tau^{u}(x)$.

\item \textbf{(Rate Double Robustness)}  
Suppose, for all $x$, subset overlap holds, i.e.\ there exists $\epsilon>0$ such that
    $\epsilon\le \tilde\pi_{\mathcal S_{u}}(x)\le 1-\epsilon$ almost surely. 
Then, for each $u\in\{00,10,11\}$, $\tilde\tau^{u}_{\mathrm{sub}}(x)-\tau^{u}(x)$ equals
\begin{equation*}\label{eq:subset:bias}
O_p\!\Bigl(
  |\tilde\pi_{\mathcal S_{u}}(x)-\pi_{\mathcal S_{u}}(x)|\cdot  \max_{s,z \in \{0,1\}}|\tilde\mu_{zs}(x)-\mu_{zs}(x)|
\Bigr).
\end{equation*}

\end{itemize}
\end{theorem}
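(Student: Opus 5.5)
The plan is to compute the conditional expectation $\tilde\tau^u_{\mathrm{sub}}(x)=\mathbb{E}(\tilde\varphi_{\tau^u}(W)\mid\mathcal S_u,X=x)$ explicitly by iterated expectations over $Z$ within the subset. This yields an exact bias expression that is a sum of products of propensity errors and outcome-regression errors; both robustness claims are then read off from it. We treat $u=00$ in detail; $u=10$ and $u=11$ are identical after relabeling. For $u=10$ the subset is $\{Z=S\}$, so within the subset $Z=1$ forces $S=1$ and $Z=0$ forces $S=0$, and $\mu_{ZZ}$ plays the role of $\mu_{Z0}$.

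\textbf{Step 1: exact bias expression.} Fix $x$ and abbreviate $\pi=\pi_{\mathcal S_{00}}(x)$, $\tilde\pi=\tilde\pi_{\mathcal S_{00}}(x)$. Conditioning on $Z$ within $\{S=0,X=x\}$, we have $\mathbb{E}(Y\mid Z=z,S=0,X=x)=\mu_{z0}(x)$ and $P(Z=1\mid S=0,X=x)=\pi$. Hence
\begin{equation*}
\tilde\tau^{00}_{\mathrm{sub}}(x)=\pi\,\frac{1-\tilde\pi}{\tilde\pi(1-\tilde\pi)}\{\mu_{10}-\tilde\mu_{10}\}-(1-\pi)\frac{\tilde\pi}{\tilde\pi(1-\tilde\pi)}\{\mu_{00}-\tilde\mu_{00}\}+\tilde\mu_{10}-\tilde\mu_{00},
\end{equation*}
with all functions evaluated at $x$. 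Simplifying gives $\frac{\pi}{\tilde\pi}(\mu_{10}-\tilde\mu_{10})-\frac{1-\pi}{1-\tilde\pi}(\mu_{00}-\tilde\mu_{00})+\tilde\mu_{10}-\tilde\mu_{00}$. By Theorem~\ref{the1}, $\tau^{00}(x)=\mu_{10}(x)-\mu_{00}(x)$, so
\begin{equation*}
\tilde\tau^{00}_{\mathrm{sub}}(x)-\tau^{00}(x)=\frac{\pi-\tilde\pi}{\tilde\pi}\{\mu_{10}-\tilde\mu_{10}\}+\frac{\pi-\tilde\pi}{1-\tilde\pi}\{\mu_{00}-\tilde\mu_{00}\}.
\end{equation*}
This uses $\frac{1-\pi}{1-\tilde\pi}-1=\frac{\tilde\pi-\pi}{1-\tilde\pi}$. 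The identification itself rests only on Theorem~\ref{the:subset} and Theorem~\ref{the1}, which hold under Assumptions~\ref{assump:con}--\ref{assump:ps}.

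\textbf{Step 2: weak double robustness.} If $\tilde\pi=\pi$, both terms in the bias vanish. If $\tilde\mu_{zs}=\mu_{zs}$, both terms again vanish. (When $\tilde\pi=\pi$ one needs $\tilde\pi\in(0,1)$ for the pseudo-outcome to be defined; this is implicit in its definition.)

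\textbf{Step 3: rate double robustness.} Under subset overlap, $1/\tilde\pi\le 1/\epsilon$ and $1/(1-\tilde\pi)\le 1/\epsilon$. The bias is therefore bounded by $\frac{2}{\epsilon}|\tilde\pi-\pi|\max_{z,s}|\tilde\mu_{zs}-\mu_{zs}|$, which is the stated $O_p$ bound. It is a deterministic bound at each $x$, so it holds trivially in $O_p$ form if the nuisances are random estimates from an independent sample.

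The main obstacle is the $u=10$ case. There one must check carefully that the conditional law of $Y$ given $Z=z$ within $\{Z=S\}$ is $\mu_{zz}$. One must also confirm that the population $\pi_{\mathcal S_{10}}$ is the relevant mixing weight, since the subset itself depends on $Z$.
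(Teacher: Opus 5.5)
Your proposal is correct and follows essentially the same route as the paper: iterate expectations over $Z$ within $\mathcal S_u$ to get the exact identity. That identity writes the bias as $\frac{1}{\tilde\pi_{\mathcal S_u}}(\tilde\pi_{\mathcal S_u}-\pi_{\mathcal S_u})(\tilde\mu-\mu)+\frac{1}{1-\tilde\pi_{\mathcal S_u}}(\tilde\pi_{\mathcal S_u}-\pi_{\mathcal S_u})(\tilde\mu-\mu)$ (matching the paper's displays for $u=00,10,11$), from which weak double robustness is read off directly and the rate bound follows by using subset overlap to bound the bias by $\frac{2}{\epsilon}|\tilde\pi_{\mathcal S_u}-\pi_{\mathcal S_u}|\max_{z,s}|\tilde\mu_{zs}-\mu_{zs}|$. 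The $u=10$ point you flag as the main obstacle is immediate: $\pi_{\mathcal S_{10}}(x)=\mathbb{P}(Z=1\mid Z=S,X=x)$ is the mixing weight by definition, and $\{Z=1\}\cap\{Z=S\}=\{Z=1,S=1\}$, so the within-subset conditional means are exactly $\mu_{11}$ and $\mu_{00}$.
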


\noindent Theorem~\ref{robustness_subset} shows that the plug-in limit parameter are (weakly) doubly robust and then quantifies the order of magnitude of the bias. 
Later, in Section~\ref{s:4}, we develop estimators that regress plug-in pseudo-outcomes on covariates; their robustness properties are linked to Theorem~\ref{robustness_subset}.

The bias is rate doubly robust, converging as does the product of the subset propensity score and outcome regression errors, and improving whenever both are well estimated. The detailed forms and proofs are provided in Supplementary \ref{s:proofrobsub}.
The downside to using a subset-based method is that indeed, it relies only on a subset of the available data, rather than all of it.  

\subsection{EIF identification}
The DR-learner  estimator for the CATE is motivated by the EIF for the ATE \citep{kennedy2023towards}. Thus our pseudo-outcomes used for subset identification from \eqref{eq:subset:pseudos} 
are also influence function-motivated, 
but they are constructed and regressed within a specific subset of the data.

This naturally raises the question of whether one can instead employ the EIF over the entire dataset, thereby making {use of the full data sample rather than using just a subset}.

\citet{jiang2022multiply} introduced the EIF for PCEs using a set of key quantities, and showed that setting to zero and solving an estimate of the centered EIF yields estimators with triple robustness and local efficiency. Before presenting our identification strategy for CPCEs, we first review these quantities from \citet{jiang2022multiply}, as they play a central role in the identification and estimation of both PCEs and CPCEs.
For a function $f \equiv f(Y,S,X)$ and for any $a \in \{0,1\}$,  define $\psi_{a,f(Y,S,X)}$ by
\begin{align*}
\psi_{a, f(Y,S,X)}(w)
&:= \frac{\mathbf{1}(Z=a)}{\mathbb{P}(Z=a\mid X=x)}
    \Bigl\{ f(y,s,x) - \mathbb{E}\!\bigl[f(Y,S,X)\mid X=x,Z=z\bigr]\Bigr\} \\
&\hspace{12em} + \mathbb{E}\!\bigl[f(Y,S,X)\mid X=x,Z=z\bigr].
\end{align*}
\begin{definition}[EIF quantities]
For $w=(x,y,s,z)$, define
\begin{align*}
\phi_{1,10}(w)
&:= \frac{e^{10}(x)}{p_1(x)}\,\psi_{1,YS}(w)
   - \mu_{11}(x)\Biggl\{
       \psi_{0,S}(w)
       - \frac{p_0(x)}{p_1(x)}\,\psi_{1,S}(w)
     \Biggr\}, \\[1.25ex]
\phi_{0,10}(w)
&:= \frac{e^{10}(x)}{1-p_0(x)}\,\psi_{0,Y(1-S)}(w)
   - \mu_{00}(x)\Biggl\{
       \psi_{1,1-S}(w)
       - \frac{1-p_1(x)}{1-p_0(x)}\,\psi_{0,1-S}(w)
     \Biggr\}, \\[1.25ex]
\phi_{1,11}(w)
&:= \frac{e^{11}(x)}{p_1(x)}\,\psi_{1,YS}(w)
   + \mu_{11}(x)\Biggl\{
       \psi_{0,S}(w)
       - \frac{p_0(x)}{p_1(x)}\,\psi_{1,S}(w)
     \Biggr\}, \\[1.25ex]
\phi_{0,11}(w)
&:= \psi_{0,YS}(w), \qquad
\phi_{1,00}(w)
:= \psi_{1,Y(1-S)}(w), \quad \text{and} \\[1.25ex]
\phi_{0,00}(w)
&:= \frac{e^{00}(x)}{1-p_0(x)}\,\psi_{0,Y(1-S)}(w)
   + \mu_{00}(x)\Biggl\{
       \psi_{1,1-S}(w)
       - \frac{1-p_1(x)}{1-p_0(x)}\,\psi_{0,1-S}(w)
     \Biggr\}.
\end{align*}
Also define
\[
g^{00}(w):=\psi_{1,1-S}(w),\qquad
g^{10}(w):=\psi_{1,S}(w)-\psi_{0,S}(w),\quad \text{and}\quad
g^{11}(w):=\psi_{0,S}(w).
\]
\end{definition}

\noindent In \citet{jiang2022multiply}, the PCE $\tau^u$ is identified as the root of the EIF-based equation
\begin{equation*}
    \mathbb{E}\left[\frac{\phi_{1,u}(W) - \phi_{0,u}(W) - \tau^u g^u(W)}{e^u}\right] =0. 
\end{equation*}
We obtain an EIF-based identification for $\tau^{u}(x)$ by replacing $\tau^{u}$ with $\tau^{u}(x)$ and projecting the EIF estimating equation onto $X$. The following theorem states the result.

\begin{theorem}\label{the:eif}
Under Assumptions \ref{assump:con}--\ref{assump:ps}, the CPCEs can be identified via
\begin{equation}\label{eq:tau_eif}
\tau^{u}_{\mathrm{eif}}(x)
:= \mathbb{E}\!\left[
\frac{\phi_{1,u}(W)-\phi_{0,u}(W)}
     {\mathbb{E}\!\left(g^{u}(W)\mid X\right)}
\ \Bigg|\ X=x\right],
\quad \text{for all }  x.
\end{equation}
\end{theorem}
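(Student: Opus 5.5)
Since $\mathbb{E}(g^u(W)\mid X)$ is a function of $X$ alone, it comes out of the conditional expectation, and
\[
\tau^u_{\mathrm{eif}}(x)=\frac{\mathbb{E}[\phi_{1,u}(W)-\phi_{0,u}(W)\mid X=x]}{\mathbb{E}[g^u(W)\mid X=x]}.
\]
The plan is to evaluate the numerator and denominator separately. I will show that the denominator equals $e^u(x)$ and the numerator equals $e^u(x)\tau^u(x)$. The statement then follows from Theorem~\ref{the1}, provided $e^u(x)>0$.

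The basic tool is a conditional-mean identity for $\psi$. Write $m_a(x)=\mathbb{E}[f\mid X=x,Z=a]$, reading the $z$ in the definition of $\psi_{a,f}$ as $a$. Then
\[
\mathbb{E}[\psi_{a,f}(W)\mid X=x]=\frac{\mathbb{P}(Z=a\mid x)}{\mathbb{P}(Z=a\mid x)}\,\mathbb{E}[f-m_a(x)\mid X=x,Z=a]+m_a(x)=m_a(x).
\]
Applying this with the relevant choices of $f$ gives
\begin{align*}
\mathbb{E}[\psi_{a,S}\mid x]&=p_a(x), & \mathbb{E}[\psi_{a,1-S}\mid x]&=1-p_a(x),\\
\mathbb{E}[\psi_{1,YS}\mid x]&=p_1(x)\mu_{11}(x), & \mathbb{E}[\psi_{0,YS}\mid x]&=p_0(x)\mu_{01}(x),\\
\mathbb{E}[\psi_{a,Y(1-S)}\mid x]&=(1-p_a(x))\mu_{a0}(x).
\end{align*}
For example, $\mathbb{E}[YS\mid x,Z=1]=\mathbb{P}(S=1\mid x,Z=1)\,\mathbb{E}[Y\mid x,Z=1,S=1]$.

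From these, the denominators follow directly. We get $\mathbb{E}[g^{00}\mid x]=1-p_1=e^{00}$, $\mathbb{E}[g^{10}\mid x]=p_1-p_0=e^{10}$, and $\mathbb{E}[g^{11}\mid x]=p_0=e^{11}$, using the principal score identification already stated (valid under Assumptions~\ref{assump:pi}--\ref{assump:monotonicity}).

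For the numerators, note that in $\phi_{1,10},\phi_{0,10},\phi_{1,11},\phi_{0,00}$ the coefficients multiplying each $\psi$ depend only on $X$, so they factor out of the conditional expectation. The braced correction terms then have conditional mean zero. For instance, $p_0-\frac{p_0}{p_1}p_1=0$, and $(1-p_1)-\frac{1-p_1}{1-p_0}(1-p_0)=0$. The remaining terms are:
\begin{itemize}
\item $\mathbb{E}[\phi_{1,10}\mid x]=e^{10}\mu_{11}$ and $\mathbb{E}[\phi_{0,10}\mid x]=e^{10}\mu_{00}$;
\item $\mathbb{E}[\phi_{1,11}\mid x]=e^{11}\mu_{11}$ and $\mathbb{E}[\phi_{0,11}\mid x]=p_0\mu_{01}=e^{11}\mu_{01}$;
\item $\mathbb{E}[\phi_{1,00}\mid x]=(1-p_1)\mu_{10}=e^{00}\mu_{10}$ and $\mathbb{E}[\phi_{0,00}\mid x]=e^{00}\mu_{00}$.
\end{itemize}
Each numerator difference is therefore $e^u(x)$ times the contrast appearing in Theorem~\ref{the1}, that is, $e^u(x)\tau^u(x)$. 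Dividing by the denominator gives $\tau^u_{\mathrm{eif}}(x)=\tau^u(x)$.

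The main obstacle is bookkeeping rather than any conceptual step. Two points need care. First, the indicator-weighted terms must be handled correctly: conditioning on $X$ and then averaging over $Z$ cancels the inverse propensity weight, which requires overlap $0<\pi(x)<1$. Second, the argument implicitly requires $e^u(x)>0$ (and $p_1(x)>0$, $p_0(x)<1$ where these appear in denominators), so that $\tau^u(x)$ and the ratio are well defined. I would state these positivity conditions explicitly at the start. I would also check the sign conventions in $\phi_{1,11}$ and $\phi_{0,00}$, where the correction enters with a plus sign; this is harmless because the correction has conditional mean zero either way.
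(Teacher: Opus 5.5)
Your proposal is correct and follows the paper's proof essentially step for step. You factor out the $X$-measurable denominator, use $\mathbb{E}[\psi_{a,f}(W)\mid X=x]=\mathbb{E}[f\mid X=x,Z=a]$ to obtain $\mathbb{E}[g^u(W)\mid X=x]=e^u(x)$ and $\mathbb{E}[\phi_{1,u}(W)-\phi_{0,u}(W)\mid X=x]=e^u(x)\tau^u(x)$, and then invoke Theorem~\ref{the1}. You also fill in points the paper leaves implicit: it works out only $u=10$ and only the $\psi_{1,S}$ identity, whereas you treat all three strata, read the inner conditioning in $\psi_{a,f}$ as $Z=a$, and state the needed positivity conditions.
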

For each stratum $u$, the EIF denominator satisfies $\mathbb{E}\!\left[g^{u}(W)\mid X=x\right]=e^{u}(x),$
thereby linking the EIF representation to the corresponding principal score. Note this representation leverages the full dataset instead of subset.
We now examine the robustness properties of the EIF plug-in limit parameter.

Let $\tilde\eta={\tilde\pi(\cdot),\tilde p_z(\cdot),\tilde\mu_{zs}(\cdot)}$ denote an arbitrary collection of nuisance functions. Replacing $\eta$ by $\tilde\eta$, defines plug-in EIF components $\tilde\phi_{1,u}(W)$, $\tilde\phi_{0,u}(W)$, and $\tilde g^{u}(W)$, and the pseudo-outcome
$\frac{\tilde\phi_{1,u}(W)-\tilde\phi_{0,u}(W)}
        {\mathbb{E}\!\left[\tilde g^{u}(W)\mid X\right]}$.
We define the EIF plug-in limit parameter as
\[
\tilde{\tau}^{u}_{\mathrm{eif}}(x)
:= \mathbb{E}\!\left[
\frac{\tilde{\phi}_{1,u}(W) - \tilde{\phi}_{0,u}(W)}
     {\mathbb{E}(\tilde g^{u}(W)\mid X)}
\, \Bigg|\, X=x\right], \quad \text{for all } x.  
\]
As in the subset setting, we treat $\tilde\pi(\cdot)$, $\tilde p_z(\cdot)$, and $\tilde\mu_{zs}(\cdot)$ as fixed plug-in functions (e.g., learned on an independent sample). The next theorem characterizes the bias of $\tilde{\tau}^{u}_{\mathrm{eif}}(x)$.

\begin{theorem}\label{robustness_eif}
Under Assumption \ref{assump:con}--\ref{assump:ps}, the EIF plug-in limit parameter has the following robustness properties:

\begin{itemize}
    \item \textbf{(Weak multiple robustness).}  
 Suppose, for all $x$, $\tilde\pi(x)=\pi(x)$
and $\tilde p_z(x)=p_z(x)$,   
or $\tilde\mu_{zs}(x)=\mu_{zs}(x)$, 
then, for each $u\in\{00,10,11\}$, $\tilde\tau^{u}_{\mathrm{eif}}(x)=\tau^{u}(x)$.

    \item \textbf{(Rate multiple robustness).}  
   Suppose, for all $x$, an overlap condition holds: there exists $\epsilon>0$ such that, almost surely, $\epsilon \le \mathbb{E}\bigl[\tilde g^{u}(W) \mid X =x\bigr]$, $
\epsilon \le \tilde\pi(x) \le 1-\epsilon$, $\epsilon \le \tilde p_1(x)$, $
\tilde p_0(x) \le 1-\epsilon$, $\tilde p_1(x) - \tilde p_0(x) \ge \epsilon.$
    Then, for each $u\in\{00,10,11\}$, $\tilde{\tau}_{\mathrm{eif}}^{u}(x)-\tau^{u}(x)$ equals
    \[
O_p\!\left(
\Bigl(|\tilde\pi-\pi|
+\max_{z\in\{0,1\}}|\tilde p_{z}-p_{z}|\Bigr)
\cdot
 \max_{s,z \in \{0,1\}}|\tilde\mu_{zs}-\mu_{zs}|
\right).
\]
\end{itemize}
\end{theorem}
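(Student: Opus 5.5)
The plan is to compute the two conditional expectations $N^u(x):=\mathbb{E}[\tilde\phi_{1,u}(W)-\tilde\phi_{0,u}(W)\mid X=x]$ and $D^u(x):=\mathbb{E}[\tilde g^u(W)\mid X=x]$ in closed form. The plug-in limit parameter is then $\tilde\tau^u_{\mathrm{eif}}(x)=N^u(x)/D^u(x)$, because the denominator $\mathbb{E}[\tilde g^u(W)\mid X]$ is a function of $X$ alone and so comes outside the conditional expectation. This gives the identity
\[
\tilde\tau^u_{\mathrm{eif}}(x)-\tau^u(x)=\frac{N^u(x)-\tau^u(x)D^u(x)}{D^u(x)},
\]
and both parts of Theorem~\ref{robustness_eif} will follow once we show that the numerator $N^u-\tau^u D^u$ is a sum of products of a $(\pi,p)$-error and a $\mu$-error. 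The denominator is handled by the overlap assumption $D^u\ge\epsilon$.

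\textbf{Step 1: the basic building block.} Let $\bar f_a(x):=\mathbb{E}[f\mid X=x,Z=a]$, and let $\tilde f_a$ be its plug-in version. For example, $\widetilde{YS}_1=\tilde p_1\tilde\mu_{11}$, $\widetilde{S}_a=\tilde p_a$, and $\widetilde{Y(1-S)}_0=(1-\tilde p_0)\tilde\mu_{00}$. Set $r_1=\pi/\tilde\pi$ and $r_0=(1-\pi)/(1-\tilde\pi)$. By Assumptions~\ref{assump:con}--\ref{assump:pi} we have
\[
\mathbb{E}[\tilde\psi_{a,f}(W)\mid X]=\tilde f_a+r_a(\bar f_a-\tilde f_a).
\]
Applied to $f=S$ and $f=1-S$, this yields $D^u$ directly. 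For instance,
\[
D^{10}=\tilde p_1-\tilde p_0+r_1(p_1-\tilde p_1)-r_0(p_0-\tilde p_0),
\]
which equals $e^{10}$ whenever $\tilde\pi=\pi$ and $\tilde p_z=p_z$.

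\textbf{Step 2: the numerator, shown for $u=10$ and the treated arm.} Substituting into $\tilde\phi_{1,10}$, I will write $\bar{YS}_1=p_1\mu_{11}$ as $p_1\tilde\mu_{11}+p_1(\mu_{11}-\tilde\mu_{11})$. The $\tilde\mu_{11}$ parts then telescope:
\[
\tilde\mu_{11}\Bigl[\tilde e^{10}+\tilde e^{10}r_1\tfrac{p_1-\tilde p_1}{\tilde p_1}-\tilde p_0-r_0(p_0-\tilde p_0)+\tilde p_0+\tilde p_0r_1\tfrac{p_1-\tilde p_1}{\tilde p_1}\Bigr]=\tilde\mu_{11}D^{10}.
\]
The leftover is $(\tilde e^{10}/\tilde p_1)\,r_1p_1(\mu_{11}-\tilde\mu_{11})$. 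Hence
\[
\mathbb{E}[\tilde\phi_{1,10}\mid X]=\mu_{11}D^{10}+(\mu_{11}-\tilde\mu_{11})\Bigl[\tfrac{\tilde e^{10}}{\tilde p_1}r_1p_1-D^{10}\Bigr].
\]
The bracket vanishes when $\tilde\pi=\pi$ and $\tilde p_z=p_z$, since then $D^{10}=e^{10}$. Expanding it, the bracket is $O(|\tilde\pi-\pi|+\max_z|\tilde p_z-p_z|)$, using the bounds $\tilde\pi\in[\epsilon,1-\epsilon]$ and $\tilde p_1\ge\epsilon$. The same computation for $\tilde\phi_{0,10}$, with $\mu_{00}$, $1-S$ and $1-\tilde p_0\ge\epsilon$ in place of $\mu_{11}$, $S$ and $\tilde p_1$, gives $\mu_{00}D^{10}$ plus an analogous product remainder. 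By Theorem~\ref{the1}, $\tau^{10}=\mu_{11}-\mu_{00}$, so $N^{10}-\tau^{10}D^{10}$ is exactly the sum of the two product remainders. Dividing by $D^{10}\ge\epsilon$ gives the rate statement. The weak statement also follows: either the bracket is zero (correct $\pi,p$) or the $\mu$-error is zero (correct $\mu$).

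\textbf{Step 3: the other strata.} For $u=11$, the term $\tilde\phi_{0,11}=\tilde\psi_{0,YS}$ has conditional mean $\tilde p_0\tilde\mu_{01}+r_0(p_0\mu_{01}-\tilde p_0\tilde\mu_{01})$. I will rewrite this as $\mu_{01}D^{11}+(\tilde\mu_{01}-\mu_{01})\tilde p_0(1-r_0)$ plus a term involving only $(\pi,p)$-errors times $\mu_{01}$, which must cancel against $\mu_{01}D^{11}$; I will check this carefully. Here $D^{11}=\tilde p_0+r_0(p_0-\tilde p_0)$. The $\tilde\phi_{1,11}$ term is handled as in Step 2, now with $\tilde e^{11}=\tilde p_0$ and a plus sign, and it gives $\mu_{11}D^{11}$ plus a product remainder. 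Stratum $u=00$ is symmetric.

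\textbf{Main obstacle.} The main obstacle is the bookkeeping in Steps 2--3. One must verify that every term not multiplied by a $\mu$-error assembles exactly into $\mu_{zs}D^u$, with no stray first-order $(\pi,p)$-error terms. This exact telescoping is what upgrades the known triple robustness to the stated multiple robustness. I would do $u=10$ in full and then obtain the other two strata via the substitutions $S\leftrightarrow1-S$ and $Z\leftrightarrow1-Z$.
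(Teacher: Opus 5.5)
Your proposal is correct. It shares the paper's top-level identity $\tilde\tau^u_{\mathrm{eif}}-\tau^u=\{N^u-\tau^uD^u\}/D^u$; the paper writes this as numerator bias minus $\tau^u$ times denominator bias, using $\mathbb{E}[\phi_{1,u}-\phi_{0,u}\mid X]=\tau^u e^u$. Where you differ is in how the algebra is organized.

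\textbf{How the paper argues.} For the weak part, it does two separate case computations (for $u=00$). For the rate part, it adds and subtracts to expand $N^u-\tau^uD^u$ into explicit second-order products $e_\pi e_{\mu_{zs}}$ and $e_{p_z}e_{\mu_{zs}}$, with coefficients in $\tilde\pi,\tilde p_z$. The $e_\pi e_{p_1}$ terms cancel between numerator and denominator. A third-order remainder $R_3^u$ is left over and then absorbed.

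\textbf{How you argue.} You observe that, arm by arm, the coefficient of $\tilde\mu_{zs}$ in $\mathbb{E}[\tilde\phi_{a,u}\mid X]$ is exactly the plug-in denominator $D^u$. Hence $\mathbb{E}[\tilde\phi_{a,u}\mid X]=\mu_{zs}D^u+(\mu_{zs}-\tilde\mu_{zs})B^u_a$, where the bracket $B^u_a$ vanishes at $(\tilde\pi,\tilde p)=(\pi,p)$. This identity holds exactly in every arm. For $\tilde\phi_{0,11}$ there is no leftover $(\pi,p)$-only term to cancel: $\mathbb{E}[\tilde\psi_{0,YS}\mid X]=\mu_{01}D^{11}+(\tilde\mu_{01}-\mu_{01})\tilde p_0(1-r_0)$ identically. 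The brackets are linear in $e_\pi,e_{p_1},e_{p_0}$ with coefficients bounded under the stated overlap. For example, the treated-arm bracket for $u=10$ is $-\tilde e^{10}e_\pi/\tilde\pi+r_1(\tilde p_0/\tilde p_1)e_{p_1}-r_0e_{p_0}$.

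\textbf{What each route buys.} Yours gives an exact product form with no higher-order remainder to dismiss. Both the weak and rate statements drop out of one identity. It also explains structurally why the paper's case (ii) proportionality $\mathbb{E}[\tilde\psi_{1,Y(1-S)}\mid X]=\mu_{10}\,\mathbb{E}[\tilde\psi_{1,1-S}\mid X]$ holds. The paper's expansion, in return, shows explicitly which error pairs enter and with what weights. That explicit form is reused later in the one-step bias expansion and in the smoothed EIF decomposition. You can recover it from your brackets by substituting $r_1=1-e_\pi/\tilde\pi$ and $r_0=1+e_\pi/(1-\tilde\pi)$.

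As in the paper, the weak statement implicitly needs $D^u\neq 0$ so that the ratio is defined.
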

Theorem \ref{robustness_eif} shows  a type of  multiple robustness result for the EIF plug-in limit parameter. Even though three nuisances appear, this is not ``triple robustness'' in the usual sense of ``any two of three'' sufficing for correctness. Instead, the EIF plug-in limit parameter is robust under two paths to validity: one of the paths requires simultaneous correctness of both $\pi(x)$ and $p_z(x)$, whereas the other allows $\mu_{zs}(x)$ alone to be correct.  This robustness structure mirrors that of the subset plug-in limit parameter, since the subset propensity score can be expressed in terms of   $\pi(x)$ and $p_z(x)$. However, the EIF plug-in limit parameter is weaker than subset , as it cannot rely on correct specification of the subset propensity score as a single nuisance function.

While the EIF-based estimator is theoretically appealing, we have found that it can be fragile in practice.
Unlike the subset-identified approach, it requires three nonparametric regressions: one for the nuisance functions, 
one for the pseudo-outcome $\frac{\phi_{1,u}(W)-\phi_{0,u}(W)}
     {\mathbb{E}\!\left(g^{u}(W)\mid X\right)}$ and another for the denominator pseudo-outcome $g^{u}(W)$.
The resulting ratio structure makes the estimator highly sensitive to errors in the
estimated denominator. In finite samples, even small inaccuracies in
estimated denominator can be magnified after inversion, leading to
instability and inflated variance, especially under weak overlap, where
$\mathbb{E}(g^{u}(W)\mid X)$ may be close to zero. Moreover,
the regression used to estimate the denominator typically needs to be trained on data
independent of the data used to form the pseudo-outcomes (e.g., via sample
splitting), which effectively reduces the usable sample size per fold.
Finally, to avoid instability from estimating $\mathbb{E}(g^{u}(W)\mid X)$, one could
substitute a plug-in nuisance-based quantity $\tilde e^{u}(X)$ in the denominator.
This seemingly benign change alters the estimating equation: the key cancellation
(equivalently, Neyman orthogonality) that removes first-order bias under partial
nuisance misspecification is no longer guaranteed. Consequently, the resulting
estimator may lose the robustness properties enjoyed by the EIF-based estimator.

In light of these drawbacks, it is natural to ask whether alternative estimators can be developed to address these issues. Motivated by the classical technique of one-step estimation, we propose to introduce a preliminary estimator that aims to combine the robustness benefits of EIF-based approaches with improved stability and reliability in finite samples.

\subsection{{One-step identification}}

We next consider a one-step estimator obtained by updating a preliminary estimate $\check{\tau}^u(x)$ with the EIF residual computed at $\check{\tau}^u(x)$
\begin{theorem}\label{one-step}
Under Assumptions \ref{assump:con}–\ref{assump:ps}, CPCEs can be identified from a one-step perspective. 
Let $\check{\tau}^{u}(X)$ denote any preliminary estimator of $\tau^u(X)$. Then,
\[
\tau^{u}(x) \;=\; \mathbb{E}\!\left(\check{\tau}^{u}(X)+\frac{\phi_{1,u}(W)-\phi_{0,u}(W)-\check{\tau}^{u}(X)\,g^{u}(W)}{e^u(X)}\mid X =x\right), \quad \text{for all }x. 
\]
\end{theorem}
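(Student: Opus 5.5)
The plan is to reduce the one-step representation to two conditional-moment identities that follow from Theorem~\ref{the:eif} and its accompanying remark:
\[
\mathbb{E}\bigl[g^u(W)\mid X=x\bigr]=e^u(x), \qquad \mathbb{E}\bigl[\phi_{1,u}(W)-\phi_{0,u}(W)\mid X=x\bigr]=e^u(x)\,\tau^u(x).
\]
Since $\check\tau^u(X)$ and $e^u(X)$ are functions of $X$ alone, they can be pulled out of the conditional expectation. The right-hand side then equals
\[
\check\tau^u(x)+\frac{e^u(x)\tau^u(x)-\check\tau^u(x)\,e^u(x)}{e^u(x)}=\tau^u(x),
\]
whatever the preliminary estimator is. Throughout I take $e^u(x)>0$, which the statement implicitly requires for the ratio to be defined. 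The work is therefore in verifying the two moment identities.

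\textbf{First identity.} I use the basic property of $\psi_{a,f}$. Under Assumption~\ref{assump:pi}, write $m_a(x)=\mathbb{E}[f\mid X=x,Z=a]$. Then
\[
\mathbb{E}\bigl[\psi_{a,f}(W)\mid X=x\bigr]=\mathbb{E}\!\left[\frac{\mathbf 1(Z=a)}{\mathbb{P}(Z=a\mid X)}\{f-m_a(X)\}\,\Big|\,X=x\right]+m_a(x)=m_a(x).
\]
This follows by conditioning on $Z$ inside: the weighted residual has mean zero. Applying it:
\begin{itemize}
\item $\mathbb{E}[\psi_{1,S}\mid X]=p_1$ and $\mathbb{E}[\psi_{0,S}\mid X]=p_0$, so $\mathbb{E}[g^{10}\mid X]=p_1-p_0=e^{10}$.
\item $\mathbb{E}[g^{00}\mid X]=1-p_1=e^{00}$.
\item $\mathbb{E}[g^{11}\mid X]=p_0=e^{11}$.
\end{itemize}
Each matches the principal-score identification under Assumptions~\ref{assump:pi}--\ref{assump:monotonicity}.

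\textbf{Second identity.} In each $\phi_{a,u}$ the coefficients multiplying $\psi$ terms are functions of $X$, so the same device gives:
\begin{itemize}
\item $\mathbb{E}[\psi_{1,YS}\mid X]=p_1\mu_{11}$ and $\mathbb{E}[\psi_{0,Y(1-S)}\mid X]=(1-p_0)\mu_{00}$.
\item The bracketed correction terms have conditional mean $p_0-\frac{p_0}{p_1}p_1=0$ and $(1-p_1)-\frac{1-p_1}{1-p_0}(1-p_0)=0$.
\end{itemize}
Hence:
\begin{itemize}
\item $\mathbb{E}[\phi_{1,10}\mid X]=e^{10}\mu_{11}$ and $\mathbb{E}[\phi_{0,10}\mid X]=e^{10}\mu_{00}$.
\item $\mathbb{E}[\phi_{1,11}\mid X]=e^{11}\mu_{11}$ and $\mathbb{E}[\phi_{0,11}\mid X]=p_0\mu_{01}=e^{11}\mu_{01}$.
\item $\mathbb{E}[\phi_{1,00}\mid X]=(1-p_1)\mu_{10}=e^{00}\mu_{10}$ and $\mathbb{E}[\phi_{0,00}\mid X]=e^{00}\mu_{00}$.
\end{itemize}
Taking differences and invoking Theorem~\ref{the1} yields $e^u(x)\tau^u(x)$ for each $u\in\{00,10,11\}$. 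Equivalently, this is exactly the content of Theorem~\ref{the:eif}, since there the denominator $\mathbb{E}(g^u\mid X)=e^u(X)$ can be pulled out of the conditional expectation.

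\textbf{Main obstacle.} The only delicate point is bookkeeping in $\psi_{a,f}$. The definition writes $\mathbb{E}[f\mid X,Z=z]$, and this must be read with $z=a$ for the residual to have conditional mean zero. I will state this reading explicitly. I will also check the sign conventions in the bracketed correction terms of $\phi_{1,11}$ and $\phi_{0,00}$, although these vanish in conditional mean regardless of sign. Nothing here requires the preliminary estimator to be consistent: the correction term exactly cancels $\check\tau^u$ in conditional expectation.
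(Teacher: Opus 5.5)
Your proof is correct and is essentially the paper's argument. You pull the $X$-measurable quantities $\check\tau^u(X)$ and $e^u(X)$ out of the conditional expectation, use $\mathbb{E}[g^u(W)\mid X=x]=e^u(x)$ to cancel the preliminary estimator, and invoke the EIF identification $\mathbb{E}[\phi_{1,u}(W)-\phi_{0,u}(W)\mid X=x]=e^u(x)\tau^u(x)$ from Theorem~\ref{the:eif}; the only difference is that you re-verify these moment identities for all three strata, reading $\psi_{a,f}$ correctly with $z=a$, whereas the paper cites them and works through $u=10$ only.
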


Theorem \ref{one-step} provides a one-step identification formula for the CPCEs.
Intuitively, the one-step functional ``repairs'' a preliminary plug-in estimator
by adding an augmentation term that projects the remaining discrepancy onto
functions of $X$ in a manner that is orthogonal to first-order nuisance
estimation errors. This construction avoids directly
smoothing unstable ratios, improves finite-sample stability, and, as we show
below, inherits the same weak and rate multiple-robustness properties as the
EIF plug-in limit parameter for suitable choices of the preliminary estimator. In
practice, the preliminary estimator can be any reasonable estimator,
including a T-learner, the subset-based estimator, or the EIF-based estimator
discussed above.

Among these estimators, the T-learner is especially attractive: it is straightforward to implement, relies only on the outcome regression functions that have already been estimated, and therefore offers computational savings. Moreover, as we show later, 
when using T-learner as the preliminary estimator, 
the one-step limit parameter inherits the same robustness structure as the EIF plug-in limit parameter. The following theorem establishes the robustness properties of the one-step plug-in limit parameter.

Define the plug-in pseudo-outcome
$\tilde{\zeta}_{\tau^u}(W)
:= \check{\tau}^{u}(X)
+ \frac{\tilde \phi_{1,u}(W)-\tilde \phi_{0,u}(W)-\check \tau^{u}(X)\tilde g^{u}(W)}
{\tilde e^{u}(X)} .$
Then the one-step plug-in limit parameter is defined by
\[
\tilde{\tau}^{u}_{\mathrm{one}}(x)
:= \mathbb{E}\!\left[
  \tilde{\zeta}_{\tau^u}(W)
  \,\Bigg|\, X = x
\right], \quad \text{for all } x. 
\]
Here, the quantities $\tilde{\phi}_{z,u}(w)$ and $\tilde g^{u}(w)$ are defined as in the
EIF population function above. The term $\tilde e^{u}(x)$ denotes the estimated
probability of belonging to principal stratum $u$; for example, for
$u=00$ we have $\tilde e^{00}(x) = 1 - \tilde p_{1}(x)$.

\begin{theorem}\label{robustness_one}
Under Assumption \ref{assump:con}--\ref{assump:ps}, the one-step plug-in limit parameter has the following robustness properties:

\begin{itemize}
  \item \textbf{(Weak multiple robustness).}  
Suppose that, for all $x$, at least one of the following holds:
(a) $\tilde\pi(x)=\pi(x)$ and $\tilde p_z(x)=p_z(x)$;
(b) $\tilde\mu_{zs}(x)=\mu_{zs}(x)$ and $\tilde p_z(x)=p_z(x)$;
(c) $\tilde\mu_{zs}(x)=\mu_{zs}(x)$ and $\check\tau^u(x)=\tau^u(x)$.
Then, for each $u\in\{00,10,11\}$, $\tilde\tau^u_{\mathrm{one}}(x)=\tau^u(x)$.

  \item \textbf{(Rate multiple robustness).}  
  Suppose, for all $x$, an overlap condition holds: there exists $\epsilon > 0$ such that,
  almost surely, $  \epsilon \le \tilde\pi(x) \le 1-\epsilon,\quad
    \epsilon \le \tilde p_1(x),\,\tilde p_0(x) \le 1-\epsilon,\quad
    \tilde p_1(x) - \tilde p_0(x) \ge \epsilon$.
  Then, for each $u\in\{00,10,11\}$, $\tilde{\tau}_{\mathrm{one}}^{u}(x)-\tau^{u}(x)$ equals
\begin{equation}\label{eqonebias}
O_p\!\Biggl(
\Bigl(|\tilde\pi-\pi|
+\max_{z\in\{0,1\}}|\tilde p_{z}-p_{z}|\Bigr)
\cdot
\max_{s,z \in \{0,1\}}|\tilde\mu_{zs}-\mu_{zs}|
+\;|\check \tau^u - \tau^u|\,\cdot \max_{z\in\{0,1\}}|\tilde p_{z}-p_{z}|
\Biggr).
\end{equation}
\end{itemize}
\end{theorem}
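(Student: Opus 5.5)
The plan is to compute $\tilde\tau^u_{\mathrm{one}}(x)$ exactly by iterated expectation, conditioning on $X=x$. The nuisances $\tilde\eta$ and $\check\tau^u$ are fixed functions, so only the random parts of $\tilde\phi_{z,u}$ and $\tilde g^u$ need averaging. Write
\[
A(x):=\mathbb{E}[\tilde\phi_{1,u}(W)-\tilde\phi_{0,u}(W)\mid X=x]
\quad\text{and}\quad
G(x):=\mathbb{E}[\tilde g^u(W)\mid X=x].
\]
Simple algebra then gives the key decomposition
\[
\tilde\tau^u_{\mathrm{one}}(x)-\tau^u(x)=\frac{\{A(x)-\tau^u(x)G(x)\}+\{\check\tau^u(x)-\tau^u(x)\}\{\tilde e^u(x)-G(x)\}}{\tilde e^u(x)}.
\]
Every statement in the theorem should follow from this identity. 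The overlap condition bounds $1/\tilde e^u(x)$ by a constant: for each $u$, $\tilde e^u$ is one of $1-\tilde p_1$, $\tilde p_1-\tilde p_0$ or $\tilde p_0$, and each is bounded below by $\epsilon$.

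The basic tool is the conditional mean of a plugged-in $\tilde\psi_{a,f}$. By Assumptions~\ref{assump:con}--\ref{assump:pi},
\[
\mathbb{E}[\tilde\psi_{a,f}\mid X]=m_{a,f}+\Bigl(\frac{\pi_a}{\tilde\pi_a}-1\Bigr)(m_{a,f}-\tilde m_{a,f}),
\]
where $\pi_1=\pi$, $\pi_0=1-\pi$, and $m_{a,f}$ is the true conditional mean of $f$ given $X$ and $Z=a$. For $f=S$ or $1-S$, the quantity $m_{a,f}$ is $p_a$ or $1-p_a$. For $f=YS$ or $Y(1-S)$ it is $p_a\mu_{a1}$ or $(1-p_a)\mu_{a0}$, so its error splits into a $p$-error and a $\mu$-error. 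Applying this to $g^u$ gives
\[
G-e^u=O\bigl(|\tilde\pi-\pi|\max_z|\tilde p_z-p_z|\bigr),
\qquad
\tilde e^u-G=(\tilde e^u-e^u)-(G-e^u)=O\bigl(\max_z|\tilde p_z-p_z|\bigr).
\]
Hence the second piece of the decomposition is $O(|\check\tau^u-\tau^u|\max_z|\tilde p_z-p_z|)$, which is the extra term in \eqref{eqonebias}.

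The main obstacle is the first piece, $A-\tau^uG$. It is not enough to quote the numerator bound from the proof of Theorem~\ref{robustness_eif} for $A-\tau^u e^u$ and then add $\tau^u(G-e^u)$. That extra term is a $\pi\times p$ product, which is not in the bound and does not vanish in case (c). So I will expand $A-\tau^uG$ directly, one stratum at a time, using Theorem~\ref{the1} to write $\tau^u$ as a difference of $\mu$'s (for example $\tau^{10}=\mu_{11}-\mu_{00}$).

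For $u=10$, the contribution of $\tilde\mu_{11}$ has coefficient $\tilde e^{10}/\tilde p_1$ on $\psi_{1,YS}$. Together with $\tilde\mu_{11}$ times the $\psi_S$ bracket, it should combine with $-\mu_{11}G$ into terms of the form $(\pi_a/\tilde\pi_a-1)(\tilde p_a-p_a)(\tilde\mu_{11}-\mu_{11})$. A leftover $(\tilde\mu_{11}-\mu_{11})\times(\text{$p$-error})$ remains, which is still covered by the $\max_z|\tilde p_z-p_z|\cdot\max|\tilde\mu-\mu|$ part of \eqref{eqonebias}; the $\mu_{00}$ part is handled symmetrically. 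The aim is to show that every surviving term carries a factor $(\tilde\mu_{zs}-\mu_{zs})$ multiplied by either a $\pi$-error or a $p$-error. This yields $A-\tau^uG=O\bigl((|\tilde\pi-\pi|+\max_z|\tilde p_z-p_z|)\max_{z,s}|\tilde\mu_{zs}-\mu_{zs}|\bigr)$, with boundedness of $Y$'s conditional moments and the overlap condition absorbing the ratios. The strata $u=00$ and $u=11$ are simpler, since one of $\phi_{z,u}$ is a single $\psi$; I will check those cancellations the same way.

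Weak multiple robustness then follows by reading off the exact expression rather than the rates.
\begin{itemize}
\item[(a)] $\tilde\pi=\pi$ and $\tilde p=p$: then $G=e^u$. The first piece vanishes because $A=\tau^u e^u$ (the $\psi$'s are unbiased), and the second piece equals $(\check\tau^u-\tau^u)(\tilde e^u-e^u)=0$.
\item[(b)] $\tilde\mu=\mu$ and $\tilde p=p$: the first piece vanishes because every term carries a $\mu$-error, and $\tilde e^u=e^u=G$ kills the second piece.
\item[(c)] $\tilde\mu=\mu$ and $\check\tau^u=\tau^u$: the first piece is zero by the direct expansion above (which is exactly why that step is needed), and the second piece is zero because $\check\tau^u=\tau^u$.
\end{itemize}
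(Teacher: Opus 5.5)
Your proof is correct and follows essentially the paper's route. Your identity $\tilde\tau^u_{\mathrm{one}}-\tau^u=\{(A-\tau^uG)+(\check\tau^u-\tau^u)(\tilde e^u-G)\}/\tilde e^u$ is an algebraic regrouping of the paper's terms (I)+(II), and $A-\tau^uG$ is exactly the quantity $\mathbb{E}[\tilde g^u\mid X=x]\,(\tilde\tau^u_{\mathrm{eif}}-\tau^u)$ that the paper expands in its EIF proof, using the same $e_\pi e_{p}$ cancellation you flag. The only difference is cosmetic: your grouping absorbs the paper's third-order term $-(\check\tau^u-\tau^u)(G-e^u)/\tilde e^u$ (part of $H^u_3$) into the second piece, so the $|\check\tau^u-\tau^u|\max_z|\tilde p_z-p_z|$ term appears directly.
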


{Theorem~\ref{robustness_one} establishes a multiple robustness property for the one-step plug-in limit parameter. Although the conditions may appear weaker at first glance, the result is in fact comparable to the robustness guarantees for the other plug-in limit parameters.
From \eqref{eqonebias}, it is important to note that if $\check \tau^u(x)$ is consistent, then the last term vanishes. In this case, the one-step plug-in limit parameter attains the same robustness structure as the EIF plug-in limit parameter. More interestingly, when $\check{\tau}^u(x)$ is chosen to be the T-learner, the last term can be expressed as a product involving the outcome regression error and the principal score error; consequently, the one-step plug-in limit parameter again attains the same robustness structure as the EIF plug-in limit parameter.}

So far, we have focused on identification of the CPCEs and properties of the
corresponding plug-in limit parameters. In the next section, we turn to
estimation: we describe how to implement the proposed estimators in practice
using flexible regression methods, and we study their statistical properties
when the conditional expectations are estimated via smoothing
regression.

\section{Estimation and error bounds under smoothness}\label{s:4}
In this section, we describe the subset, EIF, and one-step estimators and derive error bounds for the linear-smoother regression step under standard smoothness assumptions.

\subsection{Estimation Procedure}
Two-stage estimation procedures have been shown to yield robust performance and
have become a standard approach for estimating the CATE
\citep{kennedy2024semiparametric,foster2023orthogonal,knaus2022double}.
As discussed above, both the subset estimator and the one-step estimator can be
implemented within an analogous two-stage framework, requiring two sets of
regressions. In the first stage, we estimate the necessary nuisance functions,
including the propensity score, principal scores, and stratum-specific outcome
regressions. In the second stage, we construct a plug-in pseudo-outcome and regress it on the covariates $X$ to obtain an estimate of the
CPCE. We propose this two-stage strategy for the subset and one-step estimators. 
As emphasized by \citet{chernozhukov2018double}, sample splitting is a crucial component of two-stage estimation, as it prevents dependence between errors from first-stage nuisance estimation and the second-stage regression for the target. To improve the efficiency of sample splitting, cross-fitting is commonly used.

In contrast, the EIF-based estimator typically requires an additional regression,
leading to a three-stage procedure. In particular, beyond estimating the
nuisance functions and regressing the pseudo-outcome on $X$, one must
also estimate the conditional expectation of the denominator component,
$\mathbb{E}(\tilde g^{u}(W) \mid X)$. 
Hence, implementation the EIF-based estimator may require  a three-way
sample split: one subsample to estimate the nuisance functions, a second to
estimate the conditional expectation of the denominator component
$\tilde g^{u}$, and a third to regress the resulting
pseudo-outcome on $X$. In contrast to the two-stage setting, extending this
three-way splitting scheme to a convenient $K$-fold cross-fitting procedure is
less straightforward and efficiency is lost due to its inability to use most of the data for any of the three stages. Plot \ref{fig:procedure} in the Supplementary \ref{s:procedure} provides further detail and intuition.

For all multi-stage estimators, nuisance-function estimation and the regressions of the pseudo-outcome (and, when applicable, the denominator) on $X$ are highly flexible.
In particular, modern machine learning methods, such as random forests \citep{breiman2001random}, gradient boosting \citep{friedman2001greedy}, lasso regression \citep{tibshirani1996regression} or other nonparametric learners may be readily applied to this stage. The use of such flexible estimators allows one to capture complex, high-dimensional relationships without restrictive parametric assumptions.

\subsection{Background on stability of smoothing operators}

As introduced in the estimation procedure, the regression step is highly flexible, yielding model-free guarantees that make the approach broadly applicable in practice. We will use a general framework developed by \cite{kennedy2023towards} for analyzing the error of two-stage regression estimators. For simplicity, and following \citet{kennedy2023towards}, we present our theoretical results under equal sample splitting. 
Specifically, for the subset and one-step estimators we assume a sample of size $2n$, with $n$ observations used for nuisance estimation and $n$ observations used to fit the second-stage regression operator, denoted $\widehat{\mathbb{E}}_n(\cdot\mid X=x)$. 
When the second-stage regression is restricted to a subset $\mathcal S_u$ we  denote the operator by $\widehat{\mathbb{E}}_n(\cdot\mid \mathcal S_u, X=x)$. 
For the EIF estimator, we assume a sample of size $3n$, consisting of three independent subsamples of size $n$ for nuisance estimation, denominator regression, and second-stage regression, respectively. 
We later comment on extensions to unequal splitting; these require only minor notational changes and do not affect the conclusions.

Under equal sample splitting, \citet{kennedy2023towards} introduce a notion of stability for a broad class of estimators, including generic linear smoothers; we adapt their framework to our setting.
For completeness, we present the formal definitions of stability and the associated distance metric below.

\begin{definition}[Stability; \citet{kennedy2023towards}]\label{def:stability}
Let $D_1^n$ (training) and $D_2^n$ (estimation) be independent samples of size $n$.
Let $\widetilde f(\cdot)=\widetilde f(\cdot;D_1^n)$ be an estimator
of a function $f:\mathcal W\to\mathbb R$ trained on $D_1^n$. 
Define the plug-in bias function $\widetilde b(x)
:=\mathbb E\!\left[\widetilde f(W)-f(W)\mid D_1^n,\;X=x\right].$
Let $\widehat{\mathbb E}_n(\,\cdot\mid X=x)$ denote a generic second-stage regression operator
fit on the estimation sample $D_2^n$, mapping responses $\{R_i\}_{i=1}^n$ to a prediction at $X=x$. (In our applications, $R_i$ will be a pseudo-outcome
such as $\varphi_{\tau^u}(W_i)$ or $\zeta_{\tau^u}(W_i)$, possibly restricted to a subset of observations.)
For a chosen distance metric $d(\cdot,\cdot)$ on functions, we say that $\widehat{\mathbb E}_n$ is \emph{stable at $X=x$ with respect to a distance $d$ on functions $f(W)$}
if, whenever $d(\widetilde f,f)\xrightarrow{p}0$, it holds that
\[
\frac{
\widehat{\mathbb E}_n\{\widetilde f(W)\mid X=x\}
-
\widehat{\mathbb E}_n\{f(W)\mid X=x\}
-
\widehat{\mathbb E}_n\{\widetilde b(X)\mid X=x\}
}{
\sqrt{
\mathbb E\!\left[
\Bigl\{
\widehat{\mathbb E}_n\{f(W)\mid X=x\}
-
\mathbb E\{f(W)\mid X=x\}
\Bigr\}^2
\right]
}
}
\;\xrightarrow{p}\;0 .
\]
\end{definition}

\begin{theorem}[Stability of linear smoothers; \citet{kennedy2023towards}]\label{thm:linear_stable}
Suppose the second-stage regression operator is a \emph{linear smoother}, i.e.,
for any responses $\{R_i\}_{i=1}^n$ and covariates $\{X_i\}_{i=1}^n$ from $D_2^n$, $\widehat{\mathbb E}_n(R\mid X=x)
=
\sum_{i=1}^n w_i(x;X^n)\,R_i $, $
X^n:=(X_1,\ldots,X_n).$
Then $\widehat{\mathbb E}_n$ is stable in the sense of Definition~\ref{def:stability}
with respect to the distance $d(\widetilde f,f)=\|\widehat f-f\|_{w^2}$, where
\[
\|\widetilde f-f\|_{w^2}^2
:=
\sum_{i=1}^n
\left\{
\frac{w_i(x;X^n)^2}{\sum_{j=1}^n w_j(x;X^n)^2}
\right\}
\,
\mathbb E\!\left[
\{\widetilde f(W)-f(W)\}^2 \mid X=X_i
\right].
\]

\end{theorem}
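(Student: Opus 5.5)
Write $\widetilde f$ and $f$ for the estimated and true pseudo-outcome functions, and abbreviate $w_i:=w_i(x;X^n)$. The plan is to exploit linearity of the smoother so that the numerator in Definition~\ref{def:stability} becomes a weighted sum of conditionally mean-zero, independent terms. Its second moment can then be computed exactly. Conditioning on the training sample $D_1^n$ and on the covariates $X^n$ of $D_2^n$, linearity gives
\[
\widehat{\mathbb E}_n\{\widetilde f(W)\mid X=x\}-\widehat{\mathbb E}_n\{f(W)\mid X=x\}-\widehat{\mathbb E}_n\{\widetilde b(X)\mid X=x\}
=\sum_{i=1}^n w_i\,\Delta_i ,
\]
where $\Delta_i:=\widetilde f(W_i)-f(W_i)-\widetilde b(X_i)$. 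Because $D_1^n$ is independent of $D_2^n$ and $\widetilde b$ is defined as the conditional mean of $\widetilde f-f$, each $\Delta_i$ has conditional mean zero given $(D_1^n,X^n)$. The $\Delta_i$ are also conditionally independent across $i$, since the $W_i$ are i.i.d.\ and $\widetilde f$ is fixed given $D_1^n$.

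Next, compute the conditional variance. It equals
\[
\sum_i w_i^2\,\mathrm{Var}\{\widetilde f(W)-f(W)\mid X=X_i,D_1^n\}
\le \sum_i w_i^2\,\mathbb E[\{\widetilde f(W)-f(W)\}^2\mid X=X_i]
=\Bigl(\sum_j w_j^2\Bigr)\|\widetilde f-f\|_{w^2}^2 .
\]
Conditional Chebyshev then shows that the numerator is $O_p\bigl(\|\widetilde f-f\|_{w^2}\sqrt{\sum_j w_j^2}\bigr)$.

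The remaining step, which I expect to be the main obstacle, is to show that $\sqrt{\sum_j w_j^2}$ is at most of the order of the denominator, the root mean squared error of the oracle smoother. For this I would lower-bound the oracle risk by its variance part. Conditional on $X^n$, the oracle estimator $\sum_i w_i f(W_i)$ has variance $\sum_i w_i^2\,\mathrm{Var}\{f(W)\mid X=X_i\}$. Assuming the conditional variance of $f(W)$ is bounded below by some $c>0$, this gives
\[
\mathbb E\bigl[\{\widehat{\mathbb E}_n(f\mid x)-\mathbb E(f\mid x)\}^2\bigr]\ge c\,\mathbb E\Bigl[\sum_i w_i^2\Bigr].
\]
This bound involves the unconditional expectation of $\sum_i w_i^2$, whereas the numerator bound involves the random quantity itself. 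So one also needs $\sum_i w_i^2=O_p(\mathbb E\sum_i w_i^2)$, which follows from Markov's inequality. Dividing the two bounds, the ratio is $O_p(\|\widetilde f-f\|_{w^2})$, and this tends to $0$ in probability by hypothesis, which is exactly stability.

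The lower bound on the conditional variance of $f$ is the one extra regularity condition the argument needs. It is the standard condition used by \citet{kennedy2023towards}, and I would state it explicitly.
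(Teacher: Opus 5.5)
Your argument is correct and is essentially the standard proof behind the cited result of \citet{kennedy2023towards}; the paper itself does not reprove it. In that proof, linearity reduces the numerator to $\sum_i w_i\Delta_i$ with conditional variance at most $(\sum_j w_j^2)\,\|\widetilde f-f\|_{w^2}^2$, and the oracle risk is bounded below by the variance term $\sum_i w_i^2\sigma^2(X_i)=(\sum_j w_j^2)\,\|\sigma\|_{w^2}^2$, where $\sigma^2(x)=\mathrm{Var}\{f(W)\mid X=x\}$.

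The one discrepancy is your extra hypothesis. You assume $\sigma^2\ge c>0$ uniformly and attribute this to Kennedy, but the condition the paper states is the weaker weighted one, $1/\|\sigma\|_{w^2}=O_{\mathbb P}(1)$. Your Markov step also works under that weaker condition: apply Markov to $A:=\sum_i w_i^2\sigma^2(X_i)$ instead of $\sum_i w_i^2$, and write $\sum_j w_j^2/\mathbb E A=\|\sigma\|_{w^2}^{-2}\cdot A/\mathbb E A=O_{\mathbb P}(1)$.
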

This result applies to a broad class of linear smoothers, such as kernel regression, local polynomial regression, and spline/series smoothers. It holds under a mild boundedness condition on the conditional standard deviation, namely $1/\|\sigma\|_{w^2}=O_{\mathbb P}(1)$, where $\sigma(x)^2:=\mathrm{Var}\{f(W)\mid X=x\}$. The norm $|\cdot|_{w^2}$ that defines the distance $d(\widetilde f,f)$ for linear smoothers is a natural weighted and conditional version of an $L_2(\mathbb{P})$ norm.

\subsection{Error for estimators under smoothness}
In this section, we study the errors of our three proposed estimators (subset, one-step, and EIF) when the second-stage regression is smoothed. Each estimator admits a decomposition into an oracle error term and a smoothed plug-in bias for the limiting parameter, analogous to the results of \citet{kennedy2023towards}.

We define the subset estimator and the one-step estimator, respectively, as 
\[
\widehat{\tau}_{\mathrm{sub}}^{u}(x)
:= \widehat{\mathbb{E}}_{n}\!\left[\tilde{\varphi}_{\tau^{u}}(W)
\;\middle|\; \mathcal{S}_{u},\, X=x \right] \text{ and }
\widehat{\tau}_{\mathrm{one}}^{u}(x)
:= \widehat{\mathbb{E}}_{n}\!\left[
\tilde{\zeta}_{\tau^u}(W)
\;\middle|\; X=x
\right],
\]
where $\tilde{\varphi}_{\tau^{u}}$ and $\tilde{\zeta}_{\tau^u}$ are learned
on the first-stage training sample by plugging in nuisance estimates.
For the theorem below, we define
$
 \zeta_{\tau^u}(W)
:= \tau^{u}(X)
  + \frac{{\phi}_{1,u}(W)-{\phi}_{0,u}(W)-\tau^{u}(X) g^{u}(W)}
          {e^{u}(X)}.$

\begin{theorem}\label{thm:smooth_sub_one}
Fix $u\in\{00,10,11\}$ and a target point $x$. As $n\to\infty$, assume: (i) the second--stage regression operator $\widehat{\mathbb{E}}_n$ is stable with respect to a
distance $d$, (ii) $d(\tilde{\varphi}_{\tau^u}, \varphi_{\tau^u}) \xrightarrow{p} 0$ and
$d(\tilde \zeta_{\tau^u}, \zeta_{\tau^u})\xrightarrow{p} 0$. Define the oracle risks
$R_{\mathrm{sub}}^{u,*}(x)^2:=\mathbb{E}\!\Big[\{\widehat{\mathbb{E}}_n\!\bigl(\varphi_{\tau^u}(W)\mid \mathcal S_u,\,X=x\bigr)-\tau^u(x)\}^2\Big]$
and
$\mbox{$R_{\mathrm{one}}^{u,*}(x)^2:=\mathbb{E}\!\left[\left\{\widehat{\mathbb{E}}_n\!\bigl(\zeta_u(W)\mid X=x\bigr)-\tau^u(x)\right\}^2\right]$}$.
\\
Then for the subset estimator,  $\widehat{\tau}^{u}_{\mathrm{sub}}(x)-\tau^u(x)$ can be decomposed as 
\begin{align*}
\underbrace{\widehat{\mathbb{E}}_n\!\bigl\{\varphi_{\tau^u}(W)\mid\mathcal{S}_u,\,X=x\bigr\}-\tau^u(x)}_{\text{oracle error}}
+
\underbrace{\widehat{\mathbb{E}}_n\!\bigl\{\tilde\tau^u_{\mathrm{sub}}(X)-\tau^u(X)\mid\mathcal{S}_u,\,X=x\bigr\}}_{\text{smoothed plug-in bias}}
+o_p\!\bigl(R_{sub}^{u,*}(x)\bigr)
\end{align*}
and for the one-step estimator, $\widehat{\tau}^{u}_{\mathrm{one}}(x)-\tau^{u}(x)$ can be decomposed as
\[
\underbrace{
\widehat{\mathbb{E}}_n\!\bigl\{\zeta_u(W)\mid X=x\bigr\}
-
\tau^{u}(x)}_{\text{oracle error}}+
\underbrace{\widehat{\mathbb{E}}_n\!\left[
  \tilde{\tau}^{u}_{\mathrm{one}}(X)-\tau^{u}(X)
\mid X=x
\right]}_{\text{smoothed plug-in bias}}
+o_p\!\bigl(R_{one}^{u,*}(x)\bigr).
\]
\end{theorem}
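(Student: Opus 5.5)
The proof rests on one algebraic identity, applied to the stability property (Definition~\ref{def:stability}) of the second-stage operator. The two cases are handled separately, with $f=\varphi_{\tau^u}$ and $\tilde f=\tilde\varphi_{\tau^u}$ (subset) and $f=\zeta_{\tau^u}$ and $\tilde f=\tilde\zeta_{\tau^u}$ (one-step). For the subset case, everything is done conditional on the training sample $D_1^n$ and restricted to the observations in $\mathcal S_u$. The operator $\widehat{\mathbb E}_n(\cdot\mid\mathcal S_u,X=x)$ is then a regression fit on the subsample $D_2^n\cap\mathcal S_u$, and we apply stability to it with the conditional law given $\mathcal S_u$. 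Hypothesis (i) is understood to hold for this restricted operator.

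For the subset estimator, first write
\[
\widehat\tau^u_{\mathrm{sub}}(x)-\tau^u(x)
=\bigl[\widehat{\mathbb E}_n\{\varphi_{\tau^u}\mid\mathcal S_u,x\}-\tau^u(x)\bigr]
+\widehat{\mathbb E}_n\{\tilde b\mid\mathcal S_u,x\}
+\bigl[\widehat{\mathbb E}_n\{\tilde\varphi_{\tau^u}\mid\mathcal S_u,x\}-\widehat{\mathbb E}_n\{\varphi_{\tau^u}\mid\mathcal S_u,x\}-\widehat{\mathbb E}_n\{\tilde b\mid\mathcal S_u,x\}\bigr].
\]
The plug-in bias function is
\[
\tilde b(X)=\mathbb E[\tilde\varphi_{\tau^u}(W)-\varphi_{\tau^u}(W)\mid D_1^n,\mathcal S_u,X].
\]
The key identification step is to recognize this function. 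By definition of the subset plug-in limit parameter, $\mathbb E[\tilde\varphi_{\tau^u}\mid D_1^n,\mathcal S_u,X]=\tilde\tau^u_{\mathrm{sub}}(X)$. By Theorem~\ref{the:subset}, $\mathbb E[\varphi_{\tau^u}\mid\mathcal S_u,X]=\tau^u(X)$. Hence $\tilde b(X)=\tilde\tau^u_{\mathrm{sub}}(X)-\tau^u(X)$, and the second term above is exactly the smoothed plug-in bias. For the third bracket, the denominator in Definition~\ref{def:stability} is
\[
\sqrt{\mathbb E[\{\widehat{\mathbb E}_n(\varphi_{\tau^u}\mid\mathcal S_u,x)-\mathbb E(\varphi_{\tau^u}\mid\mathcal S_u,x)\}^2]},
\]
which equals $R^{u,*}_{\mathrm{sub}}(x)$, again by Theorem~\ref{the:subset}. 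Assumptions (i) and (ii) then give that the third bracket is $o_p(R^{u,*}_{\mathrm{sub}}(x))$.

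The one-step case is identical. The identity $\mathbb E[\zeta_{\tau^u}(W)\mid X]=\tau^u(X)$ follows from Theorem~\ref{one-step} applied with preliminary estimator $\check\tau^u=\tau^u$, since the conditional expectation of the correction term vanishes. The definition of $\tilde\tau^u_{\mathrm{one}}$ gives $\mathbb E[\tilde\zeta_{\tau^u}\mid D_1^n,X]=\tilde\tau^u_{\mathrm{one}}(X)$. We treat $\check\tau^u$ as a fixed function, trained on $D_1^n$ together with the other nuisances. So $\tilde b=\tilde\tau^u_{\mathrm{one}}-\tau^u$, the stability denominator is $R^{u,*}_{\mathrm{one}}(x)$, and the same three-term decomposition applies.

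The main delicate point is conditioning. All functions $\tilde f$ must be fixed given $D_1^n$, and $D_2^n$ must be independent of $D_1^n$, so that the expectations defining $\tilde b$ and the oracle risk are taken only over $D_2^n$. In the subset case, the random subsample size $|D_2^n\cap\mathcal S_u|$ must be absorbed into the stability assumption; I would do this by conditioning on the indicators $\mathbf 1(\mathcal S_u)$ in $D_2^n$. Everything else is bookkeeping.
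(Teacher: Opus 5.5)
Your proposal is correct and matches the paper's argument. The paper simply invokes the stability lemma of \citet{kennedy2023towards} (Lemma~\ref{lemmaken}), with $\hat m_{\tilde f}(x)$ taken to be $\widehat\tau^u_{\mathrm{sub}}(x)$ or $\widehat\tau^u_{\mathrm{one}}(x)$, $m_f(x)=\tau^u(x)$, and $\tilde b=\tilde\tau^u_{\mathrm{sub}}-\tau^u$ or $\tilde\tau^u_{\mathrm{one}}-\tau^u$; that lemma is exactly the three-term identity plus stability that you write out, and your explicit treatment of conditioning on $D_1^n$ and on $\mathcal S_u$, and of $\mathbb E[\zeta_{\tau^u}\mid X]=\tau^u(X)$ via Theorem~\ref{one-step} with $\check\tau^u=\tau^u$, fills in details the paper leaves implicit.
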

Theorem \ref{thm:smooth_sub_one} shows that the smoothed subset and one-step  estimators admit a
decomposition into an oracle estimation error term and a smoothing-bias term
associated with the plug-in limit parameters, under the mild condition that the
pseudo outcome are consistent. {The corollaries below illustrate how the result applies when the nuisance functions, the preliminary estimator, and the CPCE are learned via smoothness- or sparsity-based procedures. Let $p$ denote the covariate dimension.} 

\begin{corollary}\label{col:sub_one}
Suppose the assumptions of Theorem~\ref{thm:smooth_sub_one} hold and
$\widehat{\mathbb E}_n$ is a minimax-optimal linear smoother with
$\sum_{i=1}^n |w_i(x;X^n)| = O_p(1)$.
Assume further that the regressions $\mu_{zs}$ are $\alpha_{\mu}$-smooth and satisfy
$\|\tilde \mu_{zs}-\mu_{zs}\|_{w,2} =O_p\!\left(n^{-1/(2+p/\alpha_{\mu})}\right)$, and that
the CPCE $\tau^u$ is $\gamma$-smooth.
Then the following rates hold:
\begin{enumerate}
\item \textbf{(Subset estimator).}
If the subset propensity score $\pi_{\mathcal S_u}$ is $\alpha_u$-smooth and
$\|\tilde\pi_{\mathcal S_u}-\pi_{\mathcal S_u}\|_{w,2}
=O_p\!\left(n^{-1/(2+p/\alpha_u)}\right)$, then $\widehat{\tau}^u_{\mathrm{sub}}(x)-\tau^u(x)$ equals
\[
O_{p}\!\left(
n^{\frac{-1}{2+p/\gamma}}
+
n^{-\left(\frac{1}{2+p / \alpha_u}+\frac{1}{2+p / \alpha_{\mu}}\right)}
\right).
\]

\item \textbf{(One-step estimator).}
If the preliminary
estimator $\check\tau^u$ is $\alpha_{\gamma}$-smooth and 
$\|\check\tau^u-\tau^u\|_{2}=O_p\!\left(n^{-1/(2+p/\alpha_{\gamma})}\right)$ . And if the propensity score $\pi$ is $\alpha_{\pi}$-smooth and
$\|\tilde\pi-\pi\|_{w,2}=O_p\!\left(n^{-1/(2+p/\alpha_{\pi})}\right)$, the principal score
$p_z$ is $\alpha_p$-smooth and
$\|\tilde p_z-p_z\|_{w,2}=O_p\!\left(n^{-1/(2+p/\alpha_{p})}\right)$, 
then $    \hat\tau_{\mathrm{one}}^{u}(x)-\tau^{u}(x)$ equals
\begin{equation}\label{one_r}
O_p\!\Bigg(
n^{\frac{-1}{2+p/\gamma}}
\;+\;
n^{-\left(\frac{1}{2+p/\alpha_\pi}+\frac{1}{2+p/\alpha_{\mu}}\right)}
\;+\;
n^{-\left(\frac{1}{2+p/\alpha_p}+\frac{1}{2+p/\alpha_{\mu}}\right)}
\;+\;
n^{-\left(\frac{1}{2+p/\alpha_{\gamma}}+\frac{1}{2+p/\alpha_{p}}\right)}
\Bigg).
\end{equation}
\end{enumerate}
\end{corollary}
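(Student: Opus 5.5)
The plan is to start from the decomposition in Theorem~\ref{thm:smooth_sub_one}. It splits $\widehat\tau^u_{\mathrm{sub}}(x)-\tau^u(x)$ and $\widehat\tau^u_{\mathrm{one}}(x)-\tau^u(x)$ into three pieces: an oracle error, a smoothed plug-in bias, and an $o_p(R^{u,*}(x))$ remainder. We bound the oracle error and the smoothed bias separately; the remainder is then absorbed by the oracle term.

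\emph{Oracle term.} The oracle pseudo-outcomes $\varphi_{\tau^u}$ and $\zeta_{\tau^u}$ have conditional mean exactly $\tau^u(x)$. For $\varphi_{\tau^u}$ this holds on $\mathcal S_u$ by Theorem~\ref{the:subset}. For $\zeta_{\tau^u}$ it follows from Theorem~\ref{one-step} with $\check\tau^u=\tau^u$. Under bounded conditional variance, guaranteed by the overlap conditions, $\widehat{\mathbb E}_n$ is a minimax-optimal linear smoother for a $\gamma$-smooth regression function. Hence $R^{u,*}(x)=O(n^{-1/(2+p/\gamma)})$, and the oracle error and the remainder are both $O_p(n^{-1/(2+p/\gamma)})$. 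For the subset estimator the effective sample size is $n\,\mathbb P(\mathcal S_u)$, which changes only constants.

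\emph{Smoothed bias term.} This is the main step. We write it as $\sum_i w_i(x;X^n)\,b(X_i)$ with $b=\tilde\tau^u-\tau^u$. Theorems~\ref{robustness_subset} and~\ref{robustness_one} give pointwise bounds $|b(X_i)|\lesssim \sum_k |\tilde h_{1k}(X_i)-h_{1k}(X_i)|\,|\tilde h_{2k}(X_i)-h_{2k}(X_i)|$. For the subset estimator the products pair $\tilde\pi_{\mathcal S_u}$ with $\tilde\mu_{zs}$. For the one-step estimator they pair $\tilde\pi$ with $\tilde\mu$, $\tilde p_z$ with $\tilde\mu$, and $\check\tau^u$ with $\tilde p_z$.

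For each product we apply Cauchy--Schwarz in the weighted form
\[
\Bigl|\sum_i w_i\,a(X_i)c(X_i)\Bigr|\le \Bigl(\sum_i|w_i|\Bigr)\,\|a\|_{w,2}\,\|c\|_{w,2},
\]
where $\|f\|_{w,2}^2=\sum_i \frac{|w_i|}{\sum_j|w_j|}f(X_i)^2$. Since $\sum_i|w_i|=O_p(1)$, each product is bounded by the product of the two assumed $\|\cdot\|_{w,2}$ rates. This yields the exponents $\frac1{2+p/\alpha_u}+\frac1{2+p/\alpha_\mu}$ for the subset estimator, and the three cross terms in \eqref{one_r} for the one-step estimator.

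The main obstacle is that the preliminary-estimator rate is stated in $\|\cdot\|_2$ rather than $\|\cdot\|_{w,2}$. Our first attempt would be to use the smoothness of $\check\tau^u$ and $\tau^u$ (H\"older-type control) to transfer the $L_2$ rate to the local weighted norm, or simply to interpret it as the weighted norm. A second obstacle is that the $O_p$ bounds from the robustness theorems must hold uniformly over the $X_i$ in the second sample. We would handle this by revisiting their proofs: the bias is an explicit product expression with constants depending only on $\epsilon$, so the bounds are deterministic given overlap. Summing the oracle and bias contributions then gives the stated rates.
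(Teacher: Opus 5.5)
Your proposal follows the paper's proof. It starts from the decomposition in Theorem~\ref{thm:smooth_sub_one} and bounds the oracle term by minimax optimality, with the subset size a non-vanishing fraction of $n$. It then controls the smoothed bias by writing the exact bias expansions as bounded-coefficient products of two nuisance errors and applying the weighted Cauchy--Schwarz bound, which is exactly Proposition~\ref{prop:kennedy_product_bound} with $p=q=2$. On the norm mismatch you flag, the paper simply treats $\|\check\tau^u-\tau^u\|_2$ as a $\|\cdot\|_{w,2}$ rate, i.e.\ your fallback reading; your first idea, transferring an $L_2(\mathbb P)$ rate to the localized weighted norm via H\"older smoothness, would in general lose rate, so adopt the fallback reading.
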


Corollary~\ref{col:sub_one} shows that both subset estimator and one-step estimator can attain the optimal smoothness rate even when the nuisance functions are less smooth. That is, it establishes the double robustness of the subset estimator and a form of multiple robustness of the one-step estimator. The condition for oracle efficiency of the subset estimator matches the oracle efficiency condition for the DR-learner in \citet{kennedy2023towards}. 
When the preliminary estimator of the one-step estimator $\check{\tau}^u$ is chosen as the $T$-learner, the final term in~\eqref{one_r} is absorbed into the third term. As a result, the one-step estimator exhibits a ``double robustness'' structure with respect to the three nuisance components: it is consistent if either the outcome mean models are correctly specified, or both the propensity score and the principal score are correctly specified. Moreover, under this choice of $\check{\tau}^u$, the one-step estimator satisfies the same oracle-efficiency condition as the subset estimator, namely
\[
\sqrt{r\,\alpha_\mu}\ \ge\
\frac{p/2}{\sqrt{\,1+\frac{p}{\gamma}\Bigl(1+\frac{p}{2\bar{h}(r)}\Bigr)\,}},
\qquad
\bar{h}(r):=\left(\frac{r^{-1}+\alpha_\mu^{-1}}{2}\right)^{-1},
\]
where for the subset, $r$ is $\alpha_u$ and $\bar{h}(r)$ is the harmonic mean of $(\alpha_u,\alpha_\mu)$. 
For the one-step estimator ($\check \tau^u$ is T-learner),  $r$ is $\max\{\alpha_\pi,\alpha_p\}$ and $\bar{h}(r)$ is the harmonic mean of $(\max\{\alpha_\pi,\alpha_p\},\alpha_\mu)$. In particular, the oracle-efficiency condition is driven by the smoother of the principal score and propensity score components.

Moreover, this decomposition suggests a simple, approximate {\it inference strategy}. Because the additional plug-in term is asymptotically negligible relative to the oracle error under the stated conditions, the pointwise sampling variability of $\widehat{\tau}_{sub}^u(x)$ or $\widehat{\tau}_{one}^u(x)$ is asymptotically the same as that of the oracle regression (i.e., the second-stage regression applied to the true pseudo-outcome). Therefore, we may construct pointwise confidence intervals using the same standard-error formula as for the oracle estimator. In practice, since the true pseudo-outcome is unobserved, we compute the standard error from the second-stage regression applied to the plug-in pseudo-outcomes.

Unlike the subset and one-step estimators, the EIF result has a slightly different structure; see details in Supplementary \ref{s:error_eif}. In particular, the leading term involves a smoothed, weighted population-function bias rather than a (pure) smoothed bias. However, this weight is close to one, since it is given by the ratio between the population denominator and its estimator. Moreover, our results show that the oracle error contains two terms because the EIF estimator involves two smoothed regressions. This not only makes the oracle error difficult to evaluate directly, since the $\tau^u(x)$ is unknown, but also makes the oracle error larger than that of the one-step estimator, which requires only a single second-stage smoothing step. Consequently, when the preliminary estimator for the one-step procedure is the $T$-learner, the plug-in errors for the one-step and EIF estimators have the same structure, but the EIF oracle error is larger due to the additional smoothing step. The corresponding smoothness result for the EIF estimator and the oracle efficient condition are deferred to Supplementary~\ref{col:eif}.

\section{Simulation}\label{s:5}
In the simulation section, we conduct two studies. The first examines robustness properties using parametric regression methods, while the second evaluated performance under flexible machine learning methods.

We draw $X=(X_1,\ldots,X_4)\in\mathbb{R}^4$ with independent coordinates, where $X_i \sim \mathrm{Unif}(0,1)$, and generate an additive outcome noise term $\varepsilon \sim \mathcal{N}(0, .2^2)$. We also specify a baseline outcome function $b(\cdot)$.  For  $\pi(\cdot)$ and $p_z(\cdot)$ to be discussed shortly, 
we let
\begin{equation}\label{eq:d}
\begin{split}
   & Z \mid X \overset{\text{i.i.d.}}{\sim}\mathrm{Ber}\{\pi(X)\}, 
 \quad S \mid (Z=z, X) \sim \text{Ber}\{p_z(X)\},  \text{ and} \\[0.5em]
   & Y = \mu_{zs}(X) + \epsilon 
     = b(X) + 0.5 Z X_1 + 0.5\, S X_1 - 0.5\, Z S X_1 +\epsilon .
\end{split}
\end{equation}
From~\eqref{eq:d},  the true conditional principal causal effects are identical across strata, with $\tau_{11}(X) = \tau_{10}(X) = \tau_{00}(X) = 0.5X_1$. We used the T-learner as the preliminary estimator for the one-step estimator. To assess robustness under different data-generating mechanisms, we consider four settings: (i) all nuisance functions $\pi(\cdot)$, $p_z(\cdot)$, $\mu_{zs}(\cdot)$ are correctly specified; (ii) $\pi(\cdot)$ and $p_z(\cdot)$ are misspecified but $\mu_{zs}(\cdot)$ are correct; (iii) $\mu_{zs}(\cdot)$ are misspecified but $\pi(\cdot)$ and $p_z(\cdot)$  are correct; and (iv) all nuisance functions are misspecified. The remaining details for the data generating process (namely, $b$, $\pi$, $p_z$) are in Supplementary \ref{app:sim}. All nuisance functions and the second-stage regression were estimated using parametric models. In scenarios where the true nuisance functions are linear (and propensity/other probability models follow a logistic regression with a linear predictor), the nuisance models are correctly specified; when the true functions are nonlinear, these parametric specifications are misspecified.

{Figure~\ref{fig:robustness for tau10} reports the log-RMSE for estimating $\tau^{10}(x)$ across sample sizes under four nuisance-specification regimes, providing empirical support for our theoretical robustness results. In all panels, RMSE decreases with $n$, indicating improved accuracy as the sample size grows.  Since the T-learner relies essentially on differences of two outcome regressions, it attains its theoretically best performance when $\mu_{zs}(\cdot)$ is correctly specified, but becomes inconsistent when $\mu_{zs}(\cdot)$ is misspecified. When all models are correctly specified, the subset and one-step estimators attain the lowest RMSE, matching the T-learner, while the EIF has larger RMSE, consistent with our theory that its oracle error is larger. When $\pi(\cdot)$ and $p_z(\cdot)$ are misspecified but $\mu_{zs}(\cdot)$ is correctly specified, or when $\mu_{zs}(\cdot)$ is misspecified but $\pi(\cdot)$ and $p_z(\cdot)$ are correctly specified, the subset, one-step, and EIF estimators remain consistent, though the EIF performs worst at small sample sizes. Finally, when all nuisance models are misspecified, the subset and one-step estimators continue to deliver the best performance.
}

\begin{figure}
    \centering
\includegraphics[width=0.75\linewidth]{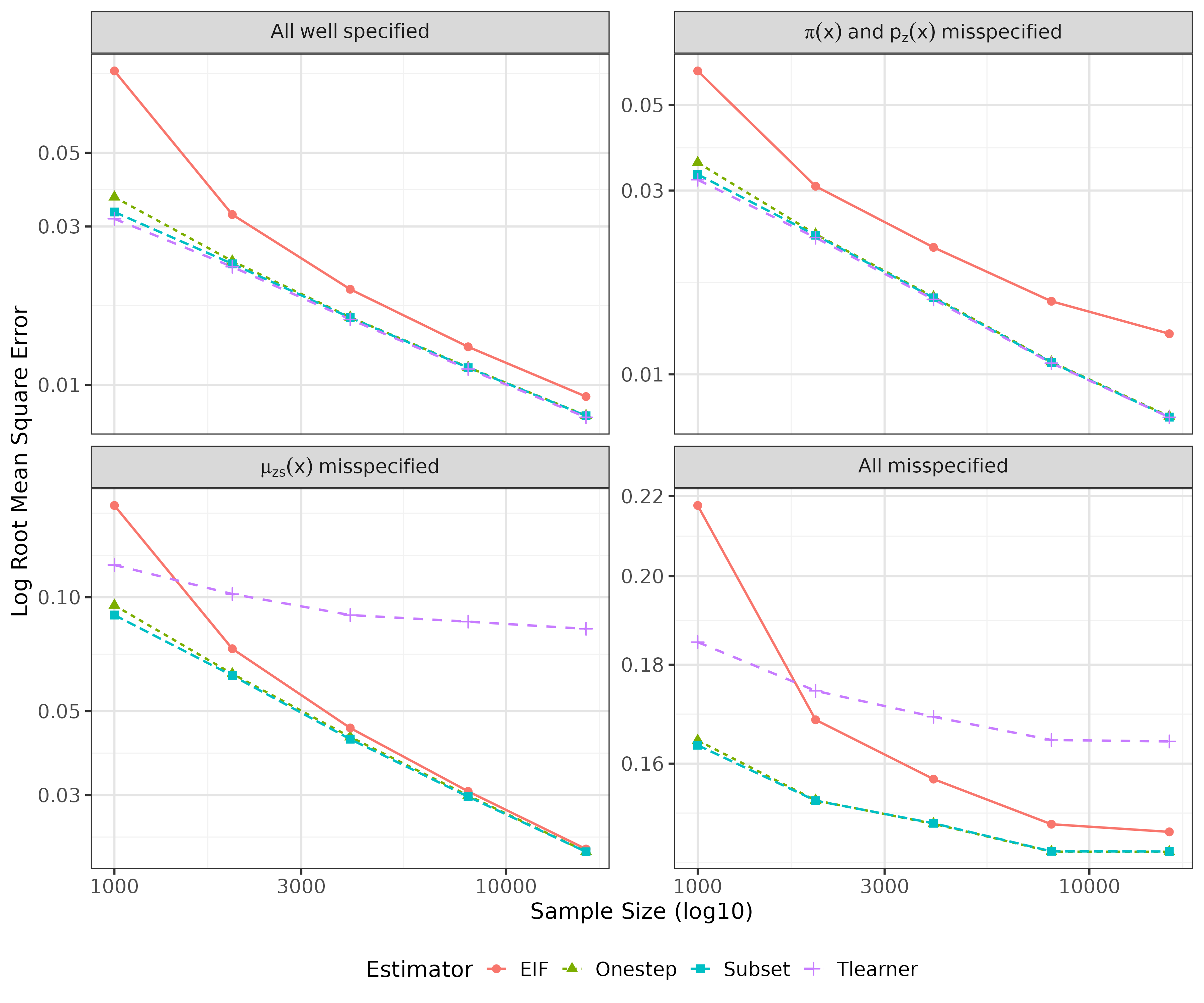}
    \caption{Root mean squared error (RMSE, log scale) of $\tau^{10}(X)$ estimators across sample sizes (1,000–16,000). Results compare the T-learner, subset, EIF, and one-step estimators.}
    \label{fig:robustness for tau10}
\end{figure}
In a second simulation study, we evaluate finite-sample performance under flexible, data-adaptive modeling by replacing the previously used parametric models with generalized additive models (GAMs).
In this setting, the nuisance functions are nonlinear in $X$ but they are additive and smooth, so the GAM working models are correctly specified. See Supplementary~\ref{app:sim} for details of the data-generating process. Figure~\ref{fig:gam tau00} reports the RMSE for estimators of $\tau^{00}(X)$ under this setting.

\begin{figure}
    \centering    \includegraphics[width=0.75\linewidth]{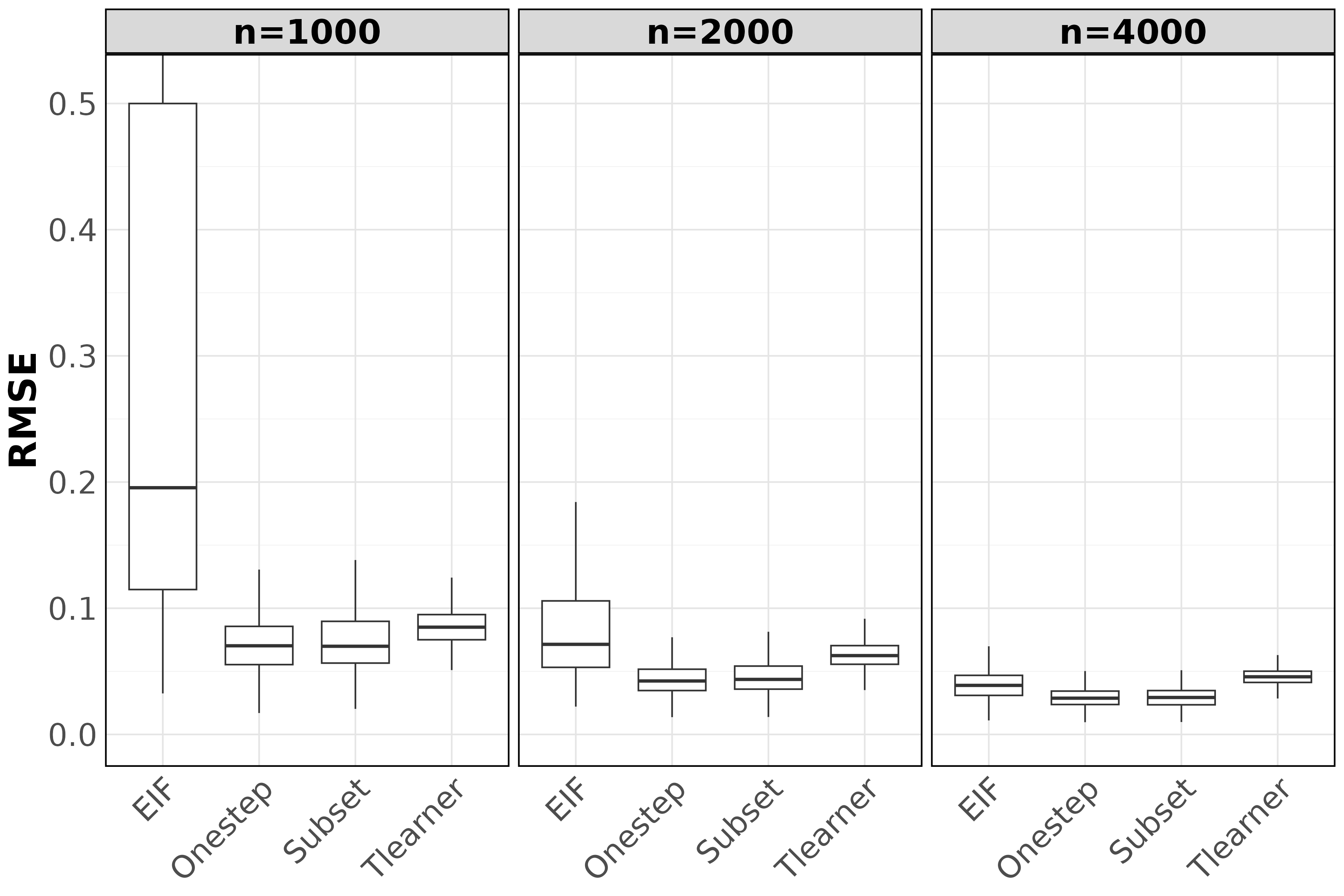}
    \caption{Boxplots display the distribution of RMSE for four estimators (EIF, One-step, Subset, and Tlearner) estimated by GAM model across three sample sizes ($n=1000$, $n=2000$, $n=4000$)}
    \label{fig:gam tau00}
\end{figure}
From Figure~\ref{fig:gam tau00}, the subset and one-step estimators attain the lowest RMSE at every sample size, and their performance is nearly indistinguishable. And the T-learner is uniformly less accurate than the subset and one-step estimators. The EIF estimator’s RMSE is large and highly variable at $n=1000$, but drops sharply and becomes much more concentrated as $n$ increases. This indicates that EIF is sample size sensitive, it can perform poorly in small samples but improves substantially with larger $n$, eventually becoming comparable to the other methods.

In addition, we conduct a follow-up simulation in Study 2 in which the target estimand $\tau^u(x)$ is generated from a nonlinear, nonparametric function of $X$, rather than the earlier linear form $0.5X_1$. The result in Supplementary \ref{app:sim} show a similar pattern. The third study further examines the finite-sample performance of the subset and one-step estimators under overlap violations and unbalanced subset sizes. Figure~\ref{fig:gam_tau11_s3} indicates that the subset estimator performs worse, whereas the one-step estimator is more stable.

\section{Analysis of health care hotspotting data}\label{s:6}
We illustrate our methods by estimating the CPCE among compliers in the “Health Care Hotspotting” RCT. The trial randomized superutilizer patients with high healthcare needs to either standard care or an intensive follow-up program after hospital discharge. Although the original trial reported null overall effects \citep{finkelstein2020health}, a secondary analysis focusing on high engagers found significant intervention benefits \citep{yang2023hospital},  indicating heterogeneity. Our goal is to help elucidate whether the heterogeneity in effect seen is a result of heterogeneity in who engages or is a reflection of true heterogeneity in the effect of the intervention.

Among those assigned to the intervention, some may later decline to participate, resulting in one-sided noncompliance. Accordingly, let $S$ denote a binary indicator of engagement (high vs. low), defined following \cite{clark2024transportability} using ``engagement" metrics recorded during trial follow-up. The primary outcome is 30-day hospital readmission, measured as a binary indicator. Because control-assigned patients could not access the intervention, the (strong) monotonicity assumption holds. A detailed description of the dataset and covariates is provided in Supplementary~\ref{app:hot}.

The ATE, estimated using the \texttt{tmle} package \citep{gruber2012tmle}, is estimated as -0.0246 (95\% CI: -0.0872, 0.0379), indicating no overall benefit and aligning with the null finding reported by \citet{finkelstein2020health}. In contrast, the complier average treatment effect, estimated following \citet{jiang2022multiply}, was -0.0775 (95\% CI: -0.1629, -0.0039), suggesting a modest but statistically significant reduction in 30-day readmissions among compliers. Finally, CATE estimates from the DR-learner reveal substantial treatment effect heterogeneity (see Supplementary~\ref{app:hot} for details). This 
particularly interesting scenario suggests that averaging can mask important structure in who benefits from treatment and through which principal strata the effect operates. Consequently, examining CPCEs provides a more refined explanation of the data.

We focus our empirical analysis on the subset and one-step estimators, since the EIF estimator displays weaker finite-sample performance in this application. The propensity score is estimated via a logistic regression model. Following \citet{yang2023hospital}, we estimate the principal scores using gradient boosting. Outcome regression nuisance functions are fit using the Super-Learner \citep{van2007super}. Because the readmission outcome is binary, we apply Hájek normalization \citep{hajek2003conditional} to the pseudo-outcomes to improve numerical stability and to yield bounded effect estimates. For the second-stage regression, we use generalized random forests \citep{athey2019grf}.

\begin{figure}[htbp]
\centering

    \includegraphics[width=0.8\linewidth]{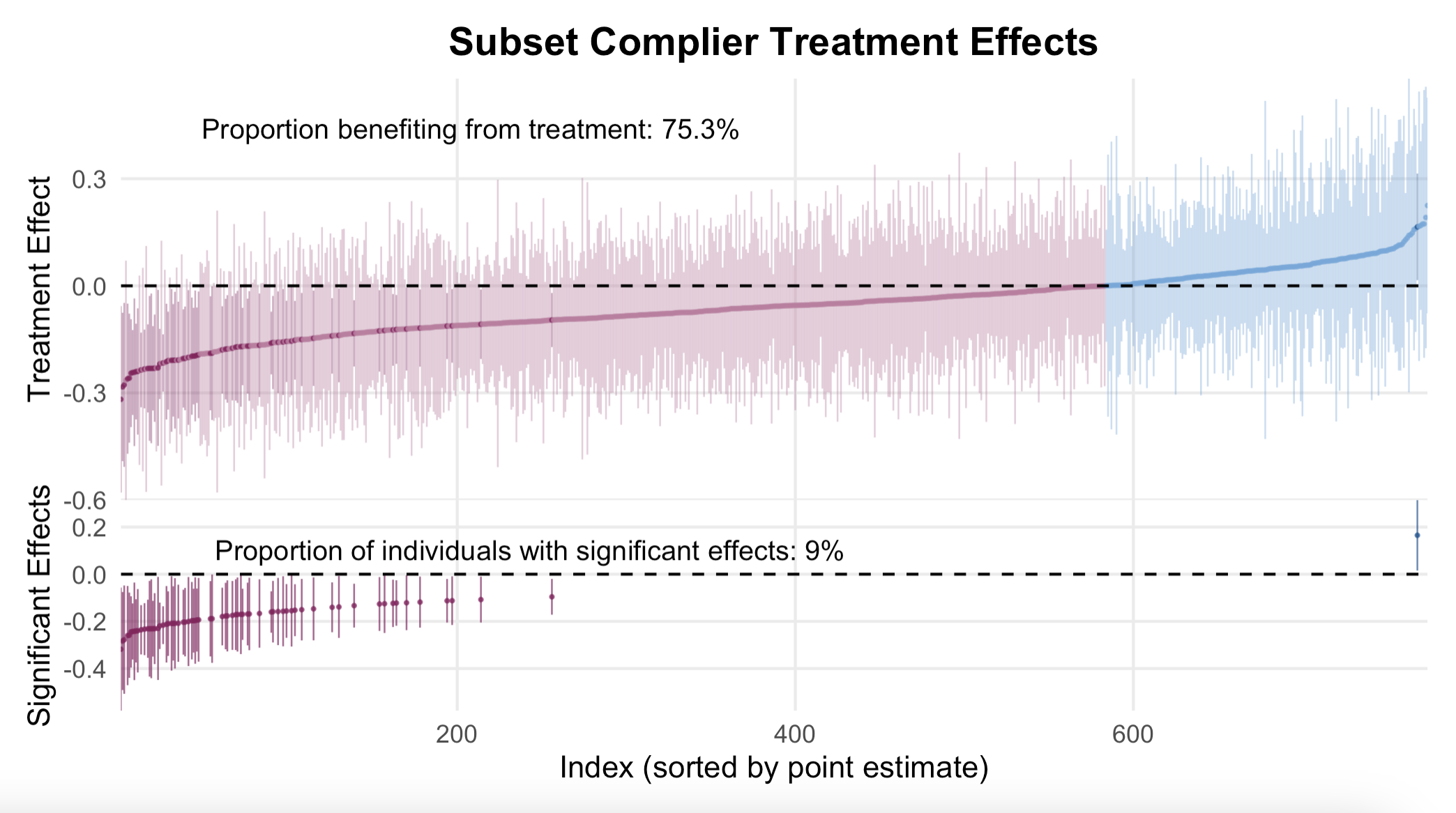}

\caption{Estimated complier-specific treatment effects, ordered by the point estimates, with 95\% confidence intervals computed by \texttt{grf}. The top subpanel displays unit-level estimates for all compliers with corresponding confidence intervals, while the bottom subpanel highlights individuals whose effects are statistically distinguishable from zero (i.e., whose confidence intervals exclude 0). Red points denote estimated beneficial effects ($\tau^{10}(x)<0$), whereas blue points indicate increased readmission risk ($\tau^{10}(x)>0$).
}
    \label{fig:hot}
\end{figure}

Figure~\ref{fig:hot} displays the estimated complier CPCEs and corresponding 95\% confidence intervals from the subset estimator. The one-step estimator yields a very similar pattern; see Supplementary~\ref{app:hot} for results. The figure shows substantial across-individual variation in unit-level CPCEs, providing evidence of treatment effect heterogeneity. Moreover, the distribution is shifted toward negative values, suggesting an overall beneficial effect. 75.3\% of unit-level effects are negative, indicating that most compliers are estimated to benefit from treatment. About 8.91\% of unit-level estimates are significantly negative, whereas only 0.13\% are significantly positive. Overall, effects among compliers appear modest, with most individuals experiencing small reductions in 30-day readmissions, motivating follow-up analyses to identify baseline covariates that drive this heterogeneity.

{From the generalized random forest fit in the second-stage regression, we assessed heterogeneity drivers using the forest’s variable-importance measure and found the highest-ranked covariates to be prior inpatient admissions in the past 180 days, duration of the initial (index) hospital stay, and sex.} 
We find that the intervention may be more effective for female compliers, whereas male compliers do not appear to benefit, which is consistent with the findings of \citet{yang2023hospital}. While \citet{yang2023hospital} reported that higher education levels were negatively associated with engagement, our results suggest that, among compliers, there is limited evidence that education is a major driver of treatment effect heterogeneity.

In summary, our analysis of the Hotspotting RCT reveals heterogeneity among compliers, with CPCEs generally modest and skewed negative, suggesting reductions in 30-day readmissions. Hospitalization history and sex were key modifiers, with female compliers benefiting more, while education showed no effect once compliance was established. These findings provide a nuanced assessment of treatment effects, offering decision makers clearer insight into which subgroups may benefit most from targeted interventions.

\section{Discussion}
In this paper, we develop a new framework for identifying, estimating, and conducting inference on heterogeneous principal causal effects under principal ignorability. The target parameters are CPCEs within the never-taker, complier, and always-taker strata, enabling treatment effect heterogeneity to be studied within principal strata rather than only through average principal effects, and thereby helping decision makers distinguish two sources of subgroup differences: (i) heterogeneity in engagement (i.e., variation in $p_z(x)$) and (ii) heterogeneity in within-stratum causal effects (i.e., variation in $\tau^u(x)$). 

We propose four estimators. A baseline approach adapts the T-learner via observed outcome regressions. We then develop three cross-fitted, machine learning compatible estimators that are robust to nuisance estimation: (i) a subset-based estimator obtained by applying a DR-learner to the observed subset; (ii) an EIF-based estimator derived from the efficient influence function for the PCEs; and (iii) a one-step estimator that refines a preliminary estimator using an influence function correction. Under smoothness conditions, we establish pointwise theory showing that the subset estimator is doubly robust, remaining consistent if either $\pi_{\mathcal S_u}(x)$ or $\mu_{zs}(x)$ are consistently estimated. The EIF and one-step estimators are multiply robust, remaining consistent if either $\mu_{zs}(x)$ are consistently estimated or $\pi(x)$, and $ p_z(x)$ are consistently estimated. Our pointwise inference procedures are built on these results.

Our work raises several open questions. First, although CPCEs separate heterogeneity in principal causal effects into variation arising from the principal-score components $p_z(x)$ and from the conditional effect function $\tau^u(x)$, it remains unclear how this decomposition can be used to derive principled policy-learning rules under principal stratification. Second, our pointwise theory is established under smoothness conditions, and extending it to weaker assumptions and more general settings is an important direction for future work. Third, it would be of interest to extend the framework of \citet{lu2026principal} to accommodate heterogeneous effects when the post-treatment variable is continuous rather than binary. These questions highlight several promising directions for future research.

\bibliography{ref}

\newpage
\section{Supplementary Material}
In this supplementary material, we often present derivations for a single principal stratum as an illustrative example; the derivations for the remaining strata are analogous. Supplementary~\ref{s:toy_example} provides a toy example illustrating the limitations of the $T$-learner. Supplementary~\ref{s:proofth2.1} proves Theorem~\ref{the1}, showing that the target function $\tau^{u}(x)$ can be expressed as a difference of observed outcome mean models. Supplementary~\ref{s:proofth3.1} proves Theorem~\ref{the:subset}, showing that our target can be identified from a subset perspective. Supplementary~\ref{s:subpropensity} shows a second way to express the subset propensity score function. Supplementary~\ref{s:proofrobsub} proves Theorem~\ref{robustness_subset}, which establishes weak robustness and rate robustness for the subset plug-in limit parameter. Supplementary~\ref{s:proof5} proves Theorem~\ref{the:eif}, CPCEs can be identified from the corresponding component of the EIF for the principal causal effect (PCE).  Supplementary~\ref{s:proof6} proves the multiple robustness of EIF identification of EIF plug-in limit parameter. Supplementary~\ref{s:proof7} proves Theorem~\ref{one-step}, showing that our target can be identified in a one-step perspective. Supplementary~\ref{s:proof8} proves the multiple robustness of one-step identification of one-step plug-in limit parameter. Supplementary~\ref{s:proof9} contains proofs of the error decompositions for the subset and one-step estimators. The detailed proof of the EIF estimator decomposition theorem is given in Supplementary~\ref{s:error_eif}. Supplementary~\ref{s:proof10} provides the proof of the bias decomposition for the smoothed EIF estimator. The corresponding corollaries for these theorems are collected in Supplementary~\ref{s:corollary}. Supplementary~\ref{s:procedure} describes the details of the estimation procedure. Supplementary~\ref{app:sim} and Supplementary~\ref{app:hot} provide additional results for the simulation studies and the Hotspotting analysis, respectively.

\subsection{Toy Example}\label{s:toy_example}
We present a toy example, modified from \citet{kennedy2023towards}, to illustrate the limitations of the T-learner. Consider the following data-generating process: covariates are drawn as \( X \sim \mathcal{U}(-1, 1) \), and the treatment assignment mechanism (propensity score) and principal score are defined as $ \pi(x) = 0.25 + 0.5\times \text{sign}(x)$, and 
$ p_z(x) = (0.35 + 0.2\times\text{sign}(x))\times1_{Z=1}+(0.65 - 0.2\times\text{sign}(x))\times1_{Z=0}$. This setup ensures that the marginal proportions across the four observed subgroups remain approximately balanced, yet the treatment assignment becomes highly imbalanced conditional on $X$. We consider $\mu_{00}(x)= \mu_{10}(x)=\mu_{01}(x)=\mu_{11}(x)$ are equal to the piecewise polynomial function defined on
page 10 of \cite{gyorfi2006distribution}:
$$\mu_{zs}(x)= \begin{cases}(x+2)^2 / 2 & \text { if }-1 \leq x \leq-0.5 \\ x / 2+0.875 & \text { if }-0.5 \leq x<0 \\ -5(x-0.2)^2+1.075 & \text { if } 0<x \leq 0.5 \\ x+0.125 & \text { if } 0.5 \leq x<1\end{cases}.$$

As a result, although each subgroup remains adequately represented in the overall population, the distribution of units is heavily skewed within specific regions of the covariate space. This imbalance is problematic for the T-learner, which fits separate outcome models and may over-smooth where data are sparse and under-smooth where data are dense. Such instability can lead to biased estimation of the CPCE and spurious heterogeneity in the estimated treatment effects.

\begin{figure}[]
    \centering
    \begin{subfigure}{0.45\textwidth}
        \centering
        \includegraphics[width=\textwidth]{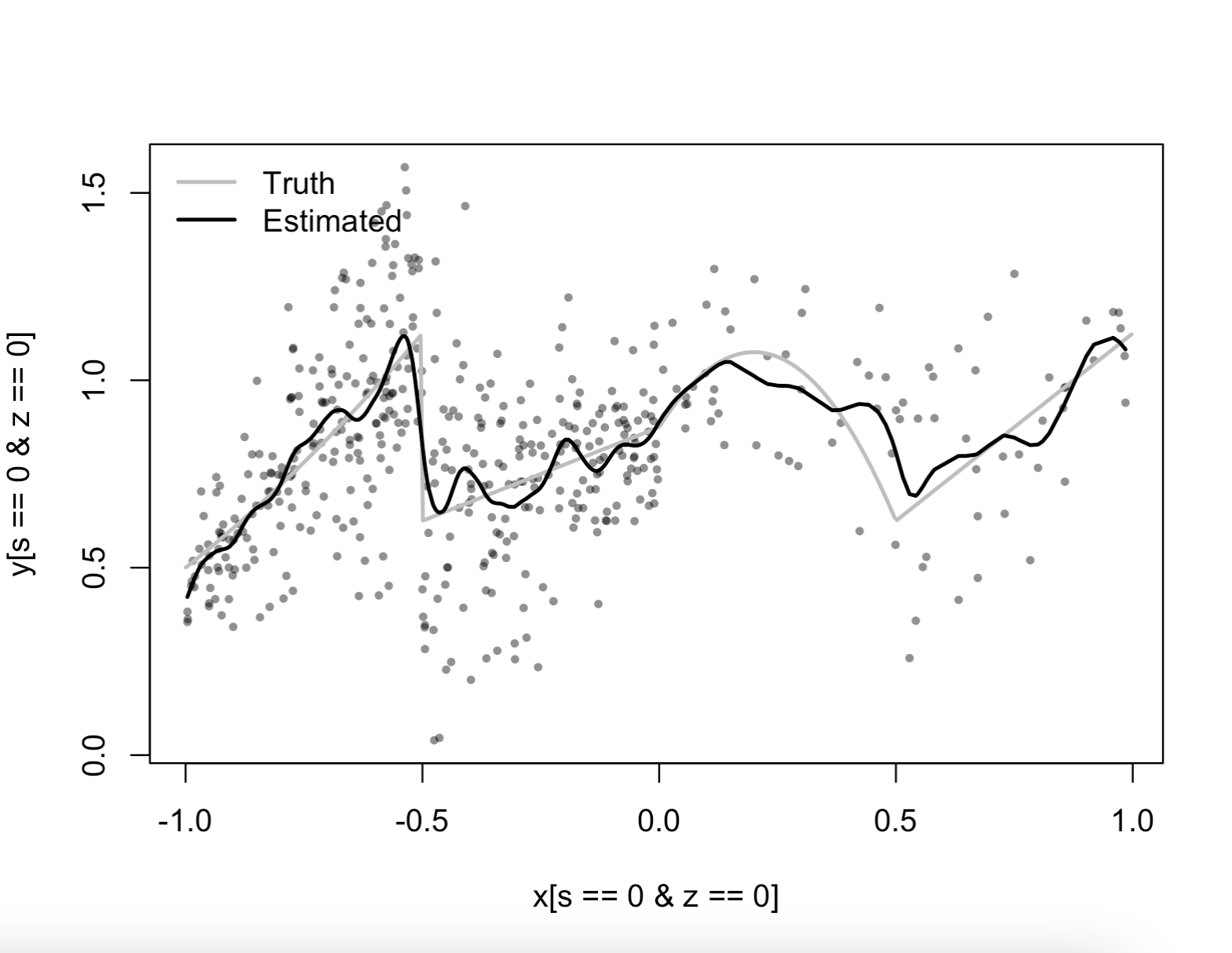}
        \caption{$Z=0$ and $S =0$ and the estimated function $\mu_{00}$}
        \label{fig:mu00}
    \end{subfigure}
    \hfill
    \begin{subfigure}{0.45\textwidth}
        \centering
        \includegraphics[width=\textwidth]{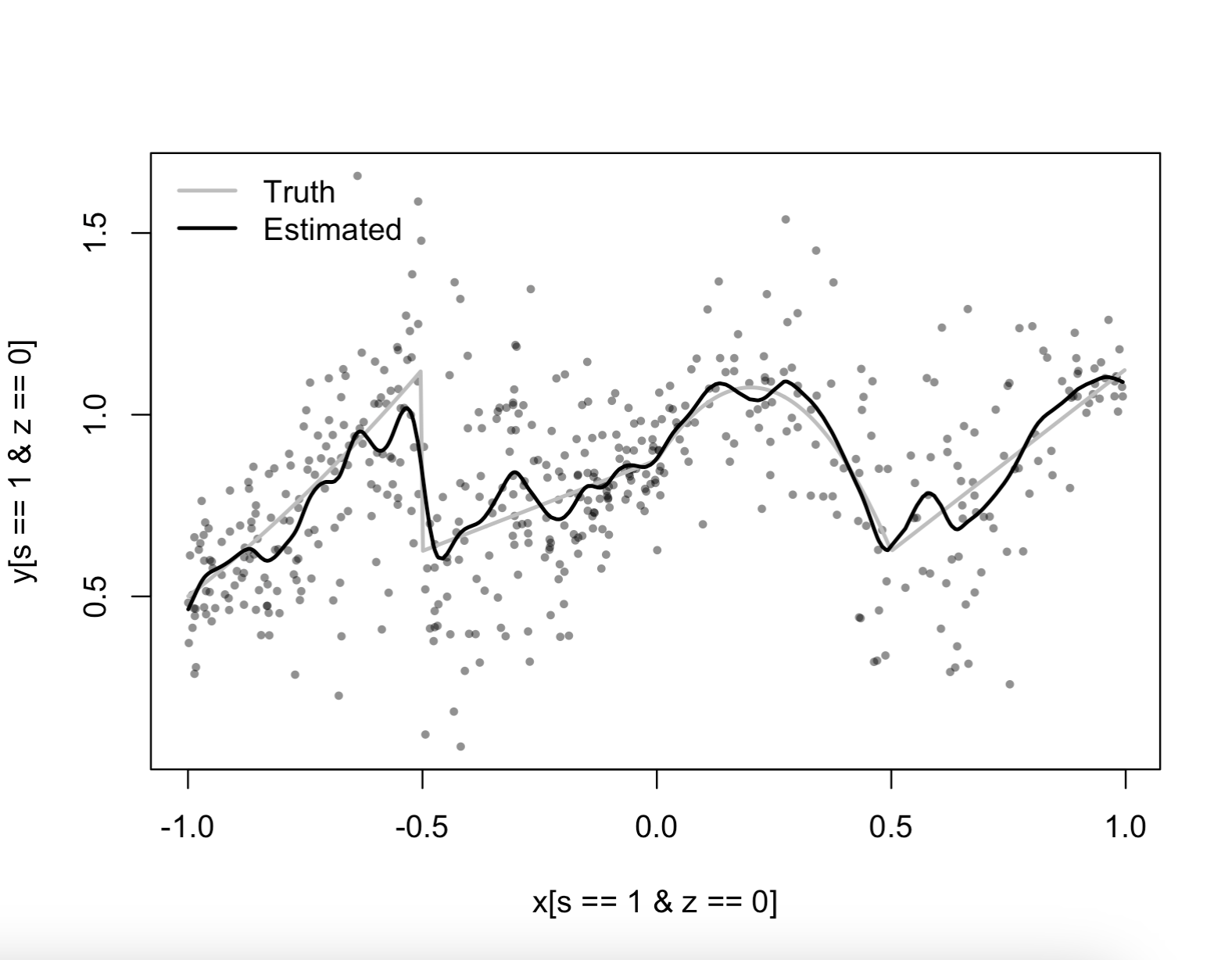}
        \caption{$Z=0$ and $S =1$ and the estimated function $\mu_{01}$}
        \label{fig:mu01}
    \end{subfigure}
    
    \vspace{1em}
    
    \begin{subfigure}{0.45\textwidth}
        \centering
        \includegraphics[width=\textwidth]{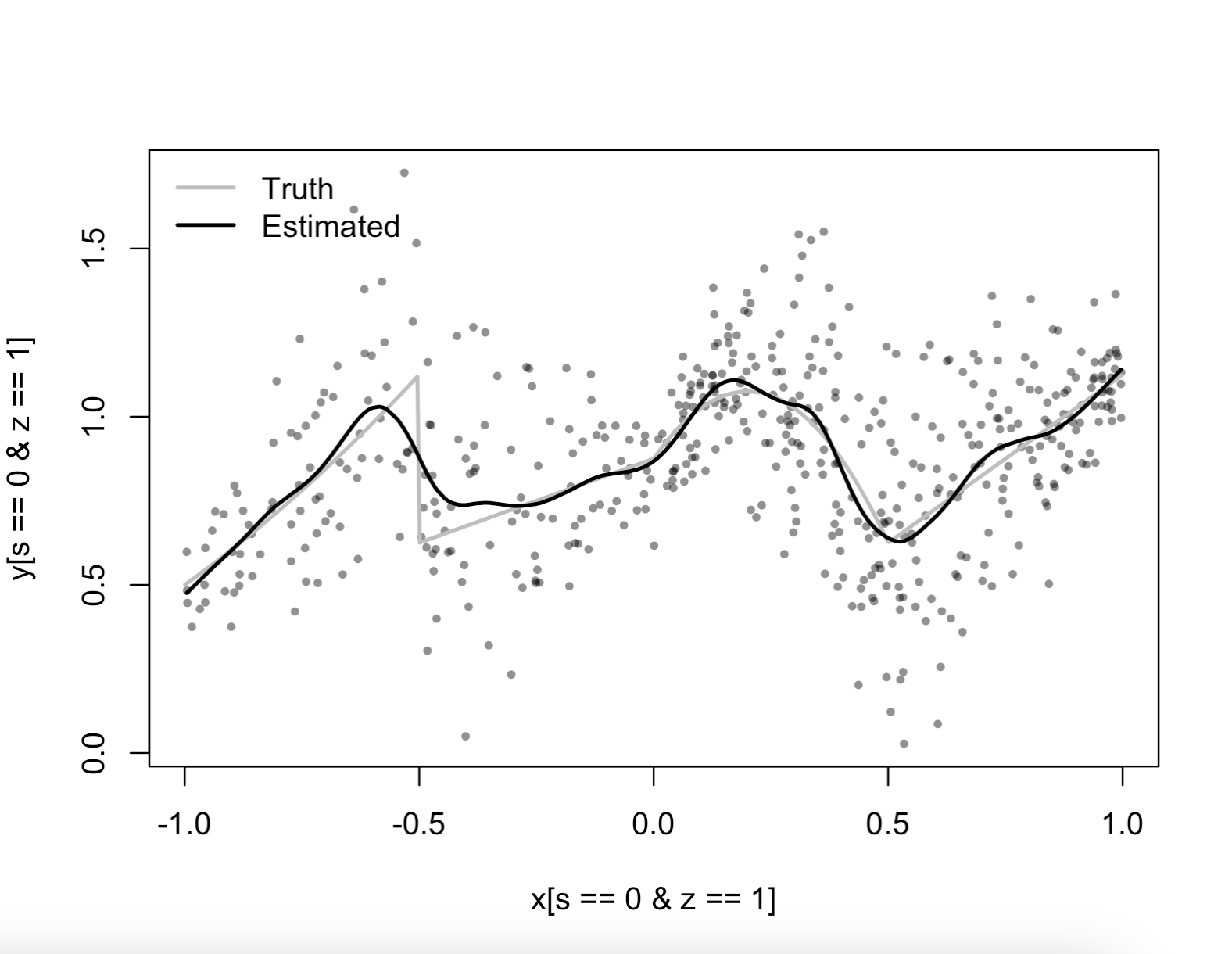}
        \caption{$Z=1$ and $S =0$ and the estimated function $\mu_{10}$}
        \label{fig:mu10}
    \end{subfigure}
    \hfill
    \begin{subfigure}{0.45\textwidth}
        \centering
        \includegraphics[width=\textwidth]{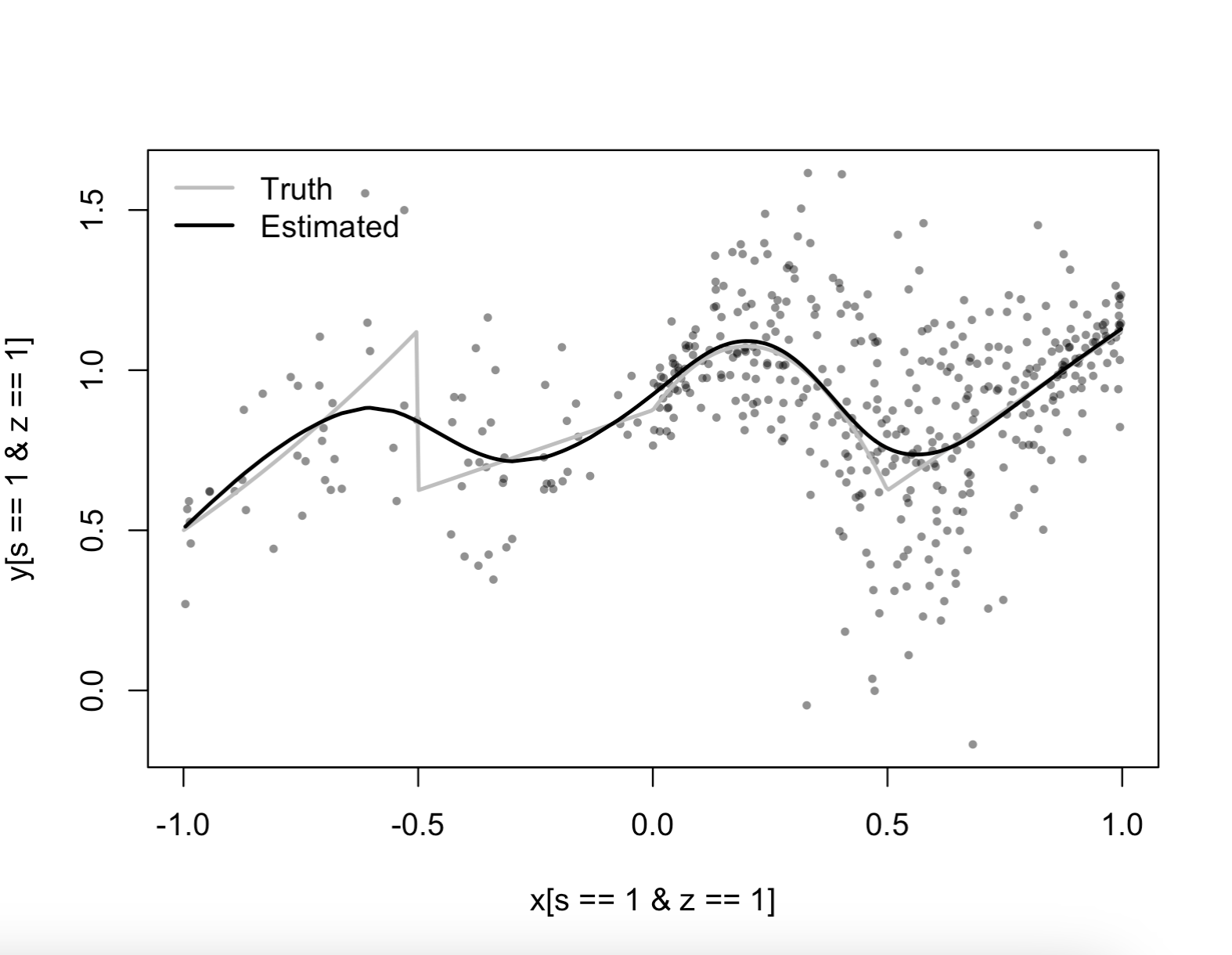}
        \caption{$Z=1$ and $S =1$ and the estimated function $\mu_{11}$}
        \label{fig:mu11}
    \end{subfigure}
    \caption{Plot of simulated subset data of simple example and estimated functions of outcome models}
    \label{fig:example1}
\end{figure}
Figure \ref{fig:example1} shows $n = 2000$ simulated data points, with roughly one-quarter in each subgroup: $Z=0, S=0$ (top left), $Z=0, S=1$ (top right), $Z=1, S=0$ (bottom left), and $Z=1, S=1$ (bottom right). The figure also plots the estimated functions $\hat{\mu}_{00}$, $\hat{\mu}_{01}$, $\hat{\mu}_{10}$, and $\hat{\mu}_{11}$, obtained using the default settings of \texttt{smoothing.spline()} in R.

Based on the data-generating process, control observations ($Z=0$) lie mainly on the left of the distribution, while treated observations ($Z=1$) lie mostly on the right. Within the control group, the $S=0$ subgroup is more concentrated on the left than $S=1$, and within the treated group, the $S=0$ subgroup is less concentrated on the right than $S=1$. As a result, $\hat{\mu}_{10}$ and $\hat{\mu}_{11}$ are over-smoothed in regions with sparse data on the left, while $\hat{\mu}_{00}$ and $\hat{\mu}_{01}$ are under-smoothed in denser regions on the right. In particular, $\hat{\mu}_{00}$ appears sharper than $\hat{\mu}_{01}$ on the left, and $\hat{\mu}_{10}$ appears smoother than $\hat{\mu}_{11}$ on the right.

Hence, the T-learner can be a poor and unnecessarily complex estimator of the true difference, which in this case is constant and equal to zero, and it also suffers from the limitations discussed above. In the next section, we present alternative estimators that improve upon the T-learner in several respects and avoid these issues.

\begin{figure}[]
    \centering
    \begin{subfigure}{0.3\textwidth}
        \centering
        \includegraphics[width=\textwidth]{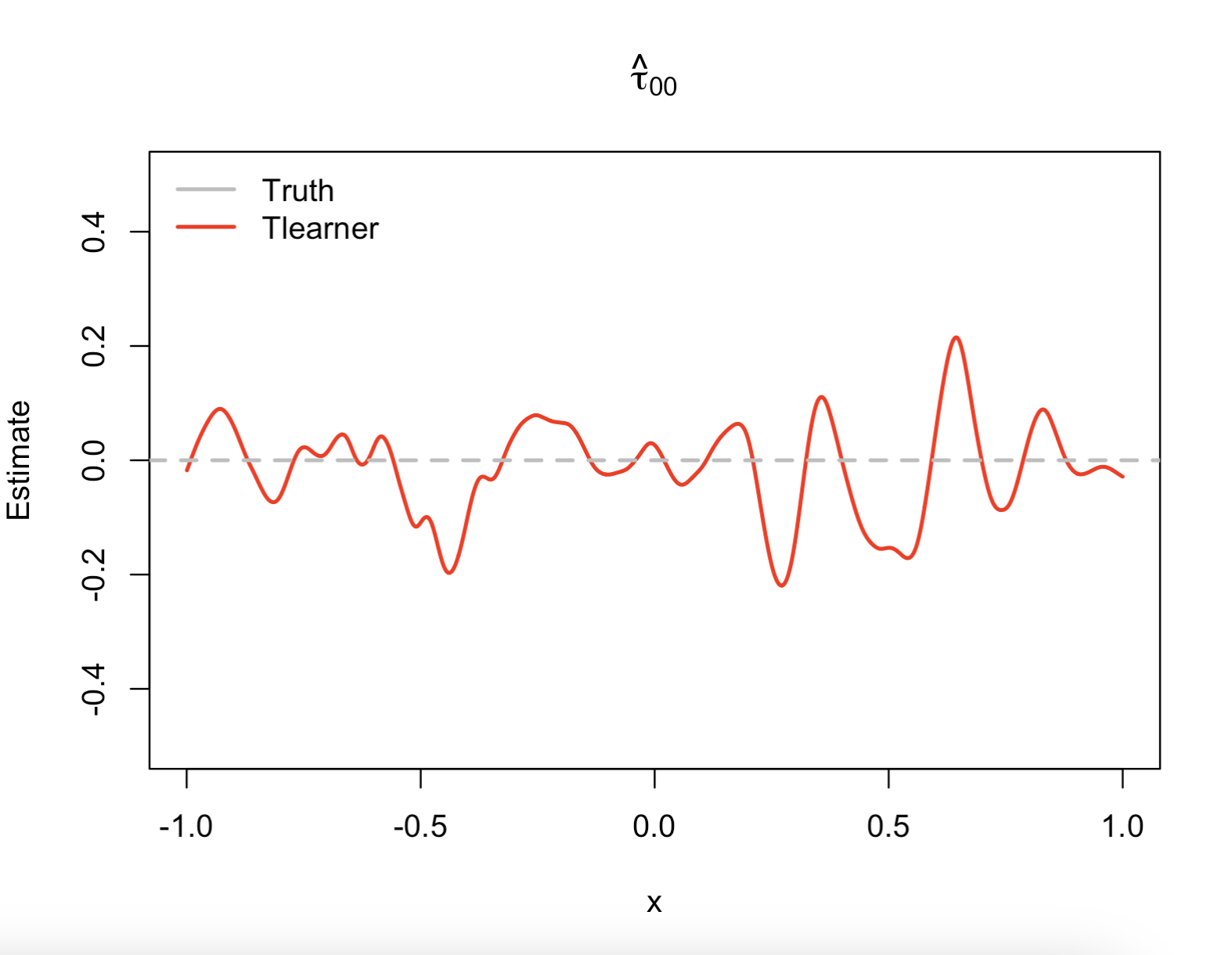}
        \caption{Estimated treatment effect $\hat{\tau}_{00}$}
        \label{fig:left}
    \end{subfigure}
    \hfill
    \begin{subfigure}{0.3\textwidth}
        \centering
        \includegraphics[width=\textwidth]{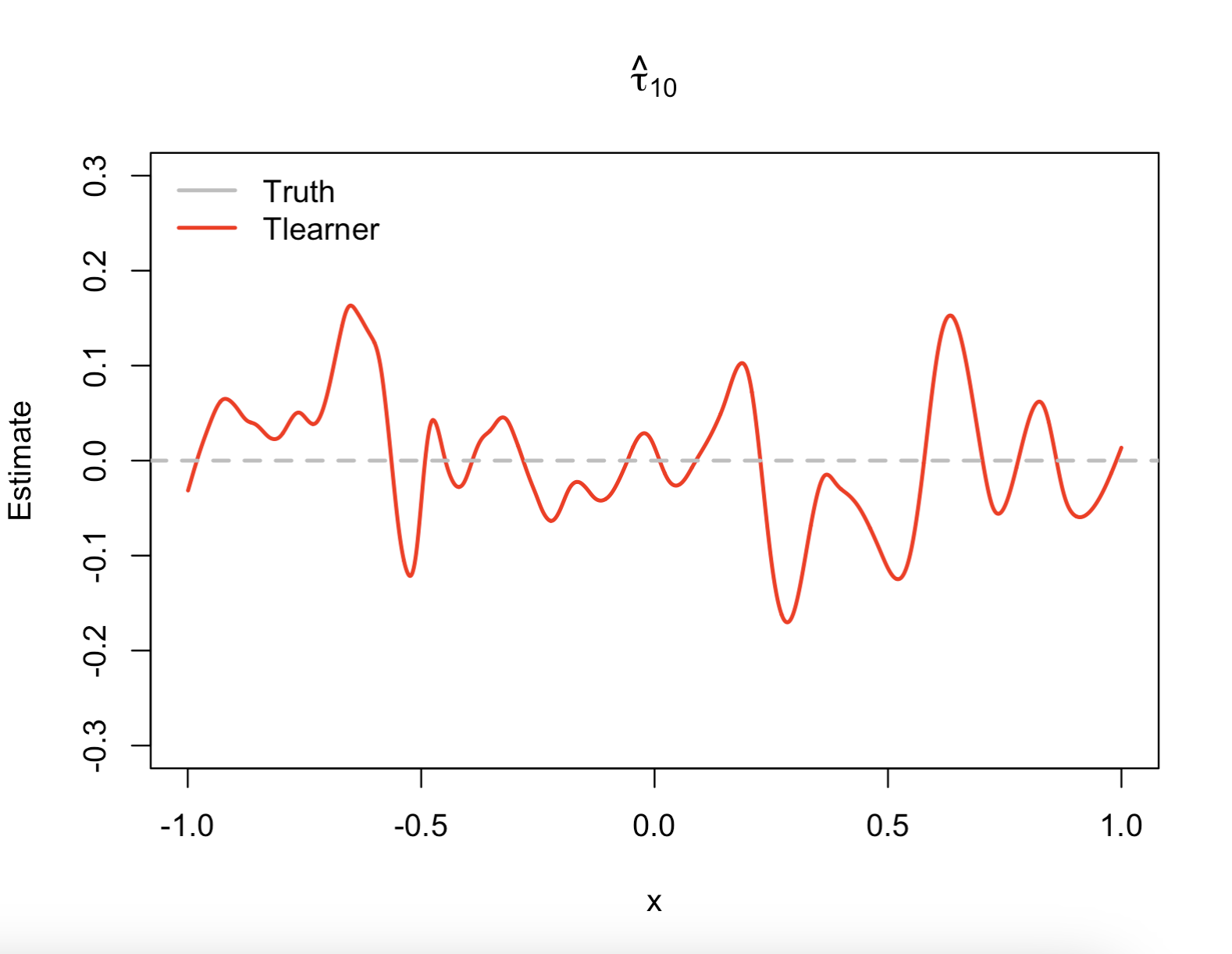}
        \caption{Estimated treatment effect $\hat{\tau}_{10}$}
        \label{fig:middle}
    \end{subfigure}
    \hfill
    \begin{subfigure}{0.3\textwidth}
        \centering
        \includegraphics[width=\textwidth]{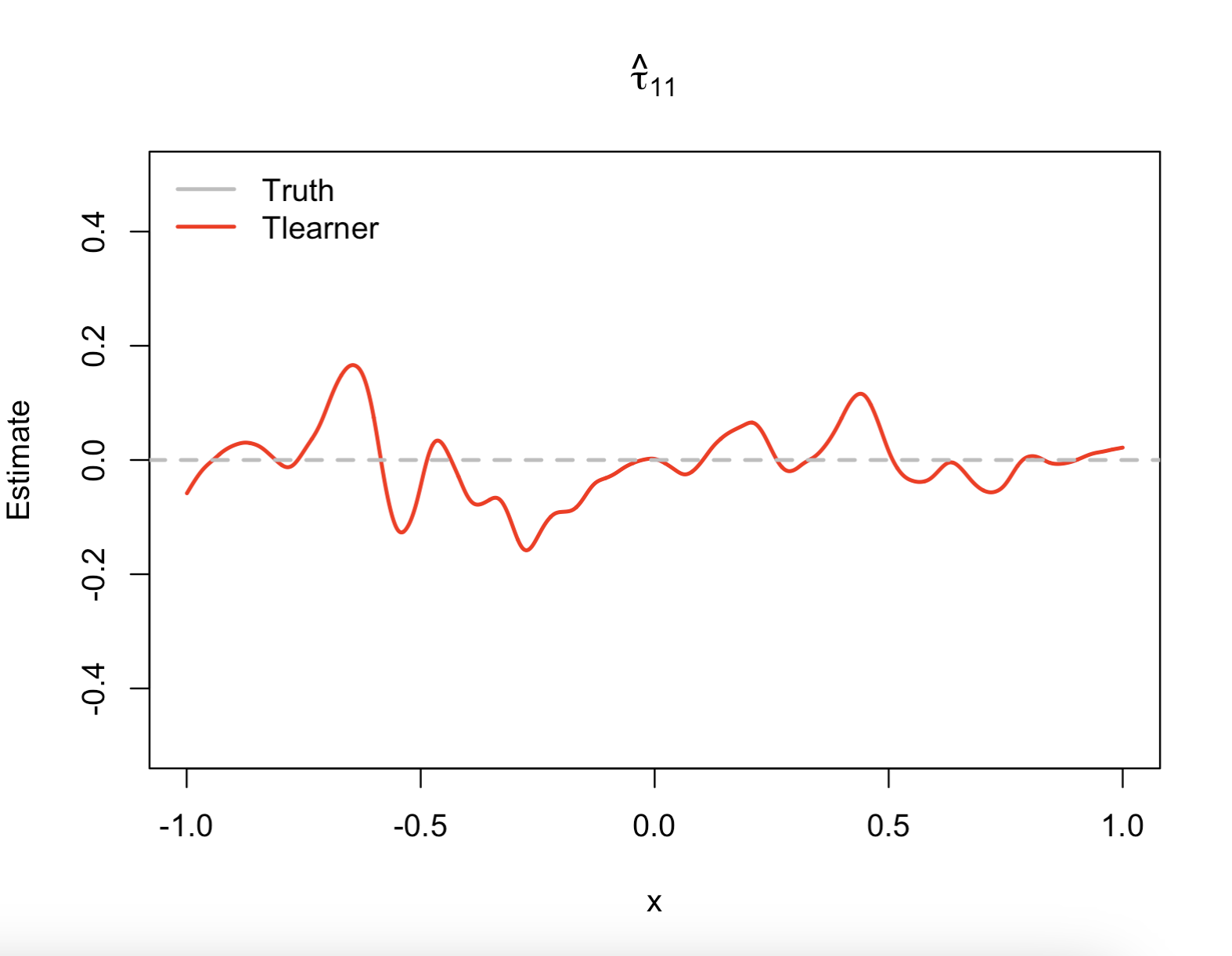}
        \caption{Estimated treatment effect $\hat{\tau}_{11}$}
        \label{fig:right}
    \end{subfigure}
    \caption{Estimated CPCE and the true CPCE in the simple example}
    \label{fig:example2}
\end{figure}

\subsection{Proof of Theorem \ref{the1}}\label{s:proofth2.1}
Theorem \ref{the1} shows that the target function $\tau^{u}(x)$ can be expressed as a difference of observed outcome mean models. Under the monotonicity assumption (Assumption~\ref{assump:monotonicity}) and the principal ignorability assumption (Assumption~\ref{assump:ps}), we obtain a useful Lemma \ref{le0} that links the potential outcomes within each principal stratum to the corresponding observed outcome cells.

\begin{lemma}\label{le0}
Suppose Assumptions~\ref{assump:monotonicity} (monotonicity) and \ref{assump:ps} (principal ignorability) hold. Then, for all $x$, each observed $(Z,S)$ cell is associated with a unique principal stratum or a pair of strata that share the same conditional mean potential outcomes. Specifically, we have
\begin{align*}
\mathbb{E}\{Y(1)\mid Z=1,S=1,X=x\}
&= \mathbb{E}\{Y(1)\mid U\in\{10,11\},X=x\},\\[4pt]
\mathbb{E}\{Y(0)\mid Z=0,S=0,X=x\}
&= \mathbb{E}\{Y(0)\mid U\in\{00,10\},X=x\},\\[4pt]
\mathbb{E}\{Y(1)\mid Z=1,S=0,X=x\}
&= \mathbb{E}\{Y(1)\mid U=00,X=x\}, \text{ and} \\[4pt]
\mathbb{E}\{Y(0)\mid Z=0,S=1,X=x\}
&= \mathbb{E}\{Y(0)\mid U=11,X=x\}
\end{align*}
for all $x$.
\end{lemma}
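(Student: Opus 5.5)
The plan is to translate each observed cell $\{Z=z,S=s\}$ into an event about the latent stratum $U$. Then use treatment ignorability (Assumption~\ref{assump:pi}) to drop the conditioning on $Z$, and principal ignorability (Assumption~\ref{assump:ps}) to merge strata that share a conditional mean. Throughout I take the lemma's implicit standing assumptions to include consistency and treatment ignorability. This is how it is used in Theorem~\ref{the1}, and without Assumption~\ref{assump:pi} the $Z$-conditioning cannot be removed.

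\textbf{Step 1 (cells to strata).} By consistency, $\{Z=1,S=1\}=\{Z=1,S(1)=1\}$. Monotonicity rules out $U=01$, so this event equals $\{Z=1,U\in\{10,11\}\}$. In the same way:
\begin{itemize}
\item $\{Z=0,S=0\}=\{Z=0,U\in\{00,10\}\}$;
\item $\{Z=1,S=0\}=\{Z=1,U=00\}$;
\item $\{Z=0,S=1\}=\{Z=0,U=11\}$.
\end{itemize}

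\textbf{Step 2 (remove $Z$).} Since $U$ is a function of $(S(1),S(0))$, Assumption~\ref{assump:pi} gives $(Y(z),U)\perp\!\!\!\perp Z\mid X$. Hence $\mathbb{E}\{Y(z)\mid Z=z',U\in A,X=x\}=\mathbb{E}\{Y(z)\mid U\in A,X=x\}$ for any set of strata $A$ with positive conditional probability. Combining this with Step 1 immediately yields the third and fourth displayed identities. It also yields the first two in the pooled form
\[
\mathbb{E}\{Y(1)\mid Z=1,S=1,X=x\}=\mathbb{E}\{Y(1)\mid U\in\{10,11\},X=x\},
\]
and the analogous form for $Y(0)$ on $\{00,10\}$.

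\textbf{Step 3 (shared means).} To justify the phrase ``share the same conditional mean'', I write the pooled mean as a mixture:
\[
\mathbb{E}\{Y(1)\mid U\in\{10,11\},X\}=\frac{e^{10}(X)\,\mathbb{E}\{Y(1)\mid U=10,X\}+e^{11}(X)\,\mathbb{E}\{Y(1)\mid U=11,X\}}{e^{10}(X)+e^{11}(X)}.
\]
Assumption~\ref{assump:ps} makes the two component means equal, so the pooled mean equals each of them. The $Y(0)$ case on $\{00,10\}$ is identical.

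\textbf{Main obstacle.} The only delicate point is the positivity needed for these conditional expectations to be well defined, namely $0<\pi(x)<1$ and positive probability of each relevant stratum. I would state these as implicit regularity conditions. Where a stratum has zero probability, the corresponding term drops out of the mixture and the identity holds trivially.
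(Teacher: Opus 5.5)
Your argument is correct and is essentially the paper's proof. Consistency and monotonicity map each observed $(Z,S)$ cell to its set of strata. Assumption~\ref{assump:pi} removes the conditioning on $Z$; the paper's proof also invokes it, even though the lemma lists only monotonicity and principal ignorability. A mixture decomposition with Assumption~\ref{assump:ps} then handles the two mixed cells.

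The only cosmetic difference is where you mix. You mix at the stratum level with weights $e^{10}(X),e^{11}(X)$ after dropping $Z$. The paper instead mixes inside the cell $\{Z=1,S=1\}$, using the $(Z,S)$-conditional form of principal ignorability.

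One minor attribution point: for the cells $(1,1)$ and $(0,0)$, consistency alone already gives $U\in\{10,11\}$ and $U\in\{00,10\}$. So monotonicity is needed only for the two singleton cells. As your Step~2 shows, the four displayed identities hold without principal ignorability; that assumption is needed only for the ``shared mean'' interpretation used in Theorem~\ref{the1}.
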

Given Lemma~\ref{le0}, the proof of Theorem~\ref{the1} is straightforward. 
As an illustration, consider $u=10$. Let
$\mu_{zs}(x) := \mathbb{E}\{Y\mid Z=z,S=s,X = x\}$. Then
\begin{align*}
\tau^{10}(x)
&= \mathbb{E}\{Y(1)\mid U=10,X =x\} - \mathbb{E}\{Y(0)\mid U=10,X =x\} \\
&= \mathbb{E}\{Y(1)\mid Z=1,S=1,X =x\} - \mathbb{E}\{Y(0)\mid Z=0,S=0,X =x\}\\
&= \mathbb{E}\{Y\mid Z=1,S=1,X =x\} - \mathbb{E}\{Y\mid Z=0,S=0,X =x\} \\
&= \mu_{11}(x) - \mu_{00}(x).
\end{align*}
The first equation is by definition, $\tau^{10}(x)
= \mathbb{E}\{Y(1)\mid U=10,X=x\} - \mathbb{E}\{Y(0)\mid U=10,X=x\}$. The second equation is by Lemma~\ref{le0}. The third equation is by the consistency assumption.

Now, let's show the proof of Lemma \ref{le0}.
\begin{proof}
We first show that the left part of Lemma~\ref{le0} can be written in the following form, for all $x$:
\begin{align*}
\mathbb{E}\{Y(1)\mid U\in\{10,11\},X=x\}
&= \mathbb{E}\{Y(1)\mid U\in\{10,11\},Z=1,S=1,X=x\},\\[4pt]
\mathbb{E}\{Y(0)\mid U\in\{00,10\},X=x\}
&= \mathbb{E}\{Y(0)\mid U\in\{00,10\},Z=0,S=0,X=x\},\\[4pt]
\mathbb{E}\{Y(1)\mid U=00,X=x\}
&= \mathbb{E}\{Y(1)\mid U=00,Z=1,S=0,X=x\}, \text{ and} \\[4pt]
\mathbb{E}\{Y(0)\mid U=11,X=x\}
&= \mathbb{E}\{Y(0)\mid U=11,Z=0,S=1,X=x\}.
\end{align*}
We prove one of these equalities as an example; the others follow analogously. 
For the complier stratum $U=10$, we rewrite the conditioning in terms of the potential compliance indicators $(S(1),S(0))=(1,0)$:
\[
\mathbb{E}\{Y(1)\mid U=10,X = x\}
= \mathbb{E}\{Y(1)\mid S(1)=1,S(0)=0,X = x\}.
\]
Under Assumption~\ref{assump:pi}, we can further write
\[
\mathbb{E}\{Y(1)\mid S(1)=1,S(0)=0,X = x\}
= \mathbb{E}\{Y(1)\mid S(1)=1,S(0)=0,Z=1,X = x\},
\]
which, in terms of the principal stratum $U=10$, is equivalently
\[
\mathbb{E}\{Y(1)\mid S(1)=1,S(0)=0,Z=1,X = x\}
= \mathbb{E}\{Y(1)\mid U=10,S=1,Z=1,X = x\}.
\]
So we have $\mathbb{E}\{Y(1)\mid U=10,X = x\} = \mathbb{E}\{Y(1)\mid U=10,S=1,Z=1,X = x\}.$

We now turn to these quantities in more detail. Under monotonicity $S(1) \ge S(0)$, we have 
\begin{equation*}
    \begin{array}{c|cc}
 & S=0 & S=1 \\
\hline
Z=0 & U\in\{00,10\} & U=11 \\
Z=1 & U=00 & U\in\{11,10\} \\
\end{array}
\end{equation*}
From Table, we see that the observed cells $(Z,S) = (1,0)$ and $(Z,S) = (0,1)$ each contain only a single principal stratum, namely $U=00$ and $U=11$, respectively. Thus
$\mathbb{P}(U=00 \mid Z=1,S=0,X = x) = 1,
\qquad
\mathbb{P}(U=11 \mid Z=0,S=1,X = x) = 1,$ and hence
\begin{align*}
\mathbb{E}\{Y(1)\mid Z=1,S=0,X = x\}
&= \mathbb{E}\{Y(1)\mid U=00,Z=1,S=0,X = x\} \text{ and} \\[4pt]
\mathbb{E}\{Y(0)\mid Z=0,S=1,X =x\}
&= \mathbb{E}\{Y(0)\mid U=11,Z=0,S=1,X =x\}.
\end{align*}
The remaining observed cells are mixtures: $\{U=10,11\}$ when $(Z,S)=(1,1)$ and $\{U=00,10\}$ when $(Z,S)=(0,0)$. Thus
\begin{align*}
\mathbb{P}(U=10 \mid Z=1,S=1,X =x) + \mathbb{P}(U=11 \mid Z=1,S=1,X =x) &= 1 \text{ and} \\
\mathbb{P}(U=00 \mid Z=0,S=0,X=x) + \mathbb{P}(U=10 \mid Z=0,S=0,X =x) &= 1.
\end{align*}
We show the first identity; the second follows analogously. Because $(Z,S)=(1,1)$ is a mixture of $U=10$ and $U=11$, we can write
\begin{align*}
&\mathbb{E}\!\bigl[ Y(1) \mid Z=1, S=1, X=x\bigr]\\
&= \mathbb{E}\!\bigl[ Y(1) \mid U=11, Z=1, S=1, X=x \bigr]\,
   \mathbb{P}\!\bigl( U=11 \mid Z=1, S=1, X=x\bigr) \\
&\quad+ \mathbb{E}\!\bigl[ Y(1) \mid U=10, Z=1, S=1, X=x \bigr]\,
   \mathbb{P}\!\bigl( U=10 \mid Z=1, S=1, X=x\bigr)
\end{align*}
Under Assumption~\ref{assump:ps}, the mean potential outcomes are equal across the relevant strata, i.e.
\[
\mathbb{E}\{Y(1)\mid U=11,Z=1,S=1,X =x\}
=
\mathbb{E}\{Y(1)\mid U=10,Z=1,S=1,X =x\}.
\]
Therefore,
\begin{align*}
\mathbb{E}\{Y(1)\mid Z=1,S=1,X =x\}
&=\mathbb{E}\{Y(1)\mid U=11,Z=1,S=1,X =x\} \\
&=\mathbb{E}\{Y(1)\mid U=10,Z=1,S=1,X =x\}.
\end{align*}

\noindent Similarly, for $(Z,S)=(0,0)$, Assumption~\ref{assump:ps} implies
\[
\mathbb{E}\{Y(0)\mid U=00,Z=0,S=0,X =x\}
=
\mathbb{E}\{Y(0)\mid U=10,Z=0,S=0,X= x\},
\]
and hence
\begin{align*}
\mathbb{E}\{Y(0)\mid Z=0,S=0,X =x\}
&=\mathbb{E}\{Y(0)\mid U=00,Z=0,S=0,X =x\} \\
&=\mathbb{E}\{Y(0)\mid U=10,Z=0,S=0,X =x\}.
\end{align*}

\end{proof}

\subsection{Proof of Theorem \ref{the:subset}}\label{s:proofth3.1}
Theorem~\ref{the:subset} shows that our target can be identified from a subset perspective. 
We use the complier stratum $U=10$ as an illustrative example; the arguments for $U=00$ and $U=11$
are analogous. 

For $U=10$, define the pseudo-outcome
\[
\varphi_{\tau^{10}}(W)
= \frac{Z-\pi_{\mathcal{S}_{10}}(X)}{\pi_{\mathcal{S}_{10}}(X)\{1-\pi_{\mathcal{S}_{10}}(X)\}}
  \bigl\{Y-\mu_{ZZ}(X)\bigr\}
  + \mu_{11}(X)-\mu_{00}(X),
\]
where $\pi_{\mathcal{S}_{10}}(x)$ is the subset propensity score and
$\mu_{zz}(x)=\mathbb{E}\{Y\mid Z=z,S=z,X =x\}$ for $z\in\{0,1\}$.
We need to show that
\[
\mathbb{E}\bigl\{\varphi_{\tau^{10}}(W) \mid \mathcal{S}_{10}, X =x\bigr\}
= \tau^{10}(x).
\]

\noindent On $\mathcal{S}_{10}$, we have $\mathbb{P}(Z=1\mid \mathcal{S}_{10},X =x)=\pi_{\mathcal{S}_{10}}(x)$, so by the law of total expectation,
\begin{align*}
\mathbb{E}\{\varphi_{\tau^{10}}(W)\mid \mathcal{S}_{10},X =x\}
&= \mathbb{E}\{\varphi_{\tau^{10}}(W)\mid Z=1,S=1,X =x\}\,\pi_{\mathcal{S}_{10}}(x) \\
&\quad + \mathbb{E}\{\varphi_{\tau^{10}}(W)\mid Z=0,S=0,X =x\}\,\{1-\pi_{\mathcal{S}_{10}}(x)\}.
\end{align*}
Substituting the definition of $\varphi_{\tau^{10}}$ gives
\begin{align*}
\mathbb{E}\{\varphi_{\tau^{10}}(W)\mid \mathcal{S}_{10},X =x\}
&= \mathbb{E}\!\left[
Y-\mu_{11}(X)
\;\Big|\; Z=1,S=1,X =x
\right] \\
&\quad-\mathbb{E}\!\left[
Y-\mu_{00}(X)
\;\Big|\; Z=0,S=0,X =x
\right]
+\mu_{11}(x)-\mu_{00}(x).
\end{align*}
Using 
$\mathbb{E}\{Y-\mu_{11}(X)\mid Z=1,S=1,X =x\}=0$ and 
$\mathbb{E}\{Y-\mu_{00}(X)\mid Z=0,S=0,X =x\}=0$, it follows that
\[
\mathbb{E}\bigl\{\varphi_{\tau^{10}}(W) \mid \mathcal{S}_{10},X =x\bigr\}
= \mu_{11}(x)-\mu_{00}(x)
= \tau^{10}(x).
\]

\subsection{Another way to express $\pi_{\mathcal{S}_{u}}(x)$}\label{s:subpropensity}
There is a simple relationship between the subset propensity scores and the usual
propensity and principal scores. In particular, each subset propensity score can be
expressed as a function of the overall propensity score
\(\pi(x) = \mathbb{P}(Z=1 \mid X =x)\) and the principal scores
\(p_z(x) = \mathbb{P}(S=1 \mid Z=z,X =x)\), \(z \in \{0,1\}\).
For the subsets \(\mathcal{S}_{10} = \{(Z,S)=(1,1),(0,0)\}\),
\(\mathcal{S}_{00} = \{(Z,S)=(1,0),(0,0)\}\), and
\(\mathcal{S}_{11} = \{(Z,S)=(1,1),(0,1)\}\), we have
\begin{align*}
\pi_{\mathcal{S}_{10}}(x)
&:= \mathbb{P}\bigl(Z=1 \mid \mathcal{S}_{10},X =x\bigr)
 = \frac{\pi(x)\,p_1(x)}
        {\pi(x)\,p_1(x) + \{1-\pi(x)\}\{1-p_0(x)\}},\\[4pt]
\pi_{\mathcal{S}_{00}}(x)
&:= \mathbb{P}\bigl(Z=1 \mid \mathcal{S}_{00},X =x\bigr)
 = \frac{\{1-p_1(x)\}\,\pi(x)}
        {\pi(x)\{1-p_1(x)\} + \{1-p_0(x)\}\{1-\pi(x)\}},\\[4pt]
\pi_{\mathcal{S}_{11}}(x)
&:= \mathbb{P}\bigl(Z=1 \mid \mathcal{S}_{11},X =x\bigr)
 = \frac{p_1(x)\,\pi(x)}
        {\pi(x)\,p_1(x) + p_0(x)\{1-\pi(x)\}}.
\end{align*}

Here we use $\pi_{\mathcal{S}_{10}}(x)$ as an example; the derivations for the other two subset propensity scores are analogous. Since 
\(\mathcal{S}_{10} = \{(Z,S)=(1,1),(0,0)\}\), we have
\[
\pi_{\mathcal{S}_{10}}(x)
= \mathbb{P}\bigl(Z=1 \mid \mathcal{S}_{10},X =x\bigr)
= \frac{\mathbb{P}(Z=1,S=1 \mid X =x)}
       {\mathbb{P}(Z=1,S=1 \mid X =x) + \mathbb{P}(Z=0,S=0 \mid X =x)}.
\]
By the definition of conditional probability, $\mathbb{P}\bigl(Z=1 \mid \mathcal{S}_{10},X =x\bigr) = \mathbb{P}(S=1 \mid Z=1,X=x)\,\mathbb{P}(Z=1 \mid X=x)$. In terms of the propensity score $\pi(x)=\mathbb{P}(Z=1\mid X =x)$ and the principal scores
$p_1(x)=\mathbb{P}(S=1\mid Z=1,X =x)$ and $p_0(x)=\mathbb{P}(S=1\mid Z=0,X =x)$, it can be written as
\[
\pi_{\mathcal{S}_{10}}(x)
= \frac{p_1(x)\,\pi(x)}
       {\pi(x)p_1(x) + \{1-\pi(x)\}\{1-p_0(x)\}}.
\]

\subsection{Proof of Theorem \ref{robustness_subset}}\label{s:proofrobsub}

Theorem~\ref{robustness_subset} establishes weak robustness and rate robustness for the subset plug-in limit parameter. We begin by proving the weak robustness property.

\begin{enumerate}
    \item \textbf{Weak robustness:} For the weak robustness property, we prove the result for $\tilde{\tau}_{\mathrm{sub}}^{10}(x)$ as an illustrative example; the arguments for the other two strata are analogous. Double robustness means that $\tilde{\tau}_{\mathrm{sub}}^{10}(x)$ is unbiased if either the outcome mean model or the subset propensity score model is correctly specified. 
Under a correctly specified outcome mean model, the conditional expectation of the
pseudo-outcome $\tilde\varphi_{\tau^{10}}$ given $(Z,S,X)$ reduces to the target
contrast. Specifically,
\begin{align*}
&\mathbb{E}\!\left[\tilde\varphi_{\tau^{10}}(W) \mid Z=S,X =x\right]\\
&= \mathbb{E}\!\left[
\frac{Z-\tilde\pi_{\mathcal{S}_{10}}(X)}
     {\tilde\pi_{\mathcal{S}_{10}}(X)\{1-\tilde\pi_{\mathcal{S}_{10}}(X)\}}
\{Y-\tilde\mu_{ZZ}(X)\}
\;\Bigm|\; Z=S,X = x
\right] + \tilde\mu_{11}(x)-\tilde\mu_{00}(x) \\
&= \frac{\pi_{\mathcal{S}_{10}}(x)}{\tilde\pi_{\mathcal{S}_{10}}(x)}
   \,\mathbb{E}\!\left[\,Y-\tilde\mu_{11}(X)\;\Bigm|\; Z=1,S=1,X =x\right] \\
&\quad + \frac{1-\pi_{\mathcal{S}_{10}}(x)}{1-\tilde\pi_{\mathcal{S}_{10}}(x)}
   \,\mathbb{E}\!\left[\,Y-\tilde\mu_{00}(X)\;\Bigm|\; Z=0,S=0,X =x\right] + \tilde\mu_{11}(x)-\tilde\mu_{00}(x).
\end{align*}
The first equality holds by definition of $\tilde\varphi_{\tau^{10}}(W)$. 
The second equality follows by conditioning on $(Z,S)$ under the event $\{Z=S\}$ and using that, on $\{Z=S\}$,
$(Z,S)\in\{(1,1),(0,0)\}$. Specifically, we apply iterated expectation and split by the two cases.
If the outcome model is correctly specified, then
$\tilde\mu_{11}(x)=\mu_{11}(x)$ and $\tilde\mu_{00}(x)=\mu_{00}(x)$, so
$\mathbb{E}[Y-\tilde\mu_{11}(X)\mid Z=1,S=1,X =x]=0$ and
$\mathbb{E}[Y-\tilde\mu_{00}(X)\mid Z=0,S=0,X =x]=0$.
Hence the first two terms vanish and
\[
\mathbb{E}\bigl[\tilde\varphi_{\tau^{10}} \mid Z=S,X =x\bigr]
= \mu_{11}(x)-\mu_{00}(x)
= \tau_{10}(x),
\]
showing that $\tilde\tau^{10}_{\mathrm{sub}}(x)$ is unbiased even if
$\tilde\pi_{\mathcal{S}_{10}}(x)$ is misspecified.

Conversely, suppose the subset propensity score is correctly specified,
$\tilde\pi_{\mathcal{S}_{10}}(x)=\pi_{\mathcal{S}_{10}}(x)$, while the outcome
model $\tilde\mu_{zz}(x)$ may be misspecified. Using the same decomposition, we can
rewrite
\begin{equation*}
\begin{split}
\mathbb{E}\bigl[\tilde\varphi_{\tau^{10}} \mid Z=S,X =x\bigr]
&= 
   \mathbb{E}[Y\mid Z=1,S=1,X =x]-\tilde\mu_{11}(x)\\
&\quad - 
   \mathbb{E}[Y\mid Z=0,S=0,X =x]-\tilde\mu_{00}(x)
   + \tilde\mu_{11}(x)-\tilde\mu_{00}(x).
\end{split}
\end{equation*}
A simple algebraic rearrangement shows that the terms involving
$\tilde\mu_{11}(x)$ and $\tilde\mu_{00}(x)$ cancel, yielding
\[
\mathbb{E}\bigl[\tilde\varphi_{\tau^{10}} \mid Z=S,X =x\bigr]
= \mu_{11}(x)-\mu_{00}(x)
= \tau_{10}(x),
\]
so $\tilde\tau^{10}_{\mathrm{sub}}(x)$ remains unbiased even when the outcome
model is misspecified, provided $\tilde\pi_{\mathcal{S}_{10}}(x)$ is correct.
Together, these two cases establish the weak double robustness of the subset plug-in limit parameter.

\item  \textbf{Rate robustness:} 
The detailed bias bounds for the subset plug-in limit parameters are
\begin{align*}
\tilde\tau^{00}_{\mathrm{sub}}(x)-\tau^{00}(x)
&= O_p\!\left(
    \bigl|\tilde\pi_{\mathcal{S}_{00}}(x)-\pi_{\mathcal{S}_{00}}(x)\bigr|\,
\max_{z \in \{0,1\}}|\tilde\mu_{z0}(x)-\mu_{z0}(x)|
  \right),\\[4pt]
\tilde\tau^{10}_{\mathrm{sub}}(x)-\tau^{10}(x)
&= O_p\!\left(
    \bigl|\tilde\pi_{\mathcal{S}_{10}}(x)-\pi_{\mathcal{S}_{10}}(x)\bigr|\,
\max_{z \in \{0,1\}}|\tilde\mu_{zz}(x)-\mu_{zz}(x)|
  \right), \text{ and} \\[4pt]
\tilde\tau^{11}_{\mathrm{sub}}(x)-\tau^{11}(x)
&= O_p\!\left(
    \bigl|\tilde\pi_{\mathcal{S}_{11}}(x)-\pi_{\mathcal{S}_{11}}(x)\bigr|\,
\max_{z \in \{0,1\}}|\tilde\mu_{z1}(x)-\mu_{z1}(x)|
  \right).
\end{align*}
We use $\tilde\pi_{\mathcal{S}_{10}}$ as an illustrative example; the arguments for the other two biases are analogous.
We use $\tilde\pi_{\mathcal{S}_{10}}$ as an illustrative example; the arguments for
$\tilde\pi_{\mathcal{S}_{00}}$ and $\tilde\pi_{\mathcal{S}_{11}}$ are analogous.
Conditioning on $(X,Z=S)$, we have
\begin{align*}
\mathbb{E}\!\left[\tilde\varphi_{\tau^{10}}\mid Z=S, X = x\right]
&= \tilde\mu_{11}(x)-\tilde\mu_{00}(x) + \frac{\pi_{\mathcal{S}_{10}}(x)}{\tilde\pi_{\mathcal{S}_{10}}(x)}
\bigl(\mu_{11}(x)-\tilde\mu_{11}(x)\bigr) \\
&\quad - \frac{1-\pi_{\mathcal{S}_{10}}(x)}{1-\tilde\pi_{\mathcal{S}_{10}}(x)}
\bigl(\mu_{00}(x)-\tilde\mu_{00}(x)\bigr).
\end{align*}

Subtracting $\mu_{11}(x)-\mu_{00}(x)$ yields the exact identity
\begin{equation}\label{eq:bias_sub10}
    \begin{aligned}
    \tilde{\tau}_{\mathrm{sub}}^{10}(x) - \tau^{10}(x)
    &= \frac{1}{\tilde\pi_{\mathcal{S}_{10}}(x)}
       \bigl(\tilde{\pi}_{\mathcal{S}_{10}}(x)-\pi_{\mathcal{S}_{10}}(x)\bigr)
       \bigl(\tilde{\mu}_{11}(x)-\mu_{11}(x)\bigr)\\
    &\quad+ \frac{1}{1-\tilde\pi_{\mathcal{S}_{10}}(x)}
       \bigl(\tilde{\pi}_{\mathcal{S}_{10}}(x)-\pi_{\mathcal{S}_{10}}(x)\bigr)
       \bigl(\tilde{\mu}_{00}(x)-\mu_{00}(x)\bigr).
    \end{aligned}
\end{equation}

From \eqref{eq:bias_sub10} and the overlap assumption
$\tilde\pi_{\mathcal{S}_{10}}(x),\,1-\tilde\pi_{\mathcal{S}_{10}}(x)\ge \epsilon$, we obtain
\begin{align*}
\bigl|\tilde{\tau}_{\mathrm{sub}}^{10}(x) - \tau^{10}(x)\bigr|
&\le \frac{1}{\epsilon}\,
\bigl|\tilde{\pi}_{\mathcal{S}_{10}}(x)-\pi_{\mathcal{S}_{10}}(x)\bigr| \\
&\quad\times
\Bigl(
  \bigl|\tilde{\mu}_{11}(x)-\mu_{11}(x)\bigr|
  + \bigl|\tilde{\mu}_{00}(x)-\mu_{00}(x)\bigr|
\Bigr)
\qquad \text{a.s.}
\end{align*}

Then 
$$
\tilde\tau^{00}_{\mathrm{sub}}(x)-\tau^{00}(x)= O_p\!\left(
    \bigl|\tilde\pi_{\mathcal{S}_{00}}(x)-\pi_{\mathcal{S}_{00}}(x)\bigr|\,
\max_{z \in \{0,1\}}|\tilde\mu_{z0}(x)-\mu_{z0}(x)|
  \right)
$$
The other bias terms can be calculated similarly:
\begin{equation}\label{eq:bias_sub00}
    \begin{aligned}
    \tilde{\tau}_{\mathrm{sub}}^{00}(x) - \tau^{00}(x)
    &= \frac{1}{\tilde\pi_{\mathcal{S}_{00}}(x)}
       \bigl(\tilde{\pi}_{\mathcal{S}_{00}}(x)-\pi_{\mathcal{S}_{00}}(x)\bigr)
       \bigl(\tilde{\mu}_{10}(x)-\mu_{10}(x)\bigr)\\
    &\quad+ \frac{1}{1-\tilde\pi_{\mathcal{S}_{00}}(x)}
       \bigl(\tilde{\pi}_{\mathcal{S}_{00}}(x)-\pi_{\mathcal{S}_{00}}(x)\bigr)
       \bigl(\tilde{\mu}_{00}(x)-\mu_{00}(x)\bigr).
    \end{aligned}
\end{equation}
\begin{equation}\label{eq:bias_sub11}
    \begin{aligned}
    \tilde{\tau}_{\mathrm{sub}}^{11}(x) - \tau^{11}(x)
    &= \frac{1}{\tilde\pi_{\mathcal{S}_{11}}(x)}
       \bigl(\tilde{\pi}_{\mathcal{S}_{11}}(x)-\pi_{\mathcal{S}_{11}}(x)\bigr)
       \bigl(\tilde{\mu}_{11}(x)-\mu_{11}(x)\bigr)\\
    &\quad+ \frac{1}{1-\tilde\pi_{\mathcal{S}_{11}}(x)}
       \bigl(\tilde{\pi}_{\mathcal{S}_{11}}(x)-\pi_{\mathcal{S}_{11}}(x)\bigr)
       \bigl(\tilde{\mu}_{01}(x)-\mu_{01}(x)\bigr).
    \end{aligned}
\end{equation}

\subsection{Proof of Theorem \ref{the:eif}}\label{s:proof5}
Theorem~\ref{the:eif} shows that the target $\tau^{10}(x)$ can be identified from the corresponding component of the efficient influence function for the principal causal effect (PCE).

We first present the identification proof for the CPCEs. We prove the result for $\tau^{10}(x)$ as an illustrative example; the arguments for the other two strata are analogous. We then derive the corresponding identities for the key component quantities. For convenience, we use $\psi_{a, f(Z,S,Y)} \equiv \psi_{a, f(Z,S,Y)}(W)$.

From \cite{jiang2022multiply}, we know $e^{10}(x) = p_1(x) - p_0(x).$
The EIF components are
\[
\phi_{1,10}(W)
= \frac{e^{10}(X)}{p_1(X)} \psi_{1,YS}
  -\mu_{11}(X)\biggl\{\psi_{0,S}
    -\frac{p_0(X)}{p_1(X)} \psi_{1,S}\biggr\} \text{ and} 
\]
\[
\phi_{0,10}(W)
= \frac{e^{10}(X)}{1-p_0(X)} \psi_{0,Y(1-S)}
  -\mu_{00}(X)\biggl\{\psi_{1,1-S}
    -\frac{1-p_1(X)}{1-p_0(X)} \psi_{0,1-S}\biggr\}.
\]
The corresponding augmented
scores $\psi_{1,S}$, $\psi_{0,S}$, $\psi_{1,1-S}$, $\psi_{0,1-S}$, $\psi_{1, YS}$, $\psi_{0, YS}$, and
$\psi_{1, Y(1-S)}$, $\psi_{0, Y(1-S)}$ have conditional expectations
\begin{equation}\label{eq:den_id}
    \begin{aligned}
        \mathbb{E}[\psi_{1,S}\mid X = x]=p_1(x),& \quad \mathbb{E}[\psi_{0,S}\mid X = x]=p_0(x),\\
         \mathbb{E}[\psi_{1,1-S}\mid X = x]=1-p_1(x), & \quad
\mathbb{E}[\psi_{0,1-S}\mid X = x]=1-p_0(x),\\
\mathbb{E}[\psi_{1, YS}\mid X = x]=\mu_{11}(x)p_1(x),&\quad \mathbb{E}[\psi_{0, YS}\mid X = x]=\mu_{01}(x)p_0(x),\\
\mathbb{E}[\psi_{1, Y(1-S)}\mid X = x]=\mu_{10}(x)\{1-p_1(x)\}, & \quad \mathbb{E}[\psi_{0, Y(1-S)}\mid X = x]=\mu_{00}(x)\{1-p_0(x)\}.
    \end{aligned}
\end{equation}
These identities express the principal scores $p_z(X)$, as well as products involving the outcome means and principal scores, as conditional expectations of components of the EIF.
Using the identities \eqref{eq:den_id}, the expectation is 
\begin{equation*}
    \begin{split}
        \mathbb{E}[\phi_{1,10}(W)\mid X = x]
        &= \frac{e^{10}(x)}{p_1(x)}\,\mathbb{E}[\psi_{1,YS}\mid X = x]\\
         & -\mu_{11}(x)\biggl\{\mathbb{E}[\psi_{0,S}\mid X = x]
           -\frac{p_0(x)}{p_1(x)} \mathbb{E}[\psi_{1,S}\mid X = x]\biggr\}\\
        &= e^{10}(x) \mu_{11}(x).
    \end{split}
\end{equation*}
and
\begin{align*}
\mathbb{E}\!\left[\phi_{0,10}(W)\mid X = x\right]
&= \frac{e^{10}(x)}{1-p_0(x)}\,\mathbb{E}\!\left[\psi_{0,Y(1-S)}\mid X = x\right] \\
&\quad - \mu_{00}(x)\Biggl(
\mathbb{E}\!\left[\psi_{1,1-S}\mid X = x\right]
- \frac{1-p_1(x)}{1-p_0(x)}\,\mathbb{E}\!\left[\psi_{0,1-S}\mid X = x\right]
\Biggr) \\
&= e^{10}(x)\,\mu_{00}(x).
\end{align*}

Moreover,
\[
\mathbb{E}(\psi_{1,S}-\psi_{0,S}\mid X = x)
= p_1(x)-p_0(x)
= e^{10}(x).
\]
Hence
\[
\tau^{10}(X)
= \frac{\mathbb{E}\big[\phi_{1,10}-\phi_{0,10}\mid X = x\big]}
       {\mathbb{E}\big(\psi_{1,S}-\psi_{0,S}\mid X = x\big)},
\]
so $\tau^{10}(x)$ is identified as a ratio of EIF components.

\quad We now turn to the proofs for the component quantities. We present one representative cases as examples; the remaining cases follow by analogous arguments.

First, for $f(Y,S,X)=S$, we have $        \psi_{1,S}(w)
        =\frac{\mathbf{1}(Z=1)\{s-p_z(x)\}}{\mathbb{P}(Z=1 \mid X =x)}+p_z(x).$ After you take expectation, we have 
\begin{equation*}
    \begin{split}
        \mathbb{E}[\psi_{1,S}\mid X = x]
        &= \mathbb{E}\left[
             \frac{\mathbf{1}(Z=1)\{S-p_z(x)\}}
                  {\mathbb{P}(Z=1 \mid X =x)}
             \Bigm| X =x
           \right]+p_z(x)\\
        &= \frac{ \mathbb{P}(Z=1 \mid X =x)}{\mathbb{P}(Z=1 \mid X =x)}
           \mathbb{E}\bigl[S-p_z(x)\mid Z=1, X =x\bigr]\,
           + p_z(x)\\
        &= p_1(x).
    \end{split}
\end{equation*}
The second equation follows from the law of iterated expectations. 

In summary, Theorem~\ref{the:eif} shows that each principal effect
$\tau^{u}(x)$, $u\in\{00,10,11\}$, can be written as the ratio of conditional
expectations of suitable EIF components, constructed from the scores
$\psi_{a, f(Y, S,X)}$ defined above.
\end{enumerate}

\subsection{Proof of Theorem \ref{robustness_eif}}\label{s:proof6}
Theorem \ref{robustness_eif} establishes weak robustness and rate robustness for the EIF plug-in limit parameter. We begin by proving the weak robustness property.

\begin{enumerate}
    \item \textbf{Weak robustness} \\
For the weak robustness property, we prove the result for
$\tilde{\tau}_{\mathrm{eif}}^{00}(x)$ as an illustrative example; the arguments
for the other two strata are analogous. Double robustness here means that
$\tilde{\tau}_{\mathrm{eif}}^{00}(x)$ is unbiased if either the outcome mean
model is correctly specified, or the propensity score and principal score
models are correctly specified.

The EIF-based population functional is
\[
\tilde{\tau}^{00}_{\mathrm{eif}}(x)
= \mathbb{E}\!\left[
  \frac{\tilde{\phi}_{1,00}(W)-\tilde{\phi}_{0,00}(W)}
       {\mathbb{E}(1-\tilde{\psi}_{1,S}\mid X)}
  \,\Bigm|\,X =x
\right],
\]
where
\begin{align*}
\tilde{\phi}_{1,00}(W)
&= \tilde{\psi}_{1, Y(1-S)}(W),\\
\tilde{\phi}_{0,00}(W)
&= \frac{\tilde e^{00}(X)}{1-\tilde p_0(X)}\,\tilde{\psi}_{0, Y(1-S)}(W) + \tilde{\mu}_{00}(X)\Biggl(
\tilde{\psi}_{1, 1-S}(W)
-\frac{1-\tilde p_1(X)}{1-\tilde p_0(X)}\,\tilde{\psi}_{0, 1-S}(W)
\Biggr).
\end{align*}

Thus $\tilde{\tau}^{00}_{\mathrm{eif}}(x)$ is obtained by taking the EIF
representation for $\tau^{00}(x)$ and replacing the nuisance functions with
their plug-in counterparts.

\paragraph{Case (i): outcome model misspecified, propensity/principal scores correct.}
Assume
\[
\begin{aligned}
\tilde{\mu}_{10}(x) &\neq \mu_{10}(x), &
\tilde{\mu}_{00}(x) &\neq \mu_{00}(x), \\
\tilde{\pi}(x)      &=    \pi(x),      &
\tilde{p}_1(x)      &=    p_1(x),      &
\tilde{p}_0(x)      &=    p_0(x).
\end{aligned}
\]
Note that $\tilde{\psi}_{1,S}$ does not involve the outcome model, so
\[
\mathbb{E}[\tilde{\psi}_{1,S}(W)\mid X=x] = p_1(x)
\quad\Rightarrow\quad
\mathbb{E}[1-\tilde{\psi}_{1,S} (W)\mid X=x] = 1-p_1(x) = e^{00}(x).
\]

For $f(Y,S,X)=Y(1-S)$, the corresponding augmented score
$\tilde{\psi}_{1,Y(1-S)}$, $\tilde{\psi}_{0,Y(1-S)}$ have the usual double robustness property with respect
to the outcome regression and the principal score. In particular,
\begin{align*}
\mathbb{E}\!\left[\tilde{\psi}_{1,Y(1-S)}\mid X =x\right]
&= \mathbb{E}\!\left[
\frac{\mathbf{1}(Z=1)}{\mathbb{P}(Z=1\mid X)}
\Bigl\{Y(1-S)-\tilde{\mu}_{z0}(X)\{1-p_z(X)\}\Bigr\}
\Bigm| X =x\right]  \\
&\quad + \tilde{\mu}_{z0}(x)\{1-p_z(x)\}\\
&= \mathbb{E}\!\left[
Y(1-S)-\tilde{\mu}_{z0}(X)\{1-p_z(X)\}\mid Z= 1,X =x\right]\\
&\quad + \tilde{\mu}_{z0}(x)\{1-p_z(x)\}\\
&= \mu_{10}(x)\{1-p_1(x)\}.
\end{align*}
The first equality follows by definition. The second equality follows from a straightforward conditional expectation calculation.

Therefore, substituting the above identities, we obtain
\begin{equation*}
\begin{split}
\mathbb{E}[\tilde{\phi}_{0,00} (W)\mid X = x]
&= \frac{e^{00}(x)}{1-p_0(x)} \mathbb{E}[\tilde{\psi}_{0, Y(1-S)}(W)\mid X = x]\\
&\quad + \tilde{\mu}_{00}(x)\Big\{
      \mathbb{E}[\tilde{\psi}_{1, 1-S}(W)\mid X = x]
      -\frac{1-p_1(x)}{1-p_0(x)} \mathbb{E}[\tilde{\psi}_{0, 1-S}(W)\mid X = x]
    \Big\}\\
&= e^{00}(x)\mu_{00}(x).
\end{split}
\end{equation*}

A symmetric argument gives
\[
\mathbb{E}[\tilde{\phi}_{1,00}(W)\mid X =x] = e^{00}(x)\mu_{10}(x).
\]
Therefore, 
\begin{equation*}
\mathbb{E}\!\left[\frac{\tilde{\phi}_{1,00}-\tilde{\phi}_{0,00}}
                       {\mathbb{E}(1-\tilde{\psi}_{1,S}\mid X)}
                 \Bigm| X =x\right]
= \frac{e^{00}(x)\mu_{10}(X)-e^{00}(X)\mu_{00}(X)}{e^{00}(X)}
= \tau^{00}(X),
\end{equation*}
showing that $\tilde{\tau}^{00}_{\mathrm{EIF}}(X)$ remains unbiased for $\tau^{00}(X)$
even when the outcome model is misspecified, provided the propensity and principal
score models are correct.

\paragraph{Case (ii): propensity/principal scores misspecified, outcome model correct.}
We now consider the complementary scenario, where the propensity and principal
score models may be misspecified, but the outcome regressions are correctly
specified. Specifically, suppose
\[
\begin{aligned}
\tilde{\pi}(x) \neq \pi(x), &\qquad
\tilde{p}_1(x) \neq p_1(x)\ \text{or}\ \tilde{p}_0(x) \neq p_0(x),\\
\tilde{\mu}_{10}(x) = \mu_{10}(x), &\qquad
\tilde{\mu}_{00}(x) = \mu_{00}(x).
\end{aligned}
\]

Under misspecification of the propensity score and principal score, we can similarly calculate compotent expectation.

First, for $f(Y,S,X)=1-S$, we have
\begin{equation*}
\begin{split}
\mathbb{E}[\tilde{\psi}_{1, 1-S}\mid X =x] 
&= \mathbb{E}\!\left[
     \frac{\mathbf{1}(Z=1)\{1-S - (1- \tilde p_1(X))\}}{\tilde{\pi}(X)}
     \,\middle|\; X =x
   \right]+ 1- \tilde p_1(x)\\
&= \frac{\pi(x)}{\tilde{\pi}(x)}\big(\tilde p_1(x)-p_1(x)\big)+1-\tilde p_1(x).
\end{split}
\end{equation*}

Next, for $f(Y,S,X)=Y(1-S)$, we have
\begin{equation*}
\begin{split}
\mathbb{E}[\tilde{\psi}_{1, Y (1-S)}\mid X =x] 
&= \mathbb{E}\!\left[
     \frac{\mathbf{1}(Z=1)\{Y (1-S)-\mu_{10}(X) (1-\tilde p_1(X))\}}{\tilde{\pi}(X)}
     \,\middle|\;X =x
   \right] \\
  & \quad + \mu_{10}(x) (1-\tilde p_1(x))\\
  &= \frac{\pi(x)}{\tilde{\pi}(x)}\mu_{10}(x)\big(\tilde p_1(x) - p_1(x)\big)
   + \mu_{10}(x)(1- \tilde p_1(x))\\
&= \mu_{10}(x)\,\mathbb{E}[\tilde{\psi}_{1, 1-S}\mid X =x].
\end{split}
\end{equation*}

The second equality follows from a direct calculation. Since we have already computed
$\mathbb{E}\!\left[\tilde{\psi}_{1, 1-S}\mid X=x\right]$, the above display implies that the two scores are proportional.

A completely analogous calculation for $z=0$ yields
\[
\mathbb{E}[\tilde{\psi}_{0, Y(1-S)}\mid X =x]
= \mu_{00}(X)\,\mathbb{E}[\tilde{\psi}_{0, 1-S}\mid X =x].
\]

Substituting these identities into the EIF component for $u=00$,
\begin{equation*}
\begin{split}
\mathbb{E}[\tilde{\phi}_{0,00}\mid X =x] 
&= \frac{1-\tilde{p}_{1}(x)}{1-\tilde{p}_0(x)} \mathbb{E}[\tilde{\psi}_{0, Y(1-S)}\mid X =x]\\
&\quad + \mu_{00}(x)\left\{\mathbb{E}[\tilde{\psi}_{1, 1-S}\mid X =x]
      -\frac{1-\tilde{p}_1(X)}{1-\tilde{p}_0(X)} \mathbb{E}[\tilde{\psi}_{0, 1-S}\mid X =x]\right\}\\
&= \mu_{00}(x) \,\mathbb{E}[\tilde{\psi}_{1, 1-S}\mid X =x].
\end{split}
\end{equation*}
Similarly, we have
\[
\mathbb{E}[\tilde{\phi}_{1,00}\mid X] = \mu_{10}(X)\,\mathbb{E}[\tilde{\psi}_{1, 1-S}\mid X],
\]
so that in the ratio defining $\tilde{\tau}^{00}_{\mathrm{eif}}(X)$ the common factor
$\mathbb{E}[\tilde{\psi}_{1, 1-S}\mid X]$ cancels:
\begin{equation*}
\begin{split}
\mathbb{E}\left[\frac{\tilde{\phi}_{1,00}-\tilde{\phi}_{0,00}}
                     {\mathbb{E}[1-\tilde{\psi}_{1,S}\mid X]}
               \;\middle|\; X\right] 
&= \frac{\mu_{10}(X)\,\mathbb{E}[\tilde{\psi}_{1, 1-S}\mid X]
       - \mu_{00}(X)\,\mathbb{E}[\tilde{\psi}_{1, 1-S}\mid X]}
        {\mathbb{E}[1-\tilde{\psi}_{1,S}\mid X]}\\
&= \mu_{10}(X)-\mu_{00}(X) = \tau^{00}(X).
\end{split}
\end{equation*}

Thus, even when the propensity and principal score models $(\tilde\pi,\tilde p_0,\tilde p_1)$
are misspecified, as long as the outcome regressions $(\mu_{10},\mu_{00})$ are correctly
specified, the EIF-based functional $\tilde{\tau}^{00}_{\mathrm{EIF}}(X)$ remains unbiased
for $\tau^{00}(X)$. Together with Case (i), this establishes weak double robustness for the
never-taker stratum $u=00$.

    \item \textbf{Rate robustness} \\
    To study the bias of the EIF-based population functional, we compare the plug-in
version $\tilde{\tau}^{u}_{\mathrm{eif}}(x)$ with the true function
$\tau^{u}(x)$. Denote $e_{\pi}(x) := \tilde \pi(x) - \pi(x)$, $e_{p_z}(x) := \tilde p_z(x) - p_z(x)$,$e_{\mu_{zs}}(x):= \tilde \mu_{zs}(x) - \mu_{zs}(x)$.
Then the EIF population bias admits second–order product expansions of the form
\begin{equation}\label{eq:bias_eif_00}
\begin{aligned}
\tilde{\tau}_{\mathrm{eif}}^{00}(x)-\tau^{00}(x)
&= \frac{1}{\Eg{00}}\Biggl[
\frac{1-\tilde p_1(x)}{1-\tilde\pi(x)}\,e_{\pi}(x)\,e_{\mu_{00}}(x)
+ \frac{1-\tilde p_1(x)}{\tilde\pi(x)}\,e_{\pi}(x)\,e_{\mu_{10}}(x) \\
&\qquad
- \frac{1-\tilde p_1(x)}{1-\tilde p_0(x)}\,e_{p_0}(x)\,e_{\mu_{00}}(x)
- e_{p_1}(x)\,e_{\mu_{00}}(x)
+ R_3^{00}(x)
\Biggr].
\end{aligned}
\end{equation}
\begin{equation}\label{eq:bias_eif_10}
\begin{aligned}
\tilde{\tau}_{\mathrm{eif}}^{10}(x)-\tau^{10}(x)
&= \frac{1}{\Eg{10}}\Biggl[
-\frac{\tilde p_0(x)}{\tilde p_1(x)}\,e_{p_1}(x)\,e_{\mu_{11}}(x)
+ e_{p_0}(x)\,e_{\mu_{11}}(x)
+ e_{p_1}(x)\,e_{\mu_{00}}(x) \\
&\qquad
-\frac{1-\tilde p_1(x)}{1-\tilde p_0(x)}\,e_{p_0}(x)\,e_{\mu_{00}}(x)
+ \frac{\tilde p_1(x)-\tilde p_0(x)}{\tilde\pi(x)}\,e_{\pi}(x)\,e_{\mu_{11}}(x) \\
&\qquad
+ \frac{\tilde p_1(x)-\tilde p_0(x)}{1-\tilde\pi(x)}\,e_{\pi}(x)\,e_{\mu_{00}}(x)
+ R_3^{10}(x)
\Biggr].
\end{aligned}
\end{equation}
\begin{equation}\label{eq:bias_eif_11}
\begin{aligned}
\tilde{\tau}_{\mathrm{eif}}^{11}(x)-\tau^{11}(x)
&= \frac{1}{\Eg{11}}\Biggl[
\frac{\tilde p_0(x)}{\tilde p_1(x)}\,e_{p_1}(x)\,e_{\mu_{11}}(x)
- e_{p_0}(x)\,e_{\mu_{11}}(x) \\
&\qquad
+ \frac{\tilde p_0(x)}{\tilde\pi(x)}\,e_{\pi}(x)\,e_{\mu_{11}}(x)
+ \frac{\tilde p_0(x)}{1-\tilde\pi(x)}\,e_{\pi}(x)\,e_{\mu_{01}}(x)
+ R_3^{11}(x)
\Biggr].
\end{aligned}
\end{equation}

where $R_3^{u}(x)$, $u\in\{00,10,11\}$, collects the remaining third-order
terms involving products of three nuisance errors. For example, for $u =00$, $R^{00}_3(x) = - \frac{e_{\mu_{00}}(x)e_{\pi}(x)e_{p_1}(x)}{\tilde\pi(x)} 
  + \frac{1-\tilde p_1(x)}{1-\tilde p_0(x)}\,
    \frac{e_{\pi}(x)e_{\mu_{00}}(x)e_{p_0}(x)}{(1-\tilde\pi(x))}$.
By the expressions for $u\in\{00,10,11\}$, we can write, for each $u$,
\begin{equation}\label{eq:eif-expansion-generic}
\tilde{\tau}_{\mathrm{eif}}^{u}(x)-\tau^{u}(x)
= \sum_{(a,b)\in\mathcal{I}_u} c^{u}_{a,b}(x)\,e_a(x)\,e_b(x)
  +  \frac{1}{\mathbb{E}\!\left[\tilde g^{u}\mid X = x\right]}R_3^{u}(x),
\end{equation}
where:
\begin{itemize}
  \item the index set $\mathcal{I}_u$ contains the relevant pairs of nuisance
  components, e.g.\ $(a,b)\in\{(\pi,\mu_{00}),(\pi,\mu_{10}),(p_0,\mu_{00}),
  (p_1,\mu_{00})\}$ for $u=00$, and analogous pairs for $u=10,11$;
  \item $c^{u}_{a,b}(x)$ are the deterministic weight functions formed by the
  various ratios of $\tilde\pi(x)$, $\tilde p_z(x)$ and
  $\mathbb{E}[\tilde g^{u}\mid X=x]$ in the displayed expansions;
  \item $R_3^{u}(x)$ collects the remaining third-order products of nuisance
  errors.
\end{itemize}
By assumption, an overlap (or positivity) condition holds for the plug–in
nuisance functions: there exists a constant $\epsilon>0$ such that, almost
surely, $\epsilon \le \mathbb{E}(\tilde g^{u}\mid X =x)$, $\text{for each }u\in\{00,10,11\},$
and $\epsilon \le \tilde\pi(x) \le 1-\epsilon$, $
\epsilon \le \tilde p_1(x)$, $\tilde p_0(x) \le 1-\epsilon$,
$\tilde p_1(x) - \tilde p_0(x) \ge \epsilon$.
Hence all ratios appearing in the
coefficients $c^{u}_{a,b}(x)$ are uniformly bounded in $x$, and there exists a
constant $C<\infty$ such that
\[
\sup_{x}\,\max_{(a,b)\in\mathcal{I}_u} |c^{u}_{a,b}(x)| \le C
\quad\text{for each }u\in\{00,10,11\}.
\]
Hence, there exist a constant $C^{\prime} = \max (C, 1/\epsilon)$.
Taking absolute values in \eqref{eq:eif-expansion-generic}, we obtain
\begin{align*}
|\tilde{\tau}_{\mathrm{eif}}^{u}(x)-\tau^{u}(x)|
&\le \sum_{(a,b)\in\mathcal{I}_u}\,|c^{u}_{a,b}(x)\,e_a(x)\,e_b(x)|
   + \frac{1}{\epsilon}\,|R_3^{u}(x)|\,\\
&\le C^{\prime} \Big(\sum_{(a,b)\in\mathcal{I}_u}\,|e_a(x)\,e_b(x)|\,
   + \,|R_3^{u}(x)|\,\Big).
\end{align*}
Therefore
\[
|\tilde{\tau}_{\mathrm{eif}}^{u}(x)-\tau^{u}(x)|\
\le C^{\prime} \Big( \sum_{(a,b)\in\mathcal{I}_u}
    |e_a|\,|e_b|
  + |R_3^{u}(x)|\Big).
\]

From the explicit expansions for $u=00,10,11$, the pairs $(a,b)$ in
$\mathcal{I}_u$ always involve one outcome regression error
$e_{\mu_{zs}}$ and one of the score errors $e_{\pi}$ or $e_{p_z}$.
Consequently, for each $u\in\{00,10,11\}$ we have
\begin{align*}
|\tilde{\tau}_{\mathrm{eif}}^{u}(x)-\tau^{u}(x)|\
= O_p\!\Bigl(
|\tilde\mu_{zs}-\mu_{zs}|
\bigl\{|\tilde\pi-\pi|+|\tilde p_z-p_z|\bigr\} +|R_3^{u}(x)|
\Bigr) 
\end{align*}
Finally, $R_3^{u}(x)$ consists of products of three nuisance errors (e.g.
$e_{\pi} e_{p_z} e_{\mu_{zs}}$), so these third-order terms are of
smaller order than the second-order products and can be absorbed. Thus, we obtain the rate double robustness bound:
\begin{align*}
|\tilde{\tau}_{\mathrm{eif}}^{u}(x)-\tau^{u}(x)|\
= O_p\!\Bigl(
|\tilde\mu_{zs}-\mu_{zs}|,
\bigl\{|\tilde\pi-\pi|+|\tilde p_z-p_z|\bigr\} 
\Bigr) 
\end{align*}
with the indices of the principal score $p_z$ and outcome regression
$\mu_{zs}$ determined by the stratum $u$.

Now we will show how to obtain the exact bias. The bias can be decomposed into numerator bias and denominator bias.
\begin{equation*}
    \begin{aligned}
         \tilde{\tau}_{\text {eif }}^u(x)-\tau^u(x) &=\frac{\mathbb{E}\left[\tilde{\phi}_{1, u}-\tilde{\phi}_{0, u} \mid X =x\right]}{\mathbb{E}\left[\tilde{g}^u \mid X = x\right]}-\tau^u(x)\\
         & = \frac{\mathbb{E}\left[\tilde{\phi}_{1, u}-\tilde{\phi}_{0, u} \mid X =x\right]}{\mathbb{E}\left[\tilde{g}^u \mid X = x\right]} - \frac{\mathbb{E}\left[{\phi}_{1, u}-{\phi}_{0, u} \mid X =x\right]}{\mathbb{E}\left[\tilde{g}^u \mid X = x\right]} \\
         &+ \frac{\mathbb{E}\left[{\phi}_{1, u}-{\phi}_{0, u} \mid X =x\right]}{\mathbb{E}\left[\tilde{g}^u \mid X = x\right]} - \frac{\tau^u(x)\mathbb{E}\left[\tilde{g}^u \mid X = x\right]}{\mathbb{E}\left[\tilde{g}^u \mid X = x\right]}
    \end{aligned}
\end{equation*}
The first equality holds by definition. The second equality follows by adding and subtracting $\frac{\mathbb{E}\!\left[\phi_{1,u}-\phi_{0,u}\mid X=x\right]}
     {\mathbb{E}\!\left[\tilde g^{u}\mid X=x\right]},$
and then rearranging terms.

From the EIF identification, we know $\mathbb{E}\left[{\phi}_{1, u}-{\phi}_{0, u} \mid X =x\right] = \tau^u(x)\mathbb{E}[g^u|X= x]$. Hence, for the second part, we have
\begin{equation*}
   = \frac{\tau^u(x)\mathbb{E}[g^u|X= x]}{\mathbb{E}\left[\tilde{g}^u \mid X = x\right]} - \frac{\tau^u(x)\mathbb{E}\left[\tilde{g}^u \mid X = x\right]}{\mathbb{E}\left[\tilde{g}^u \mid X = x\right]}    
\end{equation*}
So the bias can be decomposed into numerator bias and denominator bias:
\begin{equation*}
    \begin{aligned}
         \tilde{\tau}_{\text {eif }}^u(x)-\tau^u(x) &= \frac{\mathbb{E}\left[\tilde{\phi}_{1, u}-\tilde{\phi}_{0, u} \mid X =x\right]- \mathbb{E}\left[{\phi}_{1, u}-{\phi}_{0, u} \mid X =x\right]}{\mathbb{E}\left[\tilde{g}^u \mid X = x\right]} \\
         &- \frac{\tau^u(x)(\mathbb{E}\left[\tilde{g}^u \mid X = x\right] - \mathbb{E}\left[g^u \mid X = x\right])}{\mathbb{E}\left[\tilde{g}^u \mid X = x\right]}
    \end{aligned}
\end{equation*}
We use the never-taker stratum $u=00$ as an illustrative example; the derivations for $u=10$ and $u=11$ are entirely analogous.

We first compute the contribution from the $Z=1$ part in numerator bias, i.e.
\(\mathbb{E}\big[(\tilde{\phi}_{1,00}-\phi_{1,00})\mid X =x\big]\).
Recall that \(\tilde{\phi}_{1,00} = \tilde{\psi}_{1, Y(1-S)}\) and
\(\phi_{1,00} = \psi_{1,Y(1-S)}\).
By calculation, we have
\begin{align*}
\mathbb{E}[\tilde{\psi}_{1, Y (1-S)}\mid X =x] 
&= \mathbb{E}\!\left[
     \frac{\mathbf{1}(Z=1)\{Y (1-S)-\tilde \mu_{10}(X) (1-\tilde p_1(X))\}}{\tilde{\pi}(X)}
     \,\Bigm|\;X
   \right] \\
   &\quad + \tilde \mu_{10}(x) \{1-\tilde p_1(x)\}\\
= \frac{\pi(x)}{\tilde \pi(x)}&
   \big(\mu_{10}(x)\{1-p_1(x)\}-\tilde \mu_{10}(x)\{1-\tilde p_1(x)\}\big)
   + \tilde \mu_{10}(x)\{1-\tilde p_1(x)\}.
\end{align*}
Recall \(\mathbb{E}[ \psi_{1,Y(1-S)}\mid X =x] = \mu_{10}(x)\{1- p_1(x)\}\):
\begin{align*}
\mathbb{E}[\tilde{\phi}_{1,00}(W)-\phi_{1,00}(W)\mid X =x]
&= \mathbb{E}[\tilde{\psi}_{1, Y (1-S)}\mid X =x]
   - \mathbb{E}[{\psi}_{Y_1 (1-S_1)}\mid X =x]\\
&= \frac{\tilde \pi(x)-\pi(x)}{\tilde \pi(x)}
   \Big(\tilde \mu_{10}(x)\{1-\tilde p_1(x)\}-\mu_{10}(x)\{1-p_1(x)\}\Big).
\end{align*}
We first expand the numerator bias in terms of the nuisance errors. Recall
$e_{\pi}(x) = \tilde\pi(x)-\pi(x)$, 
$e_{p_z}(x) = \tilde p_z(x)-p_z(x)$,
$e_{\mu_{zs}}(x) = \tilde\mu_{zs}(x)-\mu_{zs}(x)$.
From the previous step,
\[
\mathbb{E}[\tilde{\phi}_{1,00}(W)-\phi_{1,00}(W)\mid X =x]
= \frac{e_{\pi}(x)}{\tilde\pi(x)}\,
  \Big(\tilde \mu_{10}(x)\{1-\tilde p_1(x)\}-\mu_{10}(x)\{1-p_1(x)\}\Big).
\]
Add and subtract $\mu_{10}(x)\{1-\tilde p_1(x)\}$ inside the bracket:
\begin{align*}
\mathbb{E}[\tilde{\phi}_{1,00}-\phi_{1,00}\mid X =x]
&= \frac{e_{\pi}(x)}{\tilde\pi(x)} \Big(
      \tilde \mu_{10}(x)\{1-\tilde p_1(x)\}
      -\mu_{10}(x)\{1-\tilde p_1(x)\} \\
&\qquad\qquad\qquad
      + \mu_{10}(x)\{1-\tilde p_1(x)\}
      - \mu_{10}(x)\{1-p_1(x)\}
    \Big)\\
&= \frac{e_{\pi}(x)}{\tilde\pi(x)}\Big(
      e_{\mu_{10}}(x)\{1-\tilde p_1(x)\}
      - \mu_{10}(x)e_{p_1}(x)
    \Big).
\end{align*}
Equivalently,
\[
\mathbb{E}[\tilde{\phi}_{1,00}(W)-\phi_{1,00}(W)\mid X =x]
= \frac{1-\tilde p_1(x)}{\tilde \pi(x)}\,e_{\pi}(x)\,e_{\mu_{10}}(x)
  - \frac{\mu_{10}(x)}{\tilde \pi(x)}\,e_{\pi}(x)\,e_{p_1}(x).
\]

Recall $\tilde \phi_{0,00}(W)
= \frac{1-\tilde p_1(X)}{1-\tilde p_0(X)} \tilde \psi_{0,Y(1-S)}
 +\tilde \mu_{00}(X)\biggl\{\tilde \psi_{1,1-S}
   -\frac{1-p_1(X)}{1-p_0(X)} \psi_{0,1-S}\biggr\}.$ And $\mathbb{E}[\phi_{0,00}(W)|X=x] = \mu_{00}(x)\{1-p_1(x)\}$
   
So we have 
\begin{align*}
    &\mathbb{E}\bigl[\tilde\phi_{0,00} - \phi_{0,00}\mid X =x\bigr] \\[2pt]
&= \frac{1-\tilde p_1(X)}{1-\tilde p_0(X)}
      \mathbb{E}\bigl[\tilde\psi_{0,Y(1-S)}\mid X =x\bigr]
   + \tilde\mu_{00}(x)\Big(
          \mathbb{E}\bigl[\tilde\psi_{1,1-S}\mid X =x\bigr]\\
         &\hspace{8em}  - \frac{1-\tilde p_1(X)}{1-\tilde p_0(X)}
            \mathbb{E}\bigl[\tilde\psi_{0,1-S}\mid X =x\bigr]
        \Big) - \mu_{00}(x)\{1-p_1(x)\} 
\end{align*}
To expose the same bias structure as in the previous calculations, we apply an
“add–and–subtract” trick: we add and subtract $\mu_{00}(x)\{1-p_0(x)\}$ and the
terms involving $(1-p_z)$ inside the brackets.  Define \[
A(x):=
\mathbb{E}[\tilde\psi_{1,1-S}\mid X =x]-\{1-p_1(x)\}
-\frac{1-\tilde p_1(x)}{1-\tilde p_0(x)}
\Big(\mathbb{E}[\tilde\psi_{0,1-S}\mid X =x]-\{1-p_0(x)\}\Big).
\]This yields
\begin{align*}
&= \frac{1-\tilde p_1(x)}{1-\tilde p_0(x)}
   \Big(\mathbb{E}\bigl[\tilde\psi_{0,Y(1-S)}\mid X =x\bigr]
        - \mu_{00}(x)\{1-p_0(x)\}\Big) \\[2pt]
&\quad
   + \frac{1-\tilde p_1(x)}{1-\tilde p_0(x)}\,\mu_{00}(x)\{1-p_0(x)\}
   + \tilde\mu_{00}(x)A(x)\\[2pt]
&\quad
   + \tilde\mu_{00}(x)\{1-p_1(x)\}
   - \tilde\mu_{00}(x)\,\frac{1-\tilde p_1(x)}{1-\tilde p_0(x)}\{1-p_0(x)\}
   - \mu_{00}(x)\{1-p_1(x)\}.
\end{align*}
Then we add and subtract 
$\mu_{00}(x)A(x)$.

Summarizing, we can decompose this bias into four parts:
\begin{align*}
&\mathbb{E}\bigl[\tilde\phi_{0,00}\mid X =x\bigr] - \mu_{00}(x)\{1-p_1(x)\} \\[2pt]
&= \underbrace{\frac{1-\tilde p_1(x)}{1-\tilde p_0(x)}
   \Big\{
      \mathbb{E}\bigl[\tilde\psi_{0,Y(1-S)}\mid X =x\bigr]
      - \mu_{00}(x)\{1-p_0(x)\}
   \Big\}}_{(I)} \\[4pt]
&\quad + \underbrace{e_{\mu_{00}}(x)A(x)}_{(II)}  + \underbrace{\mu_{00}(X)
  A(x)}_{(III)} \\[4pt]
&\quad + \underbrace{e_{\mu_{00}}(x)\{1-p_1(x)\}
   - \frac{1-\tilde p_1(x)}{1-\tilde p_0(x)}\{1-p_0(x)\}\,e_{\mu_{00}}(x)}_{(IV)}.
\end{align*}

We will calculate IV part first. 
\begin{align*}
(IV)
&= \Big\{(1-p_1(x)) - \frac{1-\tilde p_1(x)}{1-\tilde p_0(x)}\{1-p_0(x)\}\Big\}\,e_{\mu_{00}}(x)\\[4pt]
&= \frac{(1-p_1(x))(1-\tilde p_0(x)) - (1-\tilde p_1(x))(1-p_0(x))}
        {1-\tilde p_0(x)}\,e_{\mu_{00}}(x)\\[4pt]
&= \frac{(1-\tilde p_0(x))\big(\tilde p_1(x)-p_1(x)\big)
        - (1-\tilde p_1(x))\big(\tilde p_0(x)-p_0(x)\big)}
        {1-\tilde p_0(x)}\,e_{\mu_{00}}(x)\\[4pt]
&= e_{p_1}(x)\,e_{\mu_{00}}(x)
   - \frac{1-\tilde p_1(x)}{1-\tilde p_0(x)}\,e_{p_0}(x)\,e_{\mu_{00}}(x).
\end{align*}

Notice II and III have the same part $A(x)$. From the previous, we know:
  \begin{align*}
\mathbb{E}[\tilde \psi_{1 - S_1}\mid X=x] - (1 - p_1(x))
&= \left(\frac{\pi(x)}{\tilde \pi(x)} -1\right)\bigl(\tilde p_1(x) - p_1(x)\bigr)
= - \frac{e_{\pi}(x)\,e_{p_1}(x)}{\tilde \pi(x)},\\
\mathbb{E}[\tilde \psi_{1 - S_0}\mid X=x] - (1 - p_0(x))
&= \left(\frac{1 - \pi(x)}{1 - \tilde \pi(x)} -1\right)\bigl(\tilde p_0(x) - p_0(x)\bigr)
= \frac{e_{\pi}(x)\,e_{p_0}(x)}{1 - \tilde \pi(x)}.
\end{align*}
    Hence,
\begin{align*}
(II)
&= - \frac{e_{\pi}(x)e_{p_1}(x)e_{\mu_{00}}(x)}{\tilde \pi(x)}
   - \frac{1 - \tilde p_1(x)}{1 - \tilde p_0(x)}
     \frac{e_{\pi}(x)e_{\mu_{00}}(x)e_{p_0}(x)}{1 - \tilde \pi(x)},\\
(III)
&= - \frac{\mu_{00}(x)}{\tilde \pi(x)}\,e_{\pi}(x)e_{p_1}(x)
   - \frac{1 - \tilde p_1(x)}{1 - \tilde p_0(x)}\,\mu_{00}(x)
     \frac{e_{\pi}(x)e_{p_0}(x)}{1 - \tilde \pi(x)}.
\end{align*}

Finally, for the $I$ term, we proceed by applying the same algebraic trick
as for $\tilde \psi_{1,Y(1-S)}$. First, we have
\begin{equation*}
\begin{aligned}
        &\mathbb{E}\bigl[\tilde \psi_{0,Y(1-S)}\mid X = x\bigr]\\
   & = \frac{1 - \pi(x)}{1- \tilde \pi(x)}
      \bigl[\mu_{00}(x)\{1 - p_0(x)\} - \tilde \mu_{00}(x)\{1 - \tilde p_0(x)\}\bigr]
      + \tilde \mu_{00}(x)\{1 - \tilde p_0(x)\}.
\end{aligned}
\end{equation*}
Hence, the bias is 
\begin{align*}
&\mathbb{E}\big[\tilde\psi_{0,Y(1-S)}\mid X =x\big]-\mu_{00}(x)\{1-p_0(x)\}\\
&= -\Big(\frac{1-\tilde\pi (x)}{1-\pi(x)}-1\Big)
   \Big(\tilde\mu_{00}(x)\{1-\tilde p_0 (x)\}-\mu_{00}(x)\{1-p_0(x)\}\Big) \\[4pt]
&= -\frac{e_{\pi}(x)}{1-\tilde\pi(x)}
   \Big(\tilde\mu_{00}(x)\{1-\tilde p_0 (x)\}-\mu_{00}(x)\{1-p_0(x)\}\Big) \\[4pt]
&= -\frac{1-\tilde p_0 (x)}{1-\tilde\pi (x)}\,e_{\pi}(x)\, e_{\mu_{00}}(x)
   + \frac{\mu_{00}(x)}{1-\tilde\pi(x)}\,e_{\pi}(x)\, e_{p_0}(x).
\end{align*}
The first equality follows from the previous calculation. The second equality is obtained by factoring terms. The third equality follows by adding and subtracting the term $\mu_{00}(x)\{1-\tilde p_0(x)\}$, and then rearranging.

So the I term is 
\[
I \;=\; -\frac{1-\tilde p_1 (x)}{1-\tilde\pi (x)}\,e_{\pi}(x)\, e_{\mu_{00}}(x)
      + \frac{1-\tilde p_1 (x)}{1-\tilde p_0 (x)}\,
        \frac{\mu_{00}(x)}{1-\tilde\pi(x)}\,e_{\pi}(x)\, e_{p_0}(x).
\]

Combining together, we obtain
\begin{align*}
&\mathbb{E}\bigl[\tilde{\phi}_{0,00} \mid X = x\bigr]
 - \mathbb{E}\bigl[{\phi}_{0,00} \mid X = x\bigr]
= -\frac{1-\tilde p_1(x)}{1-\tilde\pi(x)}\,e_{\pi}(x)e_{\mu_{00}}(x)\\[4pt]
&\quad - \frac{\mu_{00}(x)}{\tilde\pi(x)}\,e_{\pi}(x)e_{p_1}(x)  - \frac{e_{\mu_{00}}(x)e_{\pi}(x)e_{p_1}(x)}{\tilde\pi(x)} - \frac{1-\tilde p_1(x)}{1-\tilde p_0(x)}\,
          \frac{e_{\pi}(x)e_{\mu_{00}}(x)e_{p_0}(x)}{1-\tilde\pi(x)} \\[4pt]
&\quad + e_{p_1}(x)e_{\mu_{00}}(x)
        - \frac{1-\tilde p_1(x)}{1-\tilde p_0(x)}\,e_{\mu_{00}}(x)e_{p_0}(x).
\end{align*}

Hence, the numerator bias is
\begin{align*}
&\mathbb{E}\bigl[\tilde{\phi}_{1,00}-\tilde{\phi}_{0,00} \mid X = x\bigr]
  - \mathbb{E}\bigl[{\phi}_{1,00}-{\phi}_{0,00} \mid X = x\bigr] \\
&\qquad =
  \frac{1-\tilde p_1(x)}{\tilde \pi(x)}\,e_{\pi}(x)\,e_{\mu_{10}}(x)
  - \frac{\mu_{10}(x) - \mu_{00}(x)}{\tilde \pi(x)}\,e_{\pi}(x)\,e_{p_1}(x) \\[4pt]
&\qquad\quad
  + \frac{1-\tilde p_1(x)}{1-\tilde\pi(x)}\,e_{\pi}(x)e_{\mu_{00}}(x)
  - \frac{e_{\mu_{00}}(x)e_{\pi}(x)e_{p_1}(x)}{\tilde\pi(x)} \\[4pt]
&\qquad\quad
  + \frac{1-\tilde p_1(x)}{1-\tilde p_0(x)}\,
    \frac{e_{\pi}(x)e_{\mu_{00}}(x)e_{p_0}(x)}{1-\tilde\pi(x)}
  - e_{p_1}(x)e_{\mu_{00}}(x)
  + \frac{1-\tilde p_1(x)}{1-\tilde p_0(x)}\,e_{\mu_{00}}(x)e_{p_0}(x).
\end{align*}

The denomiator bias is 
$$
\mathbb{E}\left[\tilde{g}^{00} - g^{00} \mid X = x\right] = - \frac{e_{\pi}(x)e_{p_1}(x)}{\tilde \pi(x)}
$$
Again, we are interested in $$\mathbb{E}\left[\tilde{\phi}_{1, u}-\tilde{\phi}_{0, u} \mid X =x\right]- \mathbb{E}\left[{\phi}_{1, u}-{\phi}_{0, u} \mid X =x\right]- \tau^u(x)(\mathbb{E}\left[\tilde{g}^u \mid X = x\right] - \mathbb{E}\left[g^u \mid X = x\right])$$
Notice for $e_{\pi}(x)e_{p_1}(x)$ part, it can be canceled. 
So the EIF population bias is 
\begin{equation*}
    \begin{aligned}
                 &\tilde{\tau}_{\text {eif }}^{00}(x)-\tau^{00}(x) =\frac{1}{\Eg{00}}\Biggl[
\frac{1-\tilde p_1(x)}{1-\tilde\pi(x)}\,e_{\pi}(x)\,e_{\mu_{00}}(x)
+ \frac{1-\tilde p_1(x)}{\tilde\pi(x)}\,e_{\pi}(x)\,e_{\mu_{10}}(x) \\
&\hspace{12em}
- \frac{1-\tilde p_1(x)}{1-\tilde p_0(x)}\,e_{p_0}(x)\,e_{\mu_{00}}(x)
- e_{p_1}(x)\,e_{\mu_{00}}(x)
+ R_3^{00}(x)
\Biggr].
    \end{aligned}
\end{equation*}
where $R^{00}_3(x) = - \frac{e_{\mu_{00}}(x)e_{\pi}(x)e_{p_1}(x)}{\tilde\pi(x)} 
  + \frac{1-\tilde p_1(x)}{1-\tilde p_0(x)}\,
    \frac{e_{\pi}(x)e_{\mu_{00}}(x)e_{p_0}(x)}{(1-\tilde\pi(x))}$.
 \end{enumerate}
\subsection{Proof of Theorem \ref{one-step}} \label{s:proof7}

We now verify that each $\tau^{u}(x)$ admits an one-step style decomposition of the
form
\[
\tau^{u}(x) \;=\; \mathbb{E}\bigl(\check{\tau}^{u}(X) + \phi_{u} \mid X =x\bigr),
\qquad u\in\{00,10,11\},
\]
We illustrate the argument for
$u=10$; the cases $u=00$ and $u=11$ are entirely analogous.

For the complier stratum $u=10$, define
\[
\phi_{10}(W)
:= \frac{\phi_{1,10}(W)-\phi_{0,10}(W)-\check{\tau}^{10}(X)\,(\psi_{1,S}-\psi_{0,S})}
         {e^{10}(X)}.
\]
Then
\begin{align*}
\mathbb{E}\bigl(\check{\tau}^{10}(X) + \phi_{10}\mid X=x\bigr) 
&= \check{\tau}^{10}(x) 
   + \mathbb{E}\!\left(
       \frac{\phi_{1,10}-\phi_{0,10}}{e^{10}(X)}
       \,\Bigm|\;X=x
     \right) \\
&\qquad\quad
   - \check{\tau}^{10}(x)\,
     \mathbb{E}\!\left(
       \frac{\psi_{1,S}-\psi_{0,S}}{e^{10}(X)}
       \,\Bigm|\;X=x
     \right)\\
&= \mathbb{E}\!\left(
     \frac{\phi_{1,10}-\phi_{0,10}}{e^{10}(X)}
     \,\Bigm|\;X=x
   \right) = \tau^{10}(x),
\end{align*}

where the two terms involving $\check{\tau}^{10}(X)$ cancel because
\[
\mathbb{E}\!\left(
  \frac{\psi_{1,S}-\psi_{0,S}}{e^{10}(X)}
  \,\Bigm|\;X =x
\right) = 1,
\]
In the last step we used the EIF identification formula
\[
\tau^{10}(x)
= \frac{\mathbb{E}\bigl(\phi_{1,10}(W)-\phi_{0,10}(W)\mid X =x\bigr)}
       {\mathbb{E}(\psi_{1,S}-\psi_{0,S}\mid X =x)}.
\]

\subsection{Proof of Theorem \ref{robustness_one}}\label{s:proof8}
Theorem \ref{robustness_one} showed that the one-step population function has multiple robustness. 
For the weak robustness property, we prove the result for
$\tilde{\tau}_{\mathrm{eif}}^{00}(x)$ as an illustrative example; the arguments
for the other two strata are analogous.
Recall the one-step based population functional is
\[
\tilde{\tau}^{00}_{\mathrm{one}}(x)
:= \mathbb{E}\!\left[
  \check{\tau}^{00}(X)
  + \frac{\tilde{\phi}_{1,00}(W)-\tilde{\phi}_{0,00}(W)
          - \check{\tau}^{00}(X)\,\tilde g^{00}(W)}
         {\tilde e^{00}(X)}
  \,\Bigg|\, X = x
\right],
\]

\begin{enumerate}
    \item \textbf{Weak robustness} \\
\textbf{Case (i): propensity/principal scores correct}\\
Assume $\tilde{\pi}(X) = \pi(X)$, $\tilde{p}_1(X) = p_1(X)$, $\tilde{p}_0(X) = p_0(X)$, $\tilde{\mu}_{10}(X) \neq \mu_{10}(X)$, $\tilde{\mu}_{00}(X) \neq \mu_{00}(X)$, $\check\tau^{00}(X)\neq \tau^{00}(X)$.\\
From the previous proof in EIF, we already showed that $\mathbb{E}[\tilde{\phi}_{1,00}-\tilde{\phi}_{0,00}\mid X=x] = e^{00}(x)\tau^{00}(x), \quad \mathbb{E}[\tilde g^{00}\mid X = x] = e^{00}(x) = 1 - p_1(x)$. 
Hence, 
\begin{equation*}
    \tilde \tau^{00}(x) = \check \tau^{00}(x) + \frac{e^{00}(x)\tau^{00}(x) - e^{00}(x)\check \tau^{u}(x)}{e^{00}(x)} = \tau^{00}(x)
\end{equation*}

\textbf{Case (ii): outcome model and preliminary estimator correct}\\
Assume $\tilde{\mu}_{10}(X) = \mu_{10}(X)$, $\tilde{\mu}_{00}(X) = \mu_{00}(X)$, $\check\tau^{00}(X) =  \tau^{00}(X)$, $\tilde{\pi}(X) \neq \pi(X)$, $\tilde{p}_1(X) \neq p_1(X)$, $\tilde{p}_0(X) \neq p_0(X)$.\\
From the previous proof in EIF, we already showed that $\mathbb{E}[\tilde{\phi}_{1,00}-\tilde{\phi}_{0,00}\mid X=x] = \tau^{00}(x)\mathbb{E}[\tilde g^u \mid X =x]$ under this scenario. 
Hence,
    \begin{equation*}
    \tilde \tau^{00}(x) = \tau^{00}(x) + \frac{\tau^{00}(x)\mathbb{E}[\tilde g^u \mid X =x] - \tau^{00}(x)\mathbb{E}[\tilde g^u \mid X =x]}{\tilde e^{00}(x)} = \tau^{00}(x)
\end{equation*}

\textbf{Case (iii): outcome model and principal scores correct}\\
Assume $\tilde{\mu}_{10}(X) = \mu_{10}(X)$, $\tilde{\mu}_{00}(X) = \mu_{00}(X)$, $\tilde{p}_1(X) = p_1(X)$, $\tilde{p}_0(X) = p_0(X)$, $\check\tau^{00}(X) \neq \tau^{00}(X)$, $\tilde{\pi}(X) \neq \pi(X)$.\\
From the previous proof in EIF, we already showed that $\mathbb{E}[\tilde{\phi}_{1,00}-\tilde{\phi}_{0,00}\mid X=x] = e^{00}(x)\tau^{00}(x), \quad \mathbb{E}[\tilde g^{00}\mid X = x] = e^{00}(x) = 1 - p_1(x)$. 
Hence, 
\begin{equation*}
    \tilde \tau^{00}(x) = \check \tau^{00}(x) + \frac{e^{00}(x)\tau^{00}(x) - e^{00}(x)\check \tau^{u}(x)}{e^{00}(x)} = \tau^{00}(x)
\end{equation*}
    \item \textbf{Rate robustness}\\  
Follow the error notation from EIF part and denote $e_{\check \tau^{u}}(x) := \check\tau^{u}(x) - \tau^{u}(x).$ The proof proceeds in two steps. This follows the same strategy in the EIF analysis. First, we derive an exact expansion of the
bias of the one-step population function,
in terms of products of the nuisance estimation errors
$e_\pi$, $e_{p_z}$, $e_{\mu_{zs}}$, and $e_{\check\tau^u}$.

In the second step, we control the $L_1$ error by applying the overlap
assumption to each product term in the expansion. The overlap condition uniformly bounds the denominators
appearing in the bias expression, which is the same as in the EIF analysis. Hence we will only show the results in the first step. 

We show the exact expansion of the
bias of the one-step population function first. Then we will use $u =00$ as an illustrative example to show how to obtain the result.
For u = 00
\begin{equation}\label{eq:bias_one_00}
    \begin{aligned}
        \tilde{\tau}^{00}_{\mathrm{one}}(x) - \tau^{00}(x)
&= \frac{1}{\tilde{\pi}(x)}\, e_{\pi}(x)\, e_{\mu_{10}}(x)
  + \frac{1}{1-\tilde{\pi}(x)}\, e_{\pi}(x)\, e_{\mu_{00}}(x) \\[4pt]
&\quad - \frac{1}{1-\tilde{p}_1(x)}\, e_{p_1}(x)\, e_{\mu_{00}}(x)
  + \frac{1}{1-\tilde{p}_0(x)}\, e_{\mu_{00}}(x)\, e_{p_0}(x) \\[4pt]
&\quad - \frac{e_{\check{\tau}^{00}}(x)\, e_{p_1}(x)}{1-\tilde{p}_1(x)}
  + H_3^{00}(x),
    \end{aligned}
\end{equation}
where
\begin{align*}
H_3^{00}(x)
&= \frac{1}{\tilde{\pi}(x)\{1-\tilde{p}_1(x)\}}\,
    e_{\check{\tau}^{00}}(x)\, e_{\pi}(x)\, e_{p_1}(x)
  + \frac{\mathbb{E}\!\left[\tilde{g}^{00}\mid X=x\right]}{1-\tilde{p}_1(x)}\,
    R_3^{00}(x),
\end{align*}
and $R_3^{00}(x)$ is the high–order term in the EIF.

For $u = 10$,
\begin{equation}\label{eq:bias_one_10}
    \begin{aligned}
        \tilde{\tau}^{10}_{\mathrm{one}}(x) - \tau^{10}(x)
&= \frac{1}{1-\tilde{\pi}(x)}\, e_{\pi}(x)\, e_{\mu_{11}}(x)
  + \frac{1}{1-\tilde{\pi}(x)}\, e_{\pi}(x)\, e_{\mu_{00}}(x) \\[4pt]
&\quad - \frac{\tilde{p}_0(x)}
        {\tilde{p}_1(x)\{\tilde{p}_1(x)-\tilde{p}_0(x)\}}\,
        e_{p_1}(x)\, e_{\mu_{11}}(x) \\[4pt]
&\quad - \frac{1-p_1(x)}
        {\{1-\tilde{p}_0(x)\}\{\tilde{p}_1(x)-\tilde{p}_0(x)\}}\,
        e_{p_0}(x)\, e_{\mu_{00}}(x) \\[4pt]
&\quad + \frac{1}{\tilde{p}_1(x)-\tilde{p}_0(x)}\,
        e_{p_0}(x)\, e_{\mu_{11}}(x)
      + \frac{1}{\tilde{p}_1(x)-\tilde{p}_0(x)}\,
        e_{p_1}(x)\, e_{\mu_{00}}(x) \\[4pt]
&\quad + \frac{e_{\check{\tau}^{10}}(x)\,\bigl(e_{p_1}(x)-e_{p_0}(x)\bigr)}
           {\tilde{p}_1(x)-\tilde{p}_0(x)}
  + H_3^{10}(x),
    \end{aligned}
\end{equation}
where
\begin{align*}
H_3^{10}(x)
&= - \frac{1}{\tilde{\pi}(x)\{\tilde{p}_1(x)-\tilde{p}_0(x)\}}\,
      e_{\check{\tau}^{10}}(x)\, e_{p_1}(x)\, e_{\pi}(x) \\[4pt]
&\quad - \frac{1}{\{1-\tilde{\pi}(x)\}\{\tilde{p}_1(x)-\tilde{p}_0(x)\}}\,
      e_{\check{\tau}^{10}}(x)\, e_{p_0}(x)\, e_{\pi}(x) \\[4pt]
&\quad + \frac{\mathbb{E}\!\left[\tilde{g}^{10}\mid X=x\right]}
           {\tilde{p}_1(x)-\tilde{p}_0(x)}\,
      R_3^{10}(x).
\end{align*}

For $u = 11$,
\begin{equation}\label{eq:bias_one_11}
    \begin{aligned}
        \tilde{\tau}^{11}_{\mathrm{one}}(x) - \tau^{11}(x)
&= \frac{1}{\tilde{p}_1(x)}\, e_{p_1}(x)\, e_{\mu_{11}}(x)
  - \frac{1}{\tilde{p}_0(x)}\, e_{p_0}(x)\, e_{\mu_{11}}(x) \\[4pt]
&\quad + \frac{1}{\tilde{\pi}(x)}\, e_{\pi}(x)\, e_{\mu_{11}}(x)
  + \frac{1}{1-\tilde{\pi}(x)}\, e_{\pi}(x)\, e_{\mu_{01}}(x) \\[4pt]
&\quad + \frac{e_{\check{\tau}^{11}}(x)\, e_{p_0}(x)}{\tilde{p}_0(x)}
  + H_3^{11}(x),
    \end{aligned}
\end{equation}
where
\begin{align*}
H_3^{11}(x)
&= \frac{1}{\tilde{p}_0(x)\{1-\tilde{\pi}(x)\}}\,
      e_{\check{\tau}^{11}}(x)\, e_{\pi}(x)\, e_{p_0}(x)
  + \frac{\mathbb{E}\!\left[\tilde{g}^{11}\mid X=x\right]}
         {\tilde{p}_0(x)}\,
    R_3^{11}(x).
\end{align*}

Now, let us show how to derive it.
\begin{align*}
\tilde\tau^{u}_{\mathrm{one}}(x) - \tau^{u}(x)
&:= \mathbb{E}\!\left[
      \check\tau^{u}(X)
      + \frac{\tilde\phi_{1,u} - \tilde\phi_{0,u} - \check\tau^{u}(X)\,\tilde g^{u}}
             {\tilde e^{u}(X)}
      \,\Big|\, X = x
    \right]
    - \tau^{u}(x).
\end{align*}

Recall that
\begin{align*}
\tau^{u}(x)
= \mathbb{E}\!\left[
      \check\tau^{u}(X)
      + \frac{\phi_{1,u} - \phi_{0,u} - \check\tau^{u}(X)\,g^{u}}
             {e^{u}(X)}
      \,\Big|\, X = x
    \right].
\end{align*}

So we have
\begin{align*}
\tilde\tau^{u}_{\mathrm{one}}(x) - \tau^{u}(x)
&= \check\tau^{u}(x)
   + \frac{\mathbb{E}\!\left[\tilde\phi_{1,n}-\tilde\phi_{0,n}\mid X=x\right]
          - \check\tau^{u}(x)\,\mathbb{E}\!\left[\tilde g^{u}\mid X=x\right]}
          {\tilde e^{u}(x)}  \\
&\quad
   - \check\tau^{u}(x)
   + \frac{\mathbb{E}\!\left[\phi_{1,n}-\phi_{0,n}\mid X=x\right]
          - \check\tau^{u}(x)\,\mathbb{E}\!\left[g^{u}\mid X=x\right]}
          {e^{u}(x)} .
\end{align*}

Apply the add--and--subtract trick. The right--hand side becomes
\begin{align*}
\mathrm{RHS}
&=
\underbrace{
\frac{\mathbb{E}\!\left[\tilde\phi_{1,u}-\tilde\phi_{0,u} - (\phi_{1,u}-\phi_{0,u})\mid X=x\right]
-
      \check\tau^{u}(x)\,\mathbb{E}\!\left[\tilde g^u -g^{u}\mid X=x\right]
}{\tilde e^{u}(x)}
}_{\text{(I)}} \\[4pt]
&\quad
+
\underbrace{
\Bigl\{\mathbb{E}\!\left[\phi_{1,u}-\phi_{0,u}\mid X=x\right]
      - \check\tau^{u}(x)\,\mathbb{E}\!\left[g^{u}\mid X=x\right]\Bigr\}
\left(\frac{1}{\tilde e^{u}(x)}-\frac{1}{e^{u}(x)}\right)
}_{\text{(II)}} .
\end{align*}

For the first part (I), recall we already calculated
\[
\mathbb{E}\!\left[\tilde\phi_{1,u}-\tilde\phi_{0,u}\mid X=x\right]
 - \mathbb{E}\!\left[\phi_{1,u}-\phi_{0,u}\mid X=x\right],
\qquad
\mathbb{E}\!\left[\tilde g^{u}-g^{u}\mid X=x\right]
\]
in the EIF section.

For the second part (II), recall the identification
\[
\mathbb{E}\!\left[\phi_{1,u}-\phi_{0,u}\mid X=x\right]
= \tau^{u}(x)\,\mathbb{E}\!\left[g^{u}\mid X=x\right]
= \tau^{u}(x)\,e^{u}(x).
\]

Hence
\begin{align*}
\mathrm{II}
&= \frac{\bigl(\tau^{u}(x)-\check\tau^{u}(x)\bigr)e^{u}(x)}{\tilde e^{u}(x)}
   - \bigl(\tau^{u}(x)-\check\tau^{u}(x)\bigr) \\[4pt]
&= \frac{e_{\check\tau^{u}}(x)\,\bigl(\tilde e^{u}(x)-e^{u}(x)\bigr)}
         {\tilde e^{u}(x)}.
\end{align*}
The second equality is a simply algebra. For example, for $u =00$, $\mathrm{II}= - \frac{e_{\check \tau^{00}}(x)e_{p_1}(x)}{1 - \tilde p_1(x)}$, where $e_{p_1}(x): = \tilde p_1(x) - p_1(x)$.

Combining the results obtained above for $u=00$, we have
\begin{align*}
\tilde \tau^{00}_{\mathrm{one}}(x) - \tau^{00}(x)
&
  = \frac{1}{\tilde{\pi}(x)}\, e_{\pi}(x)\, e_{\mu_{10}}(x)
  + \frac{1}{1-\tilde{\pi}(x)}\, e_{\pi}(x)\, e_{\mu_{00}}(x) \\[4pt]
&\quad
  - \frac{1}{1-\tilde{p}_1(x)}\, e_{p_1}(x)\, e_{\mu_{00}}(x)
  + \frac{1}{1-\tilde{p}_0(x)}\, e_{\mu_{00}}(x)\, e_{p_0}(x) \\[4pt]
&\quad
  - \frac{e_{\check{\tau}^{00}}(x)\, e_{p_1}(x)}{1-\tilde{p}_1(x)}
  + H_3^{00}(x),
\end{align*}
where
\begin{align*}
H_3^{00}(x)
&= \frac{1}{\tilde{\pi}(x)\{1-\tilde{p}_1(x)\}}\,
    e_{\check{\tau}^{00}}(x)\, e_{\pi}(x)\, e_{p_1}(x)
  + \frac{\mathbb{E}\!\left[\tilde{g}^{00}\mid X=x\right]}{1-\tilde{p}_1(x)}\,
    R_3^{00}(x).
\end{align*}
\end{enumerate}

\subsection{Proof of Theorem \ref{thm:smooth_sub_one}}\label{s:proof9}
We reviewed the results from the \cite{kennedy2023towards} in the new notation and introduce it as the lemma first. Define notation $W = (X,Z,S,Y)$.

\begin{lemma}\label{lemmaken}\citep{kennedy2023towards}
    Define  $m_f(x)=\mathbb{E}\{f(W) \mid X=x\}$ the conditional expectation of $f(W)$ given $X$, $\widehat{m}_{\tilde{f}}(x)=\widehat{\mathbb{E}}_n\{\tilde{f}(W) \mid X=x\}$ the regression of $\tilde{f}(W)$ on $X$ in the test samples, $\widehat{m}_f(x)=\widehat{\mathbb{E}}_n\{f(W) \mid X=x\}$ the corresponding regression of $f(W)$ on $X$, and the oracle mean squared error $R_n^*(x)^2=\mathbb{E}\left[\{\widehat{m}_f(x)-m_f(x)\}^2\right]$. If:
1. the regression estimator $\widehat{\mathbb{E}}_n$ is stable with respect to distance $d$, and
2. $d(\tilde{f}, f) \xrightarrow{p} 0$,
then
\begin{equation*}
    \hat m_{\tilde f}(x)-m_f(x)
= \hat m_f(x)-m_f(x)
+ \widehat{\mathbb{E}}_n\{\tilde b(X)\mid X=x\}
+ o_p\!\big(R_n^*(x)\big),
\end{equation*}
where $\big(R_n^*(x)\big)^2 := \mathbb{E}\!\big[(\hat m_f(x)-m_f(x))^2\big]$, and $\tilde{b}(x)=\mathbb{E}\left\{\tilde{f}(W)-f(W) \mid D^t,  X=x\right\}$, $D^t$ is the training sample.
\end{lemma}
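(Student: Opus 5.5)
This lemma is Kennedy's stability-based error decomposition rewritten in the present notation. The plan is to obtain it directly from Definition~\ref{def:stability} by a single add-and-subtract.

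Condition on the training sample $D^t=D_1^n$, so that $\tilde f$ is a fixed function and $\tilde b(x)=\mathbb{E}\{\tilde f(W)-f(W)\mid D^t,X=x\}$ is well defined. Then write
\[
\hat m_{\tilde f}(x)-m_f(x)=\{\hat m_f(x)-m_f(x)\}+\widehat{\mathbb{E}}_n\{\tilde b(X)\mid X=x\}+\Delta_n(x),
\]
where
\[
\Delta_n(x):=\widehat{\mathbb{E}}_n\{\tilde f(W)\mid X=x\}-\widehat{\mathbb{E}}_n\{f(W)\mid X=x\}-\widehat{\mathbb{E}}_n\{\tilde b(X)\mid X=x\}.
\]
This identity holds trivially. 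The first two terms on the right are exactly the oracle error and the smoothed plug-in bias appearing in the statement.

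The remaining step is to show $\Delta_n(x)=o_p(R_n^*(x))$. The denominator in Definition~\ref{def:stability} is precisely $\sqrt{\mathbb{E}[\{\widehat{\mathbb{E}}_n(f(W)\mid X=x)-\mathbb{E}(f(W)\mid X=x)\}^2]}=R_n^*(x)$. Hypothesis 2 gives $d(\tilde f,f)\xrightarrow{p}0$, and hypothesis 1 (stability) then yields $\Delta_n(x)/R_n^*(x)\xrightarrow{p}0$, which is the required statement. If $R_n^*(x)=0$ for some $n$, the oracle regression equals $m_f(x)$ almost surely. In that case I would interpret the ratio through the convention built into the definition, or simply exclude this trivial case.

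The only potential subtlety is the role of conditioning. The stability definition is stated with $\tilde f$ depending on $D_1^n$ and $\widehat{\mathbb{E}}_n$ fit on an independent $D_2^n$, so $\tilde b$ is a random function measurable with respect to $D_1^n$. I would check that the sample-splitting setup of Section~\ref{s:4} matches this: nuisances are trained on one half and the second-stage regression is fit on the other. It would also be worth confirming that convergence in probability is meant unconditionally, which follows from conditional convergence by dominated convergence applied to bounded probabilities.

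Beyond this measurability bookkeeping, nothing substantive is needed. The lemma is essentially a restatement of the definition combined with the add-and-subtract identity, and it is then applied to $f=\varphi_{\tau^u}$ (restricted to $\mathcal S_u$) and $f=\zeta_{\tau^u}$ to obtain Theorem~\ref{thm:smooth_sub_one}.
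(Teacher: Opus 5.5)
Your proposal is correct and is essentially the argument behind this lemma. The paper gives no proof of its own and imports the result from \citet{kennedy2023towards}, where it follows immediately from the stability definition: the denominator in Definition~\ref{def:stability} is exactly $R_n^*(x)$, so your add-and-subtract identity together with hypotheses 1--2 gives $\Delta_n(x)=o_p(R_n^*(x))$, and your remarks on conditioning and on the degenerate case $R_n^*(x)=0$ are harmless bookkeeping.
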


Hence for the subset and one-step estimator, it is quite straightforward to apply this lemma \ref{lemmaken}. Here $\hat{\tau}_{sub}^u(x)$ or $\hat{\tau}_{one}^u(x)$ is $\hat m_{\tilde f}(x)$, $\tau^u(x)$ is $m_f(x)$. $\tilde b(x)$ is the bias of plug-in limit parameter $\tilde \tau^u_{one}(x) - \tau^u(x)$ or $\tilde \tau^u_{sub}(x) - \tau^u(x)$.

\subsection{Error for the EIF estimator}\label{s:error_eif}
We define the EIF-based estimator using the regression operator
$\widehat{\mathbb{E}}_n$ as
\[
\widehat{\tau}^{u}_{\mathrm{eif}}(x)
= \widehat{\mathbb{E}}_n\!\left[
\frac{\tilde{\phi}_{1,u}(W) - \tilde{\phi}_{0,u}(W)}
     {\hat m_{\widetilde g^u}(X)}
\,\Big|\, X=x \right],
\]
where $\hat m_{\widetilde g^u}(x)
= \widehat{\mathbb{E}}_n\!\left[\widetilde g^u(W) \mid X=x\right]$ denotes the estimated plug-in denominator function. In general, the
regression method used to estimate $\hat m_{\widetilde g^u}(x)$ does not need to
coincide with the second-stage regression method used to estimate
$\widehat{\tau}^{u}_{\mathrm{eif}}(x)$. For
simplicity, however, we focus our theoretical analysis on the case where the
same regression operator $\widehat{\mathbb{E}}_n$ is used in both steps. Recall that for the EIF estimator, we use a three-way sample splitting procedure as described above.

For the following theorem, define $\tilde{\xi}_{u}
:= \frac{\tilde\phi_{1,u}(W)-\tilde \phi_{0,u}(W)}{\hat m_{\widetilde g^u}(X)}$, $\xi_{u}
:= \frac{\phi_{1,u}(W)-\phi_{0,u}(W)}{\hat m_{\widetilde g^u}(X)}$.
Moreover, let $R_{1}^{*}(x)^{2}
:= \mathbb{E}\Big[
\{
\hat{\mathbb{E}}_{n}\!\left[
\xi_{u}\,\big|\, X=x
\right]
-\mathbb{E}\!\left[
\xi_{u}\,\big|\, X=x
\right]\}^{2}
\Big]$, 
and $R_g^*(x)^2$ 
$:=\mathbb{E}\Big[
\{
\hat{\mathbb{E}}_{n}\!(
g_{u}\,\big|\, X=x)
-\mathbb{E}\!(
g_{u}\,\big|\, X=x)
\}^{2}
\Big]$.

\begin{theorem}\label{smooth_eif}
Assume the following conditions hold: (i)
The second-stage regression operator $\widehat{\mathbb{E}}_{n}$ is a linear
smoother with weights $\{w_i(x)\}_{i=1}^n$, i.e.,
$\widehat{\mathbb{E}}_{n}[f(W)\mid X=x]=\sum_{i=1}^n w_i(x) f(W_i)$, and satisfies
the weight regularity $\sum_{i=1}^n |w_i(x)| = O_p(1)$, and localization conditions  $\sum_{i=1}^n |w_i(x)|\,\mathbf{1}\{|X_i-x|>\delta\}\xrightarrow{p}0
\quad\text{for each }\delta>0$. (ii)
$d(\tilde{\xi}_{u},\xi_{u})\xrightarrow{p}0$ and $d(\tilde g^{u}, g^{u})\xrightarrow{p}0$. (iii) overlap: There exists $\epsilon>0$ such that almost surely, $\epsilon \le \pi(X), \tilde\pi(X) \le 1-\epsilon$, $\tilde p_1(x), p_1(x)\ge \epsilon,\,p_0(x), \tilde p_0(x) \le 1-\epsilon$, $p_1(x)-p_0(x), \tilde p_1(x)-\tilde p_0(x)\ge \epsilon$, for all $x$.
(iv)
The function $x\mapsto \tau^{u}(x)/\hat m_{\widetilde g^{u}}(x)$ is continuous and
uniformly bounded by a constant $M<\infty$. 
Then $\hat{\tau}^u_{\mathrm{eif}}(x) - \tau^u(x)$ equals
\begin{equation*}
\begin{aligned}
&\underbrace{\hat{\mathbb{E}}_{n}\!\left[
    \xi_u
    \,\Big|\, X = x
  \right]
  -
  \mathbb{E}\!\left[
   \xi_u
    \,\Big|\, X = x
  \right]
  - \frac{\tau^u(x)}{\hat{m}_{\tilde g^u}(x)}
    \Big(
      \hat{\mathbb{E}}_{n}[g^u \mid X =x]
      - \mathbb{E}[g^u\mid X =x]
    \Big)}_{\text{oracle error}}\\
    &+ \underbrace{\hat{\mathbb{E}}_n\!\Biggl[
     \frac{\mathbb{E}\!\left[\tilde g^{u}\mid X\right]}{\hat m_{\tilde g^{u}}(X)}
     \bigl(\tilde\tau^{u}_{\mathrm{eif}}(X)-\tau^{u}(X)\bigr)
     \,\Bigm|\,X=x\Biggr]}_{\text{smooth weighted plug-in bias}}
+ o_p\big(R_1^*(x)+R_g^*(x)\big).
\end{aligned}
\end{equation*}
\end{theorem}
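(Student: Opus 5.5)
We decompose $\hat\tau^u_{\mathrm{eif}}(x)-\tau^u(x)$ into a pseudo-outcome regression error and a denominator regression error, and apply Lemma~\ref{lemmaken} to each piece. Throughout we condition on the first two subsamples (nuisance fit and denominator fit). Then $\hat m_{\tilde g^u}(\cdot)$ is a fixed function on the third subsample, so $\tilde\xi_u$ and $\xi_u$ are fixed transformations of $W$ there.

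\textbf{Step 1: the numerator regression.} Apply Lemma~\ref{lemmaken} with $\tilde f=\tilde\xi_u$ and $f=\xi_u$. Stability holds for linear smoothers by Theorem~\ref{thm:linear_stable}, and $d(\tilde\xi_u,\xi_u)\to_p 0$ by (ii). This gives
\[
\hat\tau^u_{\mathrm{eif}}(x)=\hat{\mathbb E}_n[\xi_u\mid X=x]+\hat{\mathbb E}_n[\tilde b(X)\mid X=x]+o_p(R_1^*(x)),
\]
with $\tilde b(X)=\mathbb E[\tilde\phi_{1,u}-\tilde\phi_{0,u}-(\phi_{1,u}-\phi_{0,u})\mid X]/\hat m_{\tilde g^u}(X)$.

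\textbf{Step 2: rewrite the plug-in bias.} Using the numerator/denominator decomposition from the proof of Theorem~\ref{robustness_eif},
\[
\mathbb E[\tilde\phi_{1,u}-\tilde\phi_{0,u}-(\phi_{1,u}-\phi_{0,u})\mid X]
=\mathbb E[\tilde g^u\mid X](\tilde\tau^u_{\mathrm{eif}}-\tau^u)+\tau^u(X)\bigl(\mathbb E[\tilde g^u\mid X]-\mathbb E[g^u\mid X]\bigr).
\]
Hence $\hat{\mathbb E}_n[\tilde b\mid x]$ equals the smooth weighted plug-in bias plus the leftover
\[
\hat{\mathbb E}_n\!\left[\frac{\tau^u(X)}{\hat m_{\tilde g^u}(X)}\bigl(\mathbb E[\tilde g^u\mid X]-e^u(X)\bigr)\,\Big|\,X=x\right].
\]

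\textbf{Step 3: the oracle part.} Write
\[
\hat{\mathbb E}_n[\xi_u\mid x]-\tau^u(x)=\bigl(\hat{\mathbb E}_n[\xi_u\mid x]-\mathbb E[\xi_u\mid x]\bigr)+\Bigl(\mathbb E[\xi_u\mid x]-\tau^u(x)\Bigr).
\]
Since $\mathbb E[\xi_u\mid X]=\tau^u(X)e^u(X)/\hat m_{\tilde g^u}(X)$, the second term equals $\tau^u(x)\{e^u(x)-\hat m_{\tilde g^u}(x)\}/\hat m_{\tilde g^u}(x)$. Now split $\hat m_{\tilde g^u}-e^u$ into $\bigl(\hat{\mathbb E}_n[g^u]-\mathbb E[g^u]\bigr)$ and $\bigl(\hat{\mathbb E}_n[\tilde g^u]-\hat{\mathbb E}_n[g^u]\bigr)$.

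For the second piece, apply Lemma~\ref{lemmaken} again on the denominator subsample with $\tilde f=\tilde g^u$ and $f=g^u$. This gives
\[
\hat{\mathbb E}_n[\tilde g^u\mid x]-\hat{\mathbb E}_n[g^u\mid x]=\hat{\mathbb E}_n\bigl[\mathbb E[\tilde g^u\mid X]-e^u(X)\mid x\bigr]+o_p(R_g^*(x)).
\]
Multiplied by $-\tau^u(x)/\hat m_{\tilde g^u}(x)$, which is bounded by (iii)--(iv), the remainder stays $o_p(R_g^*)$. The $\hat{\mathbb E}_n[g^u]-\mathbb E[g^u]$ piece produces exactly the second oracle term in the statement.

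\textbf{Step 4: cancellation, the main obstacle.} The denominator-bias term from Step 3 must cancel the leftover from Step 2. The leftover is the smoother applied to $h(X)\,\delta(X)$ with $h=\tau^u/\hat m_{\tilde g^u}$ and $\delta=\mathbb E[\tilde g^u\mid X]-e^u$. The Step 3 term is $h(x)\,\hat{\mathbb E}_n[\delta\mid x]$. Their difference is $\sum_i w_i(x)\{h(X_i)-h(x)\}\delta(X_i)$.

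To control it, split the sum into $|X_i-x|\le\delta$ and $|X_i-x|>\delta$.
\begin{itemize}
\item On the near region, continuity of $h$ from (iv) gives $|h(X_i)-h(x)|$ small. Together with $\sum|w_i|=O_p(1)$ and boundedness of $\delta$ under overlap (iii), this part is small relative to the $\delta$-bias.
\item On the far region, the localization condition in (i) and the bound $M$ make the contribution vanish.
\end{itemize}
The delicate point is showing this difference is $o_p(R_1^*+R_g^*)$ rather than merely $o_p(1)$. I would attempt it by bounding $\delta$ by the product of nuisance errors, $e_\pi e_{p_z}$, computed in the proof of Theorem~\ref{robustness_eif}. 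If that fails, I would absorb the difference into the stated remainder via a Cauchy--Schwarz bound with the $w^2$-norm, as in the stability argument. Collecting the pieces yields the claimed decomposition.
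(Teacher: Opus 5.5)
Your argument is the paper's proof step for step. It uses the same two applications of Lemma~\ref{lemmaken} (to $(\tilde\xi_u,\xi_u)$ and to $(\tilde g^u,g^u)$), the same rewriting of the plug-in bias as the weighted term plus $\hat{\mathbb{E}}_n[h\delta\mid X=x]$, and the same near/far split (the paper's Lemma~\ref{le_smooth}) for the leftover. The obstacle you flag in Step 4 is a genuine gap, and the paper does not close it either: it proves only that the leftover tends to $0$ in probability, then declares the decomposition established.

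Your Cauchy--Schwarz fix cannot work. The leftover $\sum_i w_i(x)\{h(X_i)-h(x)\}\delta(X_i)$ carries no outcome noise and scales with $\delta$, so it is $o_p(R_g^*)$ only under an extra rate condition on the score errors. Your product-of-errors fix is the right diagnosis, but it leads to a different conclusion. $\delta$ is a product of score errors (e.g.\ $-e_\pi e_{p_1}/\tilde\pi$ for $u=00$). So the near/far split bounds the leftover, up to constants, by the local oscillation of $h$ times $\sum_i|w_i(x)|\,|e_\pi(X_i)\,e_{p_z}(X_i)|$, plus a far-region piece. This is a $\pi$-by-$p_z$ term that does not appear in $\tilde\tau^u_{\mathrm{eif}}-\tau^u$: there the $e_\pi e_{p_z}$ contributions cancel exactly, because the exact $\mathbb{E}[\tilde g^u\mid X]$ sits in the denominator rather than $\hat m_{\tilde g^u}$. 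What you can actually prove is therefore the stated decomposition plus such a term, not a remainder that is $o_p(R_1^*+R_g^*)$.

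A second issue, also shared with the paper, is the single-sum form itself. Writing the leftover as $\sum_i w_i(x)\{h(X_i)-h(x)\}\delta(X_i)$ presumes that two smoothers share design points and weights: the one in Step 3, fit on the denominator subsample, and the one in Steps 1--2, fit on the third subsample. They cannot share them, because Step 1 needs $\hat m_{\tilde g^u}$ to be independent of the sample on which $\xi_u$ is regressed. With two different smoothers, the leftover also contains $h(x)\{\hat{\mathbb{E}}^{(3)}_n[\delta\mid X=x]-\hat{\mathbb{E}}^{(2)}_n[\delta\mid X=x]\}$, with superscripts indexing subsamples. This term does not cancel, and it can only be bounded by smoothed $|e_\pi e_{p_z}|$ terms, without the small oscillation factor.
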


Although the assumptions in Theorem~\ref{smooth_eif} differ from those used for the subset and one-step results, they are overall mild and largely technical, primarily ensuring the same oracle error structure and smoothed-bias decomposition. Even though the assumptions required for linear smoothers are stronger than the stability condition, they are still standard for kernel regression, local polynomial regression, many series/sieve methods, and certain tree-based procedures that admit a linear-smoother representation. We conjecture that it may be possible to extend our results to the more general stability assumption. The third assumption is a classical overlap assumption. Assumption (iv) is a mild regularity condition. It is satisfied, for example, when $\tau^{u}(\cdot)$ is continuous and bounded and the denominator regression $\hat m_{\widetilde g^{u}}(\cdot)$ is continuous and bounded away from zero with probability tending to one, which can be enforced via standard truncation of $\hat m_{\widetilde g^{u}}$.

\subsection{Proof of Theorem \ref{smooth_eif}}\label{s:proof10}

Like the EIF plug-in limit parameter bias, the EIF smooth estimator can also be decomposed into the numerator bias and denominator bias. 

\begin{align*}
\hat{\tau}_{\mathrm{eif}}^{u}(x) - \tau^{u}(x)
&= \hat{\mathbb{E}}_{n}\!\left[
      \frac{\tilde{\phi}_{1,u}-\tilde{\phi}_{0,u}}{\hat m_{\tilde g^{u}}(X)}
      \,\Big|\, X = x
    \right]
   - \mathbb{E}\!\left[
      \frac{\phi_{1,u}-\phi_{0,u}}{\hat m_{\tilde g^{u}}(X)}
      \,\Big|\, X = x
    \right] \quad (\mathrm{I}) \\[6pt]
&\quad
   + \mathbb{E}\!\left[
      \frac{\phi_{1,u}-\phi_{0,u}}{\hat m_{\tilde g^{u}}(X)}
      \,\Big|\, X = x
    \right]
   - \tau^{u}(x) \quad (\mathrm{II}) .
\end{align*}

Here the first part (I) is the point wise error of numerator. The second part (II) is the point wise error of denominator. We focus on the second part (II) first. Recall $\mathbb{E}[\phi_{1,u}-\phi_{0,u}\mid X] = \tau^u(X) m_{g^u}(X)$
\begin{equation*}
        II = \frac{\tau^u(X) (m_{g^u}(x)-\hat m_{\tilde g^{u}}(x))}{\hat m_{\tilde g^{u}}(x)} 
\end{equation*}
Recall from \cite{kennedy2023towards}, we have:
\begin{equation*}
    \begin{aligned}
        &\hat m_{\tilde g^{u}}(x) - m_{\tilde g^{u}}(x) := \hat{\mathbb{E}}_n [\tilde g^u \mid X = x] - \mathbb{E}[g^u \mid X = x]\\
        & = \hat{\mathbb{E}}_n [g^u \mid X = x] - \mathbb{E}[g^u \mid X = x] + \hat{\mathbb{E}}_n[\tilde b_g(X) \mid X = x] + o_p(R_g^*(x))
    \end{aligned}
\end{equation*}
where $\tilde b_{g}(x) := \mathbb{E}[\tilde g^u - g^u \mid X =x]$ and $(R_g^*(x))^2 := \mathbb{E}\!\big[(\hat{\mathbb{E}}_n [g^u \mid X = x] - \mathbb{E}[g^u \mid X = x])^2\big]$

Throughout the analysis we assume a sample–splitting scheme such that the
data used to estimate the denominator $\hat m_{\tilde g^u}(X)$ is independent
of the data used to regress the pseudo–outcome.  Hence we can apply the \cite{kennedy2023towards} results.
For the first part (I), 
\begin{align*}
\hat{\mathbb{E}}_{n}\!\left[
  \frac{\tilde\phi_{1,u}-\tilde\phi_{0,u}}{\hat m_{\tilde g^u}(X)}
  \,\Big|\, X = x
\right]
&-
\mathbb{E}\!\left[
  \frac{\phi_{1,u}-\phi_{0,u}}{\hat m_{\tilde g^u}(X)}
  \,\Big|\, X = x
\right]
=
\hat{\mathbb{E}}_{n}\!\left[
  \frac{\phi_{1,u}-\phi_{0,u}}{\hat m_{\tilde g^u}(X)}
  \,\Big|\, X = x
\right]\\
&-
\mathbb{E}\!\left[
  \frac{\phi_{1,u}-\phi_{0,u}}{\hat m_{\tilde g^u}(X)}
  \,\Big|\, X = x
\right]  
+ \hat{\mathbb{E}}_{n}\!\left[\tilde b_1(X)\,\big|\, X = x\right]
+ o_p(R_1^*(x)).
\end{align*}
where
\[
\tilde b_1(x)
:= \mathbb{E}\!\left[
    \frac{\tilde \phi_{1,u}-\tilde \phi_{0,u}}{\hat m_{\tilde g^u}(X)}
    - \frac{\phi_{1,u}- \phi_{0,u}}{\hat m_{\tilde g^u}(X)}
    \,\Bigg|\, X = x
  \right]
\]
and
\[
R_1^*(x)^2
:= \mathbb{E}\Bigg[
    \Big(
      \hat{\mathbb{E}}_{n}\!\left[
        \frac{\phi_{1,u}-\phi_{0,u}}{\hat m_{\tilde g^u}(X)}
        \,\Bigg|\, X = x
      \right]
      -
      \mathbb{E}\!\left[
        \frac{\phi_{1,u}-\phi_{0,u}}{\hat m_{\tilde g^u}(X)}
        \,\Bigg|\, X = x
      \right]
    \Big)^2
  \Bigg].
\]
Combining together, we have
\begin{equation*}
\begin{aligned}
      \hat{\tau}^u_{eif}(x) - \tau^u(x) &= \hat{\mathbb{E}}_{n}\!\left[
  \frac{\phi_{1,u}-\phi_{0,u}}{\hat m_{\tilde g^u}(X)}
  \,\Big|\, X = x
\right]-
\mathbb{E}\!\left[
  \frac{\phi_{1,u}-\phi_{0,u}}{\hat m_{\tilde g^u}(X)}
  \,\Big|\, X = x
\right]   \\
& - \frac{\tau^u(x)}{\hat{m}_{\tilde g^u}(x)}\left(\hat{\mathbb{E}}\left[g^u \mid X =x\right] - \mathbb{E}\left[g^u\mid X =x\right]\right)\\
& +\hat{\mathbb{E}}_n\!\left[\tilde b_1^{u}(X)\mid X=x\right]
  - \frac{\tau^{u}(x)}{\hat m_{\tilde g^{u}}(x)}
    \hat{\mathbb{E}}_n\!\left[\tilde b_g^{u}(X)\mid X=x\right] + o_p(R_1^*(x)+R_g^*(x))\\
\end{aligned}
\end{equation*}
Now we focus on the last line (bias part). We want to prove 
\begin{align*}
    \hat{\mathbb{E}}_n\!\left[\tilde b_1^{u}(X)\mid X=x\right]
  - \frac{\tau^{u}(x)}{\hat m_{\tilde g^{u}}(x)}
    &\hat{\mathbb{E}}_n\!\left[\tilde b_g^{u}(X)\mid X=x\right]=\\ 
    &\hat{\mathbb{E}}_n\!\Biggl[
     \frac{\mathbb{E}\!\left[\tilde g^{u}\mid X\right]}{\hat m_{\tilde g^{u}}(X)}
     \bigl(\tilde\tau^{u}_{\mathrm{eif}}(X)-\tau^{u}(X)\bigr)
     \,\Bigm|\,X=x\Biggr] 
\end{align*}
We use $u =00$ as an illustrative example to prove the second equation. From previous analysis, 
\begin{align*}
\tilde b_1^{00}(x)
&= - \frac{\tau^u(x)}{\hat m_{g^{00}}(x)\,\tilde\pi(x)}\,e_\pi(x)e_{p_1}(x)
  + \frac{\mathbb{E}\!\left[\tilde g^{00}\mid X=x\right]}{\hat m_{\tilde g^{00}}(x)}
    \bigl(\tilde\tau^{00}_{\mathrm{eif}}(x)-\tau^{00}(x)\bigr).
\end{align*} 
Then, combining together, since $\hat{\mathbb{E}}_n$ is a linear smoother, we have
\begin{align*}
&\hat{\mathbb{E}}_n\!\left[\tilde b_1^{00}(X)\mid X=x\right]
  - \frac{\tau^{00}(x)}{\hat m_{\tilde g^{00}}(x)}
    \hat{\mathbb{E}}_n\!\left[\tilde b_g^{00}(X)\mid X=x\right] \\[4pt]
&= \hat{\mathbb{E}}_n\!\Biggl[
     \frac{\mathbb{E}\!\left[\tilde g^{00}\mid X\right]}{\hat m_{\tilde g^{00}}(X)}
     \bigl(\tilde\tau^{00}_{\mathrm{eif}}(X)-\tau^{00}(X)\bigr)
     \,\Bigm|\,X=x\Biggr] \\[4pt]
&\qquad
  + \underbrace{
      \frac{\tau^{00}(x)}{\hat m_{\tilde g^{00}}(x)}
      \hat{\mathbb{E}}_n\!\Bigl[\frac{1}{\tilde\pi(X)}\,e_\pi(X)e_{p_1}(X)\,\Bigm|\,X=x\Bigr]
      - \hat{\mathbb{E}}_n\!\Bigl[
          \frac{\tau^{00}(X)}{\hat m_{\tilde g^{00}}(X)\,\tilde\pi(X)}\,e_\pi(X)e_{p_1}(X)
          \,\Bigm|\,X=x\Bigr]
    }_{(*)}.
\end{align*}
If we can show $(*)\xrightarrow{p} 0$, then we are done. 
We introduced the following lemma first. 
\begin{lemma}[Vanishing linear–smoother remainder term]\label{le_smooth}
Fix a point $x \in \mathbb{R}^d$. Let $(X_i)_{i=1}^n$ be a sample and consider
the linear smoother term
\[
  T_n(x) := \sum_{i=1}^n w_i(x)\,\{c(X_i) - c(x)\}\, s(X_i).
\]

Assume:
\begin{enumerate}
  \item \textbf{Weight regularity.} There exists a constant $C < \infty$ such
  that, for all $n$,
  \begin{equation}
    \sum_{i=1}^n |w_i(x)| \le C
    \quad\text{a.s.} 
  \end{equation}

  \item \textbf{Localization.} For every $\delta > 0$,
  \begin{equation}
    \sum_{i=1}^n |w_i(x)| \,\mathbf{1}\{|X_i - x| > \delta\}
      \;\xrightarrow{p}\; 0. 
  \end{equation}

  \item \textbf{Boundedness.}
  The function $a(\cdot)$ is continuous at $x$ and bounded:
  $|a(u)| \le M_a$ for all $u$.
  The function $b(\cdot)$ is bounded: $|b(u)| \le M_b$ for all $u$.
\end{enumerate}
Then
\[
  T_n(x) \xrightarrow{p} 0.
\]
\end{lemma}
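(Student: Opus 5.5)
Throughout, I read the statement with $a=c$ and $b=s$; the lemma's notation is slightly inconsistent, so $c$ is continuous at $x$ and bounded by $M_a$, and $s$ is bounded by $M_b$. The approach is a standard localization argument: split the sum according to whether $X_i$ is close to $x$ or not, and control each piece separately.

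Fix $\eta>0$. By continuity of $c$ at $x$, choose $\delta>0$ such that $|c(u)-c(x)|\le \eta$ whenever $|u-x|\le\delta$. Then decompose
\[
T_n(x)=\sum_{i=1}^n w_i(x)\{c(X_i)-c(x)\}s(X_i)\mathbf{1}\{|X_i-x|\le\delta\}+\sum_{i=1}^n w_i(x)\{c(X_i)-c(x)\}s(X_i)\mathbf{1}\{|X_i-x|>\delta\}.
\]

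\textbf{Near term.} On the event $|X_i-x|\le\delta$ the factor $|c(X_i)-c(x)|$ is at most $\eta$, and $|s|\le M_b$. Weight regularity then bounds the first sum in absolute value by $\eta M_b\sum_i|w_i(x)|\le \eta M_b C$ almost surely.

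\textbf{Far term.} Here only boundedness is available: $|c(X_i)-c(x)|\le 2M_a$ and $|s|\le M_b$. So the second sum is at most $2M_aM_b\sum_i|w_i(x)|\mathbf{1}\{|X_i-x|>\delta\}$, which tends to $0$ in probability by the localization condition.

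\textbf{Conclusion.} For any $\varepsilon>0$, take $\eta$ with $\eta M_bC<\varepsilon/2$. Then
\[
P(|T_n(x)|>\varepsilon)\le P\Bigl(2M_aM_b\sum_i|w_i(x)|\mathbf{1}\{|X_i-x|>\delta\}>\varepsilon/2\Bigr)\to0,
\]
so $T_n(x)\xrightarrow{p}0$.

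In the intended application, Theorem~\ref{smooth_eif} gives weight regularity only as $\sum_i|w_i(x)|=O_p(1)$, not almost surely. I expect this to be the only point needing care. I would handle it by intersecting with the event $\{\sum_i|w_i(x)|\le C_\varepsilon\}$, which has probability at least $1-\varepsilon$, and running the same argument on that event.

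For use in the proof of Theorem~\ref{smooth_eif}, the choices are $c=\tau^u/\hat m_{\tilde g^u}$, which assumption (iv) makes continuous and bounded, and $s=e_\pi e_{p_1}/\tilde\pi$, which overlap makes bounded. The nuisance functions and $\hat m_{\tilde g^u}$ are independent of the regression sample, so conditioning on those samples makes $c$ and $s$ fixed functions. This justifies treating them as deterministic in the argument above.
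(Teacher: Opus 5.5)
Your proof is correct and follows essentially the same route as the paper's: choose $\delta$ from continuity of $c$ at $x$, bound the near-$x$ part by $\eta M_b C$ using weight regularity, and bound the far part by $2M_aM_b\sum_i |w_i(x)|\mathbf{1}\{|X_i-x|>\delta\}$, which vanishes by localization. Your additional remark on handling weight regularity when it holds only in the $O_p(1)$ sense, by restricting to the event $\{\sum_i|w_i(x)|\le C_\varepsilon\}$, is a useful refinement. It aligns the lemma with the hypothesis actually assumed in Theorem~\ref{smooth_eif}, which the paper does not address.
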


Therefore, Lemma~\ref{le_smooth} applies to the starred term. Here $c(x) = \frac{\tau^{00}(x)}{\hat m_{\tilde g^{00}}(x)}$ and $s(x) = \frac{1}{\tilde\pi(x)}\,e_\pi(x)e_{p_1}(x)$. The (*) will converge to the 0 in probability. 


Last, we will show the proof for the Lemma \ref{le_smooth}.
\begin{proof}
We have
\[
|T_n(x)|
\le \sum_{i=1}^n |w_i(x)|\,|c(X_i)-c(x)|\,|s(X_i)|.
\]
Fix $\varepsilon>0$. We will show $\mathbb{P}(|T_n(x)|>\varepsilon)\to 0$.

For $\delta>0$, define
\[
\begin{aligned}
C_n(\delta)
&:= \sum_{i=1}^n |w_i(x)|\,|c(X_i)-c(x)|\,|s(X_i)|\,
   \mathbf{1}\{|X_i-x|\le\delta\},\\[2mm]
S_n(\delta)
&:= \sum_{i=1}^n |w_i(x)|\,|c(X_i)-c(x)|\,|s(X_i)|\,
   \mathbf{1}\{|X_i-x|>\delta\}.
\end{aligned}
\]
Then
\[
|T_n(x)| \le C_n(\delta) + S_n(\delta).
\]

\noindent By continuity of $a(\cdot)$ at $x$, for any $\eta>0$ there exists $\delta>0$ such that
\[
|u-x|\le\delta \quad\Rightarrow\quad |c(u)-c(x)|\le\eta.
\]
For this $\delta$, using $|s(X_i)|\le M_b$,
\[
C_n(\delta)
\le \sum_{i=1}^n |w_i(x)|\,\eta\,M_b
= \eta M_b \sum_{i=1}^n |w_i(x)|
\le \eta M_b C.
\] Given our fixed $\varepsilon>0$, choose $\eta>0$ small enough that
$\eta M_b C < \varepsilon/2$. Then for this $\delta$,
\begin{equation}
C_n(\delta) \le \varepsilon/2
\quad\text{a.s. for all $n$}.
\label{eq:An-bound}
\end{equation}

\noindent For any $u$,
\[
|c(u)-c(x)| \le |c(u)| + |c(x)| \le 2M_a,
\]
and $|s(u)|\le M_b$, so
\[
|c(u)-c(x)|\,|s(u)| \le 2M_a M_b =: K.
\]
Therefore,
\[
\begin{aligned}
S_n(\delta)
&\le \sum_{i=1}^n |w_i(x)|\,K\,\mathbf{1}\{|X_i-x|>\delta\} \\
&= K \sum_{i=1}^n |w_i(x)|\,\mathbf{1}\{|X_i-x|>\delta\}.
\end{aligned}
\]
By the localization assumption,
\[
\sum_{i=1}^n |w_i(x)|\,\mathbf{1}\{|X_i-x|>\delta\}
\xrightarrow{p} 0,
\]
and hence $S_n(\delta)\xrightarrow{p}0$.
From \eqref{eq:An-bound} and the convergence $S_n(\delta)\xrightarrow{p}0$, for our chosen $\delta$,
\[
|T_n(x)| 
\le C_n(\delta) + S_n(\delta)
\le \varepsilon/2 + S_n(\delta).
\]
Thus
\[
\mathbb{P}(|T_n(x)|>\varepsilon)
\le \mathbb{P}\big(\varepsilon/2 + S_n(\delta)>\varepsilon\big)
= \mathbb{P}\big(S_n(\delta)>\varepsilon/2\big)
\longrightarrow 0.
\]
Therefore $T_n(x)\xrightarrow{p}0$, as claimed.
\end{proof}
\subsection{Proof of Corollary}\label{s:corollary}
In this section, we prove Corollary~\ref{col:sub_one} and then present and prove the analogous corollary for the EIF estimator. We first introduce the following proposition from \citet{kennedy2023towards}.

\begin{proposition}[{\citet{kennedy2023towards}}]\label{prop:kennedy_product_bound}
Suppose $\widehat b(x)=\widehat b_1(x)\,\widehat b_2(x)$ and the second-stage regression operator
$\widehat{\mathbb E}_n$ is a linear smoother with weights $\{w_i(x;X^n)\}_{i=1}^n$ and the weights satisfy $\sum_{i=1}^n \bigl|w_i(x;X^n)\bigr| = O_{\mathbb P}(1).$
Then
\[
\widehat{\mathbb E}_n\!\left\{\widehat b(X)\mid X=x\right\}
=
O_{\mathbb P}\!\left(\,\|\widehat b_1\|_{w,p}\,\|\widehat b_2\|_{w,q}
\right),
\]
where $1/p+1/q=1$ with $p,q>1$, and $\|f\|_{w,p}
:=
\left[
\sum_{i=1}^n
\left\{
\frac{|w_i(x;X^n)|}{\sum_{j=1}^n |w_j(x;X^n)|}
\right\}
|f(X_i)|^p
\right]^{1/p}.$
\end{proposition}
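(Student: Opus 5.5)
This is Kennedy's product bound for linear smoothers, recast in the weighted-norm notation. The plan is to write the smoothed product explicitly as a finite sum, factor out the total absolute weight, and then apply Hölder's inequality with respect to a probability vector built from the normalized weights.

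First, since $\widehat{\mathbb E}_n$ is a linear smoother,
\[
\widehat{\mathbb E}_n\{\widehat b(X)\mid X=x\}=\sum_{i=1}^n w_i(x;X^n)\,\widehat b_1(X_i)\,\widehat b_2(X_i).
\]
Taking absolute values and setting $W_n:=\sum_{j}|w_j(x;X^n)|$ (on the event $W_n>0$; otherwise the sum is zero and there is nothing to prove), we get
\[
\bigl|\widehat{\mathbb E}_n\{\widehat b(X)\mid X=x\}\bigr|\le W_n\sum_{i=1}^n \lambda_i\,|\widehat b_1(X_i)|\,|\widehat b_2(X_i)|,\qquad \lambda_i:=\frac{|w_i(x;X^n)|}{W_n}.
\]
The $\lambda_i$ are nonnegative and sum to one, so they define a probability measure on $\{1,\dots,n\}$.

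Second, apply Hölder's inequality for that discrete probability measure with conjugate exponents $p,q>1$, $1/p+1/q=1$:
\[
\sum_i\lambda_i|\widehat b_1(X_i)||\widehat b_2(X_i)|\le\Bigl(\sum_i\lambda_i|\widehat b_1(X_i)|^p\Bigr)^{1/p}\Bigl(\sum_i\lambda_i|\widehat b_2(X_i)|^q\Bigr)^{1/q}=\|\widehat b_1\|_{w,p}\,\|\widehat b_2\|_{w,q}.
\]
The right-hand side is exactly the stated product, because $\|\cdot\|_{w,p}$ is defined with precisely these normalized weights.

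Third, combining the two steps gives the deterministic pathwise inequality
\[
\bigl|\widehat{\mathbb E}_n\{\widehat b(X)\mid X=x\}\bigr|\le W_n\,\|\widehat b_1\|_{w,p}\,\|\widehat b_2\|_{w,q}.
\]
Since $W_n=O_{\mathbb P}(1)$ by assumption, and a product of an $O_{\mathbb P}(1)$ factor with a random sequence $a_n$ is $O_{\mathbb P}(a_n)$, we conclude $\widehat{\mathbb E}_n\{\widehat b(X)\mid X=x\}=O_{\mathbb P}(\|\widehat b_1\|_{w,p}\|\widehat b_2\|_{w,q})$.

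None of these steps is a serious obstacle. The only points needing care are the degenerate case $W_n=0$, which the pathwise bound handles trivially, and checking that the stochastic-order step is legitimate when the norms are themselves random. That last point follows because the bound holds for every realization, so for any $M$, $\mathbb P(|\cdot|>M\,a_n)\le\mathbb P(W_n>M)$, which can be made small uniformly in $n$.
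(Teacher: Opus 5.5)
Your proof is correct and takes essentially the same approach as the source: the paper gives no proof of its own and imports the result from \citet{kennedy2023towards}, where it follows exactly as you describe, by bounding the smoother by $\sum_j|w_j(x;X^n)|$ times a normalized-weight average and then applying H\"older's inequality. Your extra care with the degenerate case $\sum_j|w_j|=0$ and with the $O_{\mathbb P}$ step for random norms is valid and fills in details the cited argument leaves implicit.
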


We now turn to the proof of Corollary~\ref{col:sub_one}, beginning with the subset estimator.

From \eqref{eq:bias_sub10}, \eqref{eq:bias_sub00}, \eqref{eq:bias_sub11}, we can find that the plug-in bias can be written as products of two errors. For the subset estimator, the second-stage regression is fit on the subset $\mathcal S_u\subseteq\{1,\ldots,n\}$ with sample size $n_u:=|\mathcal S_u|$. Under the overlap condition, there exists a constant $c>0$ such that
\[
\mathbb{P}\!\left(\frac{n_u}{n}\ge c\right)\to 1 \quad \text{as } n\to\infty,
\]
so the subset sample size remains a non-vanishing fraction of $n$.
Hence, the rate result follows:
$$
\begin{aligned}
\widehat{\tau}^u_{sub}(x)-\tau^u(x) & =\widehat{\mathbb{E}}_n[\tilde \tau^u_{sub}(X) - \tau^u(X) \mid \mathcal{S}_u, X =x]+\widehat{\mathbb{E}}_n\!\bigl\{\varphi_{\tau^u}(W)\mid\mathcal{S}_u,\,X=x\bigr\}-\tau^u(x)\\
& =O_{\mathbb{P}}\left(n^{-\left(\frac{1}{2+p / \alpha_u}+\frac{1}{2+p / \alpha_\mu}\right)}\right)+\widehat{\mathbb{E}}_n\!\bigl\{\varphi_{\tau^u}(W)\mid\mathcal{S}_u,\,X=x\bigr\}-\tau^u(x)\\
& =O_{\mathbb{P}}\left(n^{-\left(\frac{1}{2+p / \alpha_u}+\frac{1}{2+p / \alpha_\mu}\right)}\right))+O_{\mathbb{P}}\left(n^{\frac{-1}{2+p / \gamma}}\right)
\end{aligned}
$$

\noindent where the second equality follows from Proposition \ref{prop:kennedy_product_bound} together with Assumptions in Corollary \ref{col:sub_one}, and the third since $\widehat{\mathbb{E}}_n$ is minimax optimal and the CATE is $\gamma$-smooth. For the oracle efficiency condition, note that

$$
\begin{aligned}
& n^{-\left(\frac{1}{2+p / \alpha_u}+\frac{1}{2+p / \alpha_\mu}\right)} \leq n^{-\frac{1}{2+p / \gamma}} \\
& \Longleftrightarrow \frac{1}{2+p / \alpha_u}+\frac{1}{2+p / \alpha_\mu} \geq \frac{1}{2+p / \gamma} \\
& \Longleftrightarrow \alpha_u \alpha_\mu \geq 
\frac{p^2}{4+\bigl(4+p / \alpha_u+p / \alpha_\mu\bigr) p / \gamma}
=
\frac{p^2 / 4}{1+\bigl(1+p / (4\alpha_u)+p / (4\alpha_\mu)\bigr) p / \gamma}.
\end{aligned}
$$
which yields the result.

From \eqref{eq:bias_one_00}, \eqref{eq:bias_one_10}, \eqref{eq:bias_one_11}, we can find that the plug-in bias can be written as products of two errors. 
Hence, the rate result follows:
$$
\begin{aligned}
\widehat{\tau}^u_{one}(x)-\tau^u(x) & =\widehat{\mathbb{E}}_n[\tilde \tau^u_{one}(X) - \tau^u(X) \mid  X =x]+\widehat{\mathbb{E}}_n\!\bigl\{\zeta_{\tau^u}(W)\mid\,X=x\bigr\}-\tau^u(x)\\
& =O_{\mathbb{P}}\left(n^{-\left(\frac{1}{2+p/\alpha_\pi}+\frac{1}{2+p/\alpha_{\mu}}\right)}
\;+\;
n^{-\left(\frac{1}{2+p/\alpha_p}+\frac{1}{2+p/\alpha_{\mu}}\right)}
\;+\;
n^{-\left(\frac{1}{2+p/\alpha_{\gamma}}+\frac{1}{2+p/\alpha_{p}}\right)}\right)\\
&\quad+\widehat{\mathbb{E}}_n\!\bigl\{\zeta_{\tau^u}(W)\mid\,X=x\bigr\}-\tau^u(x)\\
& =O_{\mathbb{P}}\left(n^{-\left(\frac{1}{2+p/\alpha_\pi}+\frac{1}{2+p/\alpha_{\mu}}\right)}
\;+\;
n^{-\left(\frac{1}{2+p/\alpha_p}+\frac{1}{2+p/\alpha_{\mu}}\right)}
\;+\;
n^{-\left(\frac{1}{2+p/\alpha_{\gamma}}+\frac{1}{2+p/\alpha_{p}}\right)}\right)\\
&\quad+O_{\mathbb{P}}\left(n^{\frac{-1}{2+p / \gamma}}\right)
\end{aligned}
$$
Moreover, if the preliminary estimator is T-learner:
$$
\widehat{\tau}^u_{one}(x)-\tau^u(x)= O_p\!\Bigg(
n^{\frac{-1}{2+p/\gamma}}
\;+\;
n^{-\left(\frac{1}{2+p/\alpha_\pi}+\frac{1}{2+p/\alpha_{\mu}}\right)}
\;+\;
n^{-\left(\frac{1}{2+p/\alpha_p}+\frac{1}{2+p/\alpha_{\mu}}\right)}
\Bigg).
$$
Denote $r = \max\{\alpha_\pi,\alpha_p\}$, we have 
$$
\widehat{\tau}^u_{one}(x)-\tau^u(x)= O_p\!\Bigg(
n^{\frac{-1}{2+p/\gamma}}
\;+\;
n^{-\left(\frac{1}{2+p/r}+\frac{1}{2+p/\alpha_{\mu}}\right)}
\Bigg),
$$
which is the similar structure as subset estimator and have the same efficient condition as subset estimator.  Now, we present and prove the Corollary for the EIF estimator.

\begin{corollary}\label{col:eif}
Suppose the assumptions of Theorem~\ref{smooth_eif} hold.
Assume further that: (i) The propensity score $\pi$ is $\alpha$-smooth and
$\|\tilde\pi-\pi\|_{w,2}=O_p\!\left(n^{-1/(2+p/\alpha)}\right)$. (ii) The principal score $p_z$ is $\delta$-smooth and
$\|\tilde p_z-p_z\|_{w,2}
=O_p\!\left(n^{-1/(2+p/\delta)}\right)$. (iii) The regressions $\mu_{zs}$ are $\beta$-smooth and
$\|\hat\mu_{zs}-\mu_{zs}\|_{w,2}
=O_p\!\left(n^{-1/(2+p/\beta)}\right)$. (iv) The numerator $\xi_u$ is $\gamma_{\xi}$-smooth and the denominator $g^u$ is
$\gamma_{g}$-smooth. Moreover,
$\mathbb{E}\!\left[\tilde g^{u}\mid X\right]/\hat m_{\tilde g^{u}}(X) = O_p(1)$.
Then
\[
\hat\tau_{\mathrm{eif}}^{u}(x)-\tau^{u}(x)
=
O_p\!\Bigg(
n^{\left(-\frac{1}{2+p / \gamma_{\xi}}+ \frac{1}{2+p / \gamma_g}\right)}
\;+\;
n^{-\left(\frac{1}{2+p/\alpha}+\frac{1}{2+p/\beta}\right)}
\;+\;
n^{-\left(\frac{1}{2+p/\delta}+\frac{1}{2+p/\beta}\right)}
\Bigg).
\]
\end{corollary}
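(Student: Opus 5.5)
The plan is to start from the decomposition of Theorem~\ref{smooth_eif} and bound its three pieces separately: the two oracle-error terms, the smoothed weighted plug-in bias, and the $o_p(R_1^*(x)+R_g^*(x))$ remainder. Throughout write
\[
a:=\frac{1}{2+p/\gamma_\xi},\qquad b:=\frac{1}{2+p/\gamma_g}.
\]
The target first term is $n^{-a+b}=n^{-a}\,n^{b}$. Since $b>0$, this is at least as large as $n^{-a}$, so a cruder bound on the oracle terms is enough.

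\textbf{Step 1 (numerator oracle error).} The term $\widehat{\mathbb E}_n[\xi_u\mid X=x]-\mathbb E[\xi_u\mid X=x]$ is the oracle regression of the $\gamma_\xi$-smooth target $\xi_u$. Because $\widehat{\mathbb E}_n$ is a minimax-optimal linear smoother, $R_1^*(x)=O(n^{-a})$. Since $n^{-a}\le n^{-a+b}$, this term and its $o_p(R_1^*)$ remainder are $O_p(n^{-a+b})$.

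\textbf{Step 2 (denominator oracle error).} By the same reasoning, $\widehat{\mathbb E}_n[g^u\mid X=x]-\mathbb E[g^u\mid X=x]=O_p(n^{-b})$. This term is multiplied by $\tau^u(x)/\hat m_{\tilde g^u}(x)$, which is at most $M$ by assumption (iv) of Theorem~\ref{smooth_eif}, so the whole product is $O_p(n^{-b})$.

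To absorb $n^{-b}$ into $n^{-a+b}$ I need $n^{-b}\le n^{b-a}$, that is, $a\le 2b$. I will get this from the structure of $\xi_u$. Its conditional mean is
\[
\mathbb E[\xi_u\mid X]=\frac{\tau^u(X)\,m_{g^u}(X)}{\hat m_{\tilde g^u}(X)},
\]
and $\hat m_{\tilde g^u}$ is itself a regression of the $g$-type pseudo-outcome. So $\xi_u$ can be no smoother than the denominator component, which gives $\gamma_\xi\le\gamma_g$. Hence $a\le b\le 2b$, and the term is $O_p(n^{-a+b})$. The $o_p(R_g^*)$ remainder is handled the same way.

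This smoothness comparison is the step I expect to be the main obstacle. If it cannot be justified in general, I would add it as an explicit reading of assumption (iv) of the corollary, namely that the smoothness of $\xi_u$ is inherited from, and hence bounded by, that of $g^u$.

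\textbf{Step 3 (smoothed weighted plug-in bias).} Here I use the exact expansions \eqref{eq:bias_eif_00}--\eqref{eq:bias_eif_11}. The weight $\mathbb E[\tilde g^u\mid X]/\hat m_{\tilde g^u}(X)$ is $O_p(1)$ by assumption (iv) of the corollary. Under the overlap condition, the coefficients $c^u_{a,b}$ are uniformly bounded. So the weighted bias is a bounded combination of products of one outcome-regression error with either $e_\pi$ or $e_{p_z}$, plus third-order terms.

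I then apply Proposition~\ref{prop:kennedy_product_bound} with $p=q=2$, using $\sum_i|w_i(x)|=O_p(1)$ from Theorem~\ref{smooth_eif}(i). This gives
\[
\|e_\pi\|_{w,2}\,\|e_{\mu}\|_{w,2}=O_p\!\left(n^{-\left(\frac{1}{2+p/\alpha}+\frac{1}{2+p/\beta}\right)}\right)
\quad\text{and}\quad
\|e_{p_z}\|_{w,2}\,\|e_{\mu}\|_{w,2}=O_p\!\left(n^{-\left(\frac{1}{2+p/\delta}+\frac{1}{2+p/\beta}\right)}\right).
\]
The third-order terms $R^u_3$ are smaller, because they carry an extra nuisance error that is bounded under overlap.

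\textbf{Conclusion.} Summing the bounds from Steps 1--3 gives the three displayed terms of Corollary~\ref{col:eif}.
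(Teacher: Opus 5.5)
Your Steps 1 and 3 match the paper's argument. The paper controls the weighted plug-in bias using the $O_p(1)$ weight $\mathbb{E}[\tilde g^u\mid X]/\hat m_{\tilde g^u}(X)$, the product-of-errors expansions \eqref{eq:bias_eif_00}--\eqref{eq:bias_eif_11}, and Proposition~\ref{prop:kennedy_product_bound}, and it bounds the numerator oracle error at the $\gamma_\xi$ rate.

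The gap is in Step 2. The inference $\gamma_\xi\le\gamma_g$ does not follow from the hypotheses. In (iv), $\gamma_\xi$ and $\gamma_g$ are separate smoothness parameters; they are H\"older-type lower bounds, so $\gamma_g$ may be taken well below the true smoothness of $e^u$. Moreover, the regression function $\mathbb{E}[\xi_u\mid X]=\tau^u(X)\,m_{g^u}(X)/\hat m_{\tilde g^u}(X)$ is close to $\tau^u$ whenever the denominator is well estimated, so nothing forces it to be rougher than $e^u$. The claim that a ratio containing a rough factor must itself be rough is a heuristic, not a deduction. Without $a\le 2b$, your bound $M\cdot O_p(n^{-b})$ on the denominator oracle term is not $O_p(n^{-a+b})$. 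This fails whenever $p/\gamma_g>2+2p/\gamma_\xi$. So your proof of the display as literally written needs an extra assumption, as you suspected.

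The resolution is that the exponent in the display is evidently a misprint for the sum $n^{-1/(2+p/\gamma_\xi)}+n^{-1/(2+p/\gamma_g)}$. The paper's proof only bounds the two oracle terms separately: the numerator term at the $\gamma_\xi$ rate, and the denominator term at the $\gamma_g$ rate after using $|\tau^u(x)/\hat m_{\tilde g^u}(x)|\le M$. That yields the sum, not $n^{-a+b}$. The remark after the corollary, that the EIF estimator is oracle efficient when $\gamma_\xi=\gamma_g=\gamma$, only makes sense under this reading, since the literal exponent would then be $0$. With the sum, your Step 2 is finished once you have $M\cdot O_p(n^{-b})$. Drop the smoothness comparison rather than adding it as an assumption.

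Both you and the paper also tacitly import the minimax-optimal-smoother hypothesis of Corollary~\ref{col:sub_one} to turn (iv) into oracle rates. That should be stated explicitly.
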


Furthermore, if we assume $\gamma_{\xi} = \gamma_{g}=\gamma$, then the EIF estimator is oracle efficient under the same conditions as one-step estimator.

\begin{proof}
    \begin{equation*}
\begin{aligned}
\hat{\tau}^u_{\mathrm{eif}}(x) &- \tau^u(x) = \hat{\mathbb{E}}_n\!\Biggl[
     \frac{\mathbb{E}\!\left[\tilde g^{u}\mid X\right]}{\hat m_{\tilde g^{u}}(X)}
     \bigl(\tilde\tau^{u}_{\mathrm{eif}}(X)-\tau^{u}(X)\bigr)
     \,\Bigm|\,X=x\Biggr]\\
&+\hat{\mathbb{E}}_{n}\!\left[
    \xi_u
    \,\Big|\, X = x
  \right]
  -
  \mathbb{E}\!\left[
   \xi_u
    \,\Big|\, X = x
  \right]
  - \frac{\tau^u(x)}{\hat{m}_{\tilde g^u}(x)}
    \Big(
      \hat{\mathbb{E}}_{n}[g^u \mid X =x]
      - \mathbb{E}[g^u\mid X =x]
    \Big)\\
& = O_p(n^{-\left(\frac{1}{2+p/\alpha}+\frac{1}{2+p/\beta}\right)}
\;+\;
n^{-\left(\frac{1}{2+p/\delta}+\frac{1}{2+p/\beta}\right)})\\
&+\hat{\mathbb{E}}_{n}\!\left[
    \xi_u
    \,\Big|\, X = x
  \right]
  -
  \mathbb{E}\!\left[
   \xi_u
    \,\Big|\, X = x
  \right]
  - \frac{\tau^u(x)}{\hat{m}_{\tilde g^u}(x)}
    \Big(
      \hat{\mathbb{E}}_{n}[g^u \mid X =x]
      - \mathbb{E}[g^u\mid X =x]
    \Big)\\
&= O_p(n^{-\left(\frac{1}{2+p/\alpha}+\frac{1}{2+p/\beta}\right)}
\;+\;
n^{-\left(\frac{1}{2+p/\delta}+\frac{1}{2+p/\beta}\right)}) + O_p\!(
n^{(-\frac{1}{2+p / \gamma_{\xi}}+ \frac{1}{2+p / \gamma_g})})
\end{aligned}
\end{equation*}
The first equality follows since the weight $\mathbb{E}\left[\tilde g^{u}\mid X\right]/\hat m_{\tilde g^{u}}(X)=O_p(1)$. The second equality holds because the weight $\tau^u(x)/\hat m_{\tilde g^{u}}(x)$ is uniformly bounded by a constant $M$.
\end{proof}

\subsection{Procedure}\label{s:procedure}

For subset and one-step estimator, in the first stage, the nuisance functions are estimated from sample $D_1^n$. In the second stage, these estimates are used to construct an estimate of the pseudo-outcome, which is then regressed on $X$ using the another sample $D^2_n$.

 For EIF estimation, the data are randomly partitioned into three independent subsamples $D_1^n, D_2^n,$ and $D_3^n$. 
 In the first stage, the nuisance functions $\pi(x)$, $p_z(x)$, and $\mu_{zs}(x)$ are estimated using $D_a^n$, 
 which are then combined to construct the auxiliary regression $\tilde g^u(x)$ and the pseudo-outcome difference 
$\tilde\phi_{1,u} - \tilde\phi_{0,u}$.  In the second stage, $\tilde g^u$ is regressed on $X$ using $D_2^n$ to estimate 
 $\widehat{\mathbb{E}}_n[\tilde g^u \mid X]$. 
 In the third stage, the EIF estimators are constructed by regressing, $\frac{\tilde\phi_{1,u} - \tilde\phi_{0,u}}{\widehat{\mathbb{E}}_n[\tilde g^u \mid X]}$
on $X$ using the held-out sample $D_3^n$. The following Figure~\ref{fig:procedure} illustrate this procedure in detail. 

\begin{figure}
    \centering
    \includegraphics[width=1.15\linewidth]{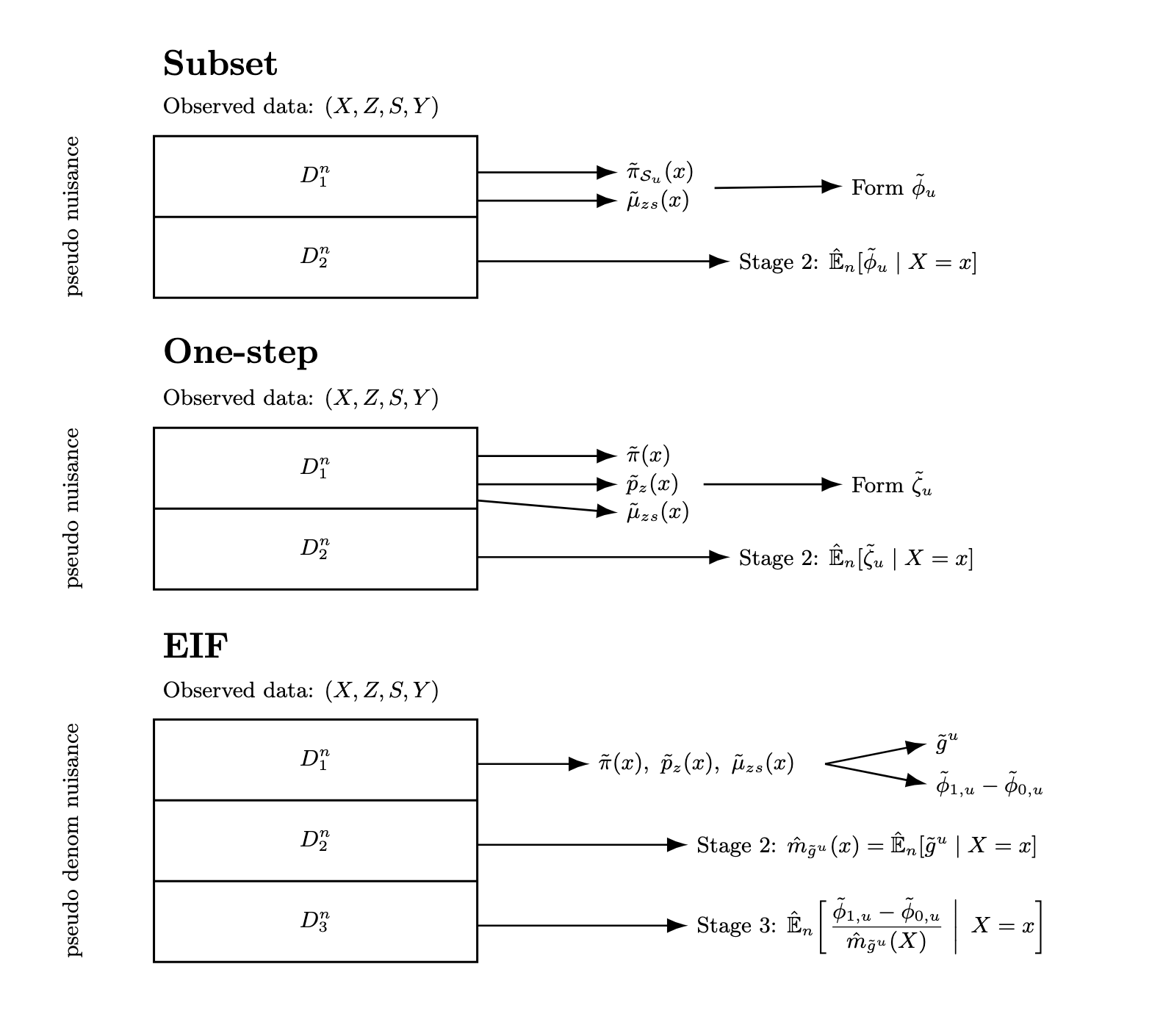}
    \caption{Schematic illustrating the Subset, EIF, and one-step identification}
    \label{fig:procedure}
\end{figure}

\subsection{Simulation}\label{app:sim}

Table~\ref{tab:setting_four_scenarios} summarizes the data-generating process for the first robustness study. 
In this study, we fit parametric working models. Thus, when the underlying true model is \emph{linear}, the corresponding parametric specification is correctly specified; when the underlying true model is \emph{nonlinear}, the parametric specification is misspecified.
Accordingly, the four scenarios correspond to: (i) all models correctly specified, (ii) only the score models $\{\pi(X),p_z(X)\}$ correctly specified, (iii) only the outcome model $b(X)$ correctly specified, and (iv) all models misspecified. Figure \ref{fig:robustness for tau00} and \ref{fig:robustness for tau11} shows the log-RMSE for estimating $\tau^{00}(x)$ and $\tau^{11}(x)$ across sample sizes under four nuisance-specification regimes.

\begin{table}[]
\centering
\rotatebox{90}{%
\makebox[\textwidth]{%
\begin{minipage}{\textwidth}
\centering
\renewcommand{\arraystretch}{1.35}

\begin{tabular*}{\textwidth}{l@{\extracolsep{\fill}}p{0.46\textwidth}p{0.46\textwidth}}
\toprule
& Scenario 1: All correct \; 
& Scenario 2:  $\pi(\cdot)$ and $p_z(\cdot)$ correct only \;  \\
\midrule
$\text{logit}\{\pi(X)\}$
& $-0.4 + 0.4X_1 + 0.4X_2 + 0.4X_3$
& $-0.4 + 0.4X_1 + 0.4X_2 + 0.4X_3$ \\

$\text{logit}\{p_z(X)\}$
& $(2Z - 1)(0.4X_1 - 0.4X_2 + 0.4X_4 + 0.8)$
& $(2Z - 1)(0.4X_1 - 0.4X_2 + 0.4X_4 + 0.8)$ \\

$b(X)$
& $X_1 - 0.4X_2 + X_3 + 0.5X_4$
& $\sin(2\pi X_1X_2) + (X_3 - 0.5)^2 + (X_4 - 0.5)^2$ \\
\midrule
& Scenario 3: $b(\cdot)$ correct only \; 
& Scenario 4: None correct \;  \\
\midrule
$\text{logit}\{\pi(X)\}$
& $0.5\sin(2\pi X_1X_2) + 0.5(X_3 - 0.5)^2$
& $0.5\sin(2\pi X_1X_2) + 0.5(X_3 - 0.5)^2$ \\

$\text{logit}\{p_z(X)\}$
& $(2Z - 1)(0.8 + (X_1-0.5)^2 + 0.6\sin(2\pi X_3X_4))$
& $(2Z - 1)(0.8 + (X_1-0.5)^2 + 0.6\sin(2\pi X_3X_4))$ \\

$b(X)$
& $X_1 - 0.4X_2 + X_3 + 0.5X_4$
& $\sin(2\pi X_1X_2) + (X_3 - 0.5)^2 + (X_4 - 0.5)^2$ \\
\bottomrule
\end{tabular*}

\caption{The data generating process for Study I.}
\label{tab:setting_four_scenarios}
\end{minipage}}}
\end{table}

\begin{figure}
    \centering
    \includegraphics[width=0.9\linewidth]{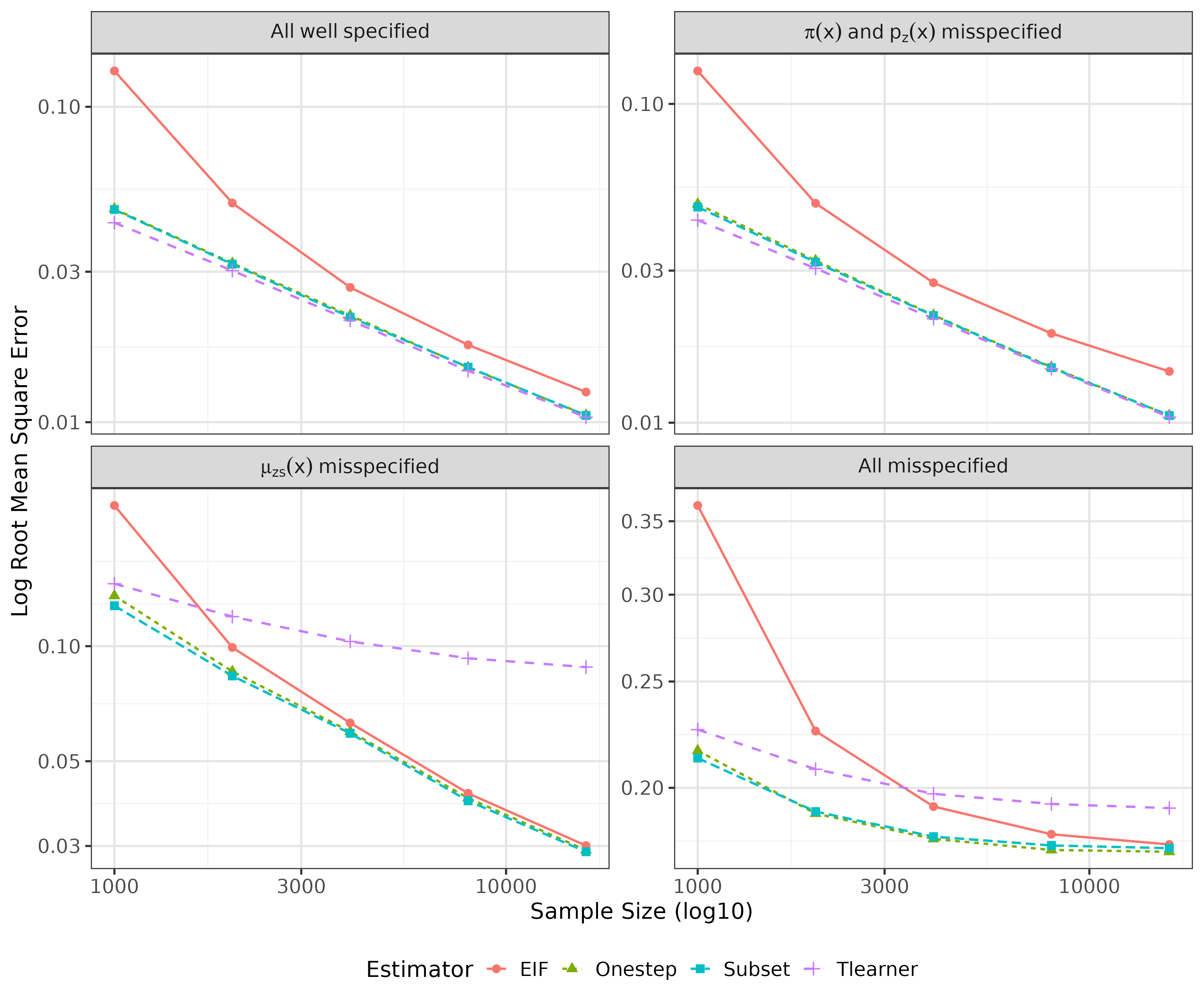}
    \caption{Root mean squared error (RMSE, log scale) of $\tau^{00}(X)$ estimators across sample sizes (1,000–16,000). Results compare the T-learner, subset, EIF, and one-step estimators.}
    \label{fig:robustness for tau00}
\end{figure}

\begin{figure}
    \centering
    \includegraphics[width=0.9\linewidth]{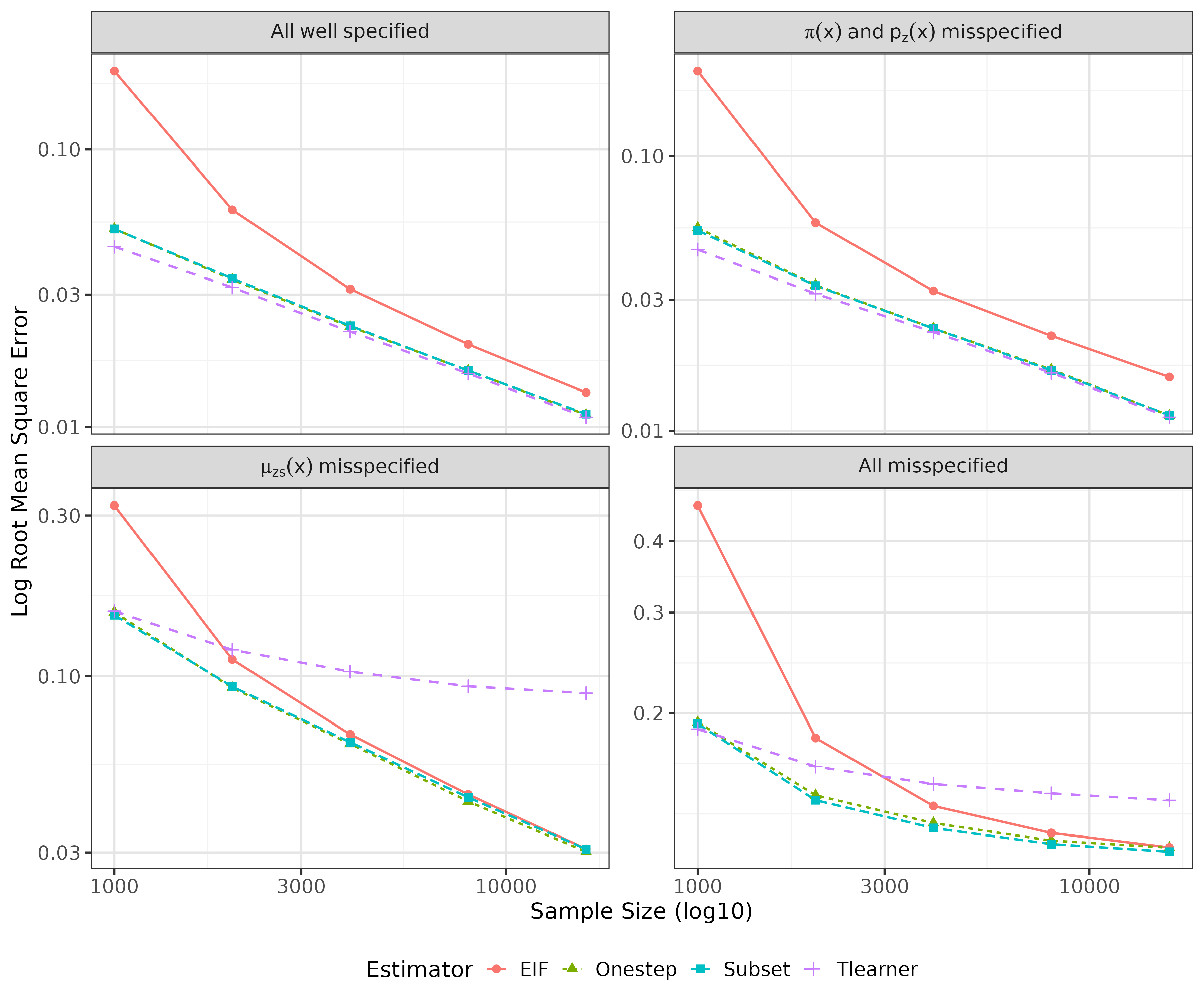}
    \caption{Root mean squared error (RMSE, log scale) of $\tau^{11}(X)$ estimators across sample sizes (1,000–16,000). Results compare the T-learner, subset, EIF, and one-step estimators.}
    \label{fig:robustness for tau11}
\end{figure}

For the second study, we consider a scenario in which all nuisance functions are generated from nonparametric models, but the GAM working models are correctly specified (i.e., the true nuisance functions lie in the additive smooth-function class used for estimation). Figure \ref{fig:gam tau10} and \ref{fig:gam tau11} shows the log-RMSE for estimating $\tau^{10}(x)$ and $\tau^{11}(x)$ under non-parametric regimes estimated by GAM model.

\begin{table}[]
\centering
\label{tab:dgp_gam_nuisance}
\begin{tabular}{p{3.2cm} p{10.8cm}}
\hline
Quantity & Definition / components \\
\hline
$b(X)$
& $\sin(2\pi X_1)+0.8\log(1+3X_2)+1.5\,(X_3-0.3)_+ + (X_4-0.5)^2$ \\

$\text{logit}\{\pi(X)\}$
& $\sin(2\pi X_1)+(X_2-0.5)$ \\

$\text{logit}\{p_1(X)\}$
& $0.8+0.3\log(1+X_1)+0.3(X_2-0.5)-0.25(X_3-0.5)^2$ \\

$\text{logit}\{p_0(X)\}$
& $-0.8-0.3\log(1+X_1)-0.3(X_2-0.5)+0.25(X_3-0.5)^2$ \\
\hline
\end{tabular}
\caption{The data generating process for Study II.}
\end{table}

\begin{figure}
    \centering
    \includegraphics[width=0.8\linewidth]{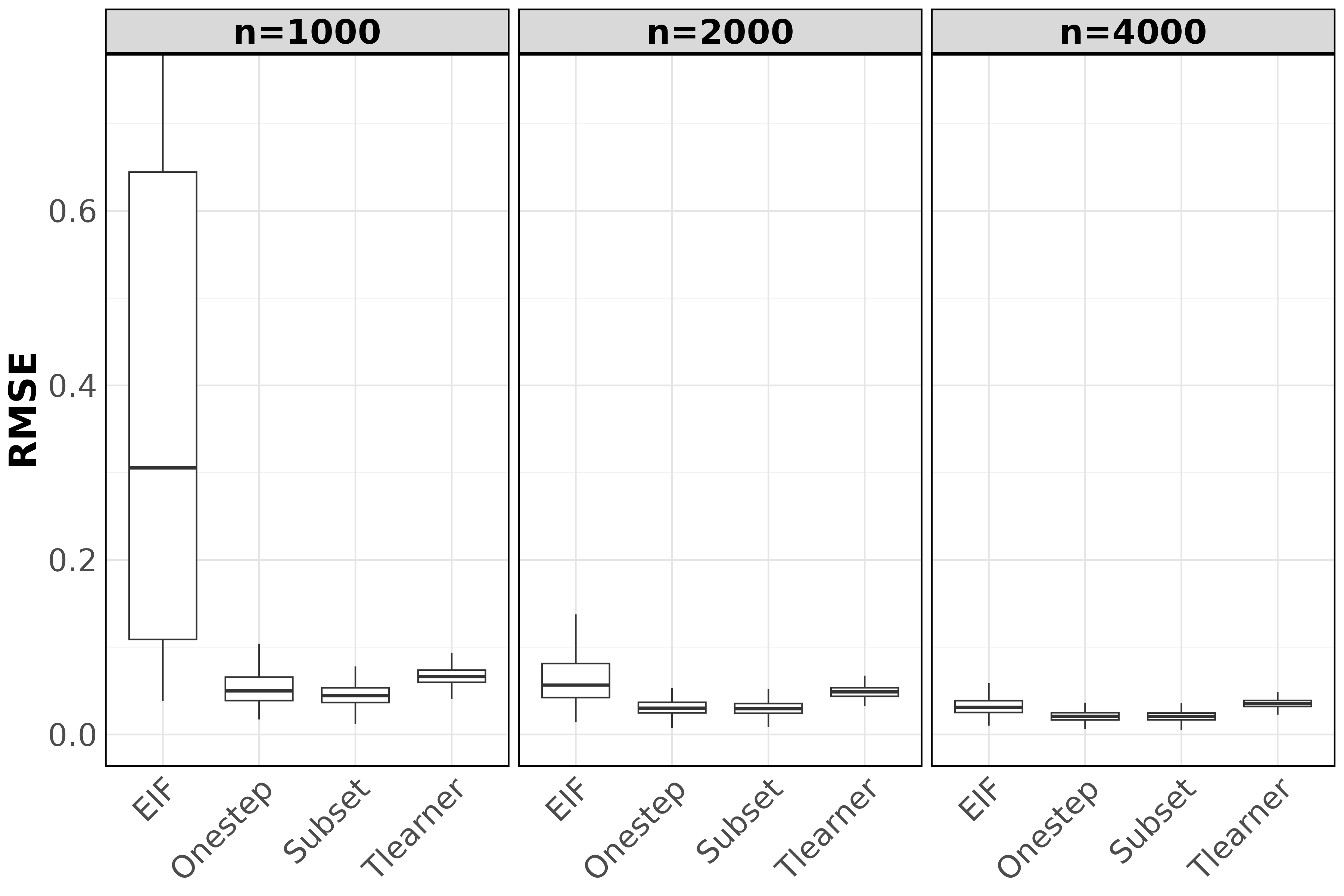}
    \caption{Boxplots display the distribution of RMSE for four estimators (EIF, One-step, Subset, and Tlearner) estimated by GAM model across three sample sizes ($n=1000$, $n=2000$, $n=4000$)}
    \label{fig:gam tau10}
\end{figure}
\begin{figure}
    \centering
    \includegraphics[width=0.8\linewidth]{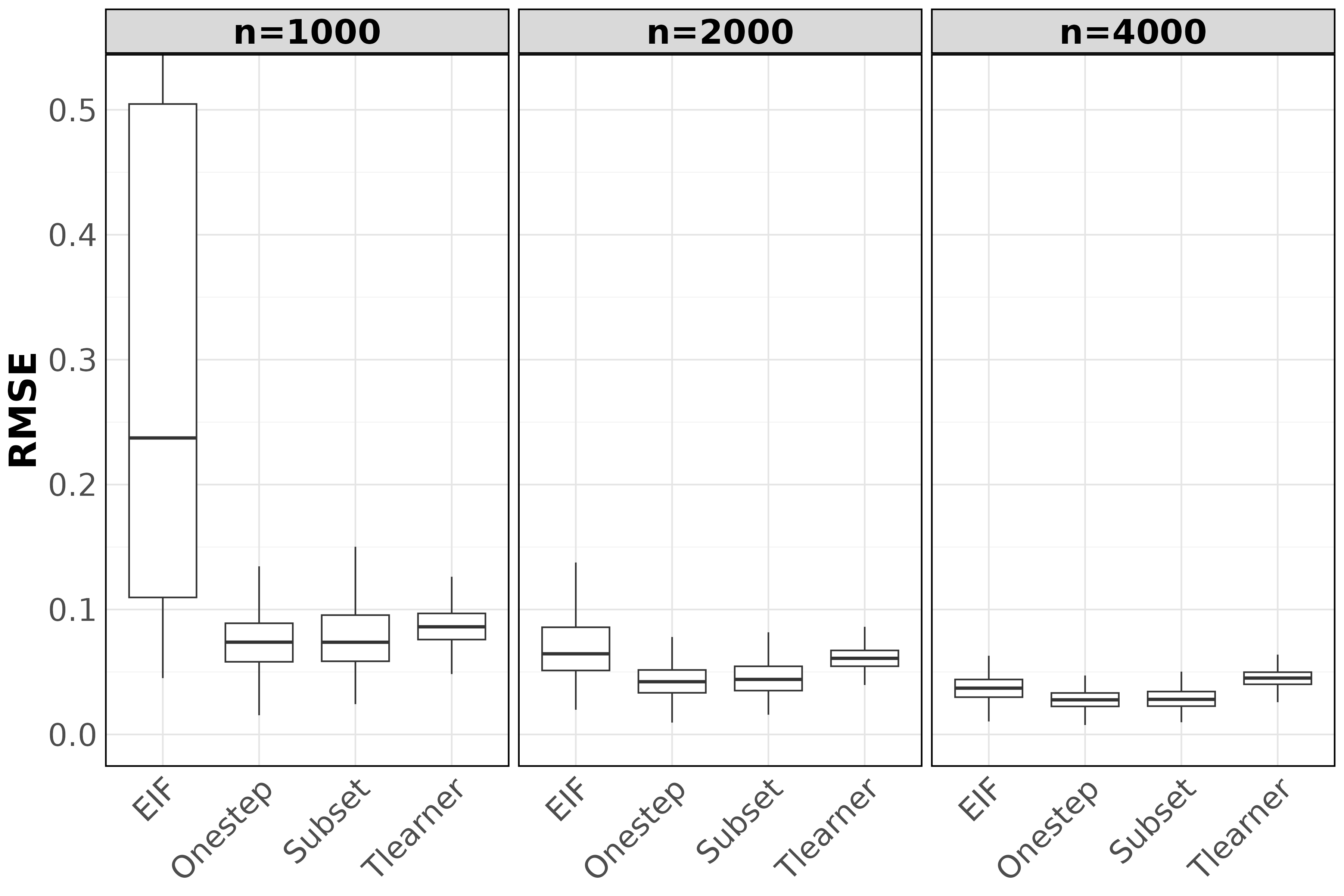}
    \caption{Boxplots display the distribution of RMSE for four estimators (EIF, One-step, Subset, and Tlearner) estimated by GAM model across three sample sizes ($n=1000$, $n=2000$, $n=4000$)}
    \label{fig:gam tau11}
\end{figure}

Since the target estimand $\tau^u(x)$ is linear in the baseline setup (e.g., $\tau^u(x)=0.5X_1$), we also consider a  nonparametric specification in the follow-up second simulation. In particular, we set $\tau_{00}(x)=0.5\sin(2\pi X_1)+0.3(X_2-0.5)^2$, $\tau_{10}(x)=0.4(X_1-0.5)^2+0.6\cos(2\pi X_2)$, $\tau_{11}(x)=-0.5\sin(2\pi X_1)+0.5(X_3-0.5)^2$. Figures~\ref{fig:gam_tau10_non}, \ref{fig:gam_tau00_non}, and \ref{fig:gam_tau11_non} show results similar to those in the linear target-estimand setting.

\begin{figure}[!ht]
    \centering
    \includegraphics[width=0.8\linewidth]{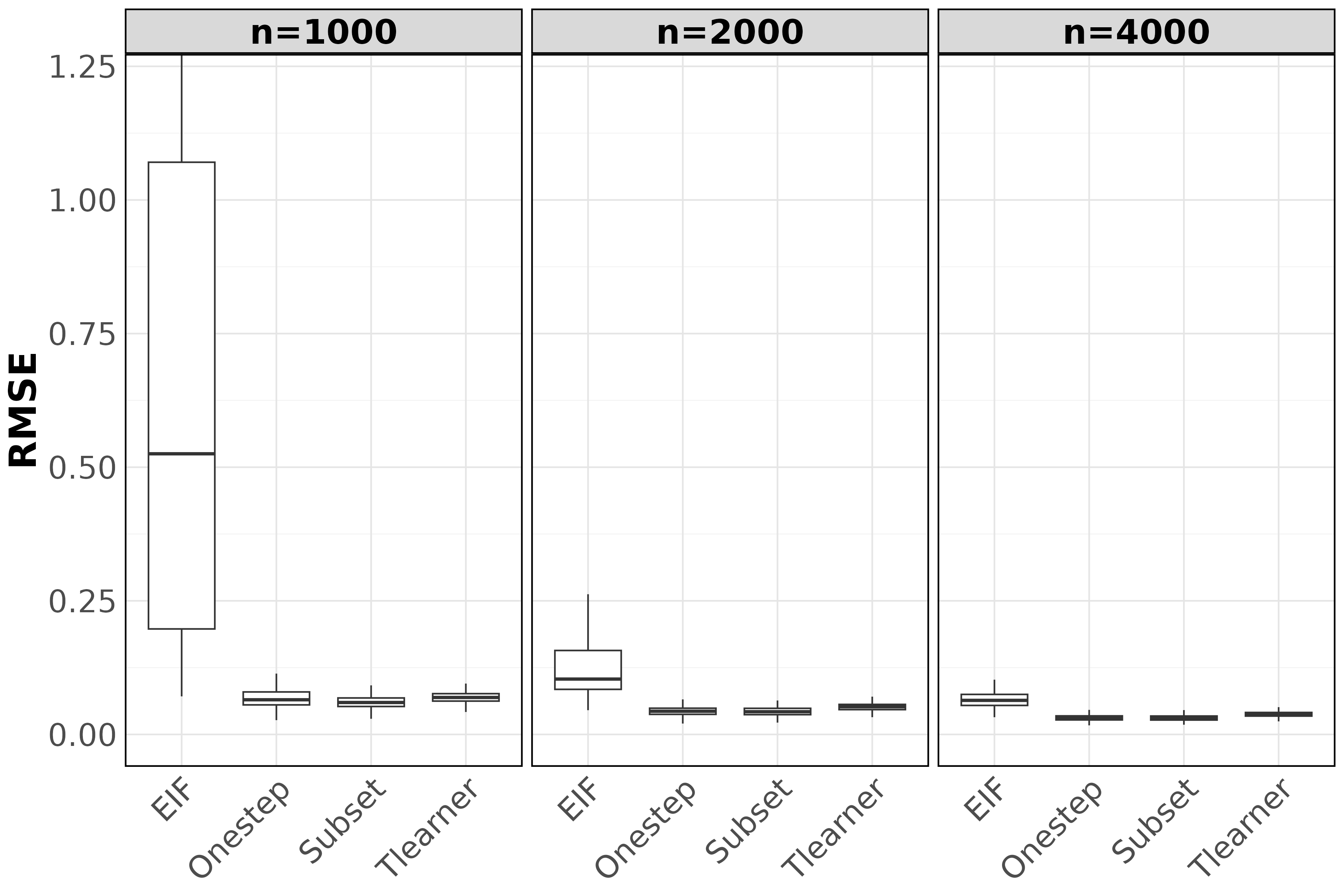}
    \caption{Boxplots show the distribution of RMSE for four estimators (EIF, one-step, subset, and T-learner), using GAM estimation, across three sample sizes when $\tau^u$ is also nonparametric ($n=1000$, $n=2000$, $n=4000$).}
    \label{fig:gam_tau10_non}
\end{figure}

\begin{figure}[!ht]
    \centering
    \includegraphics[width=0.8\linewidth]{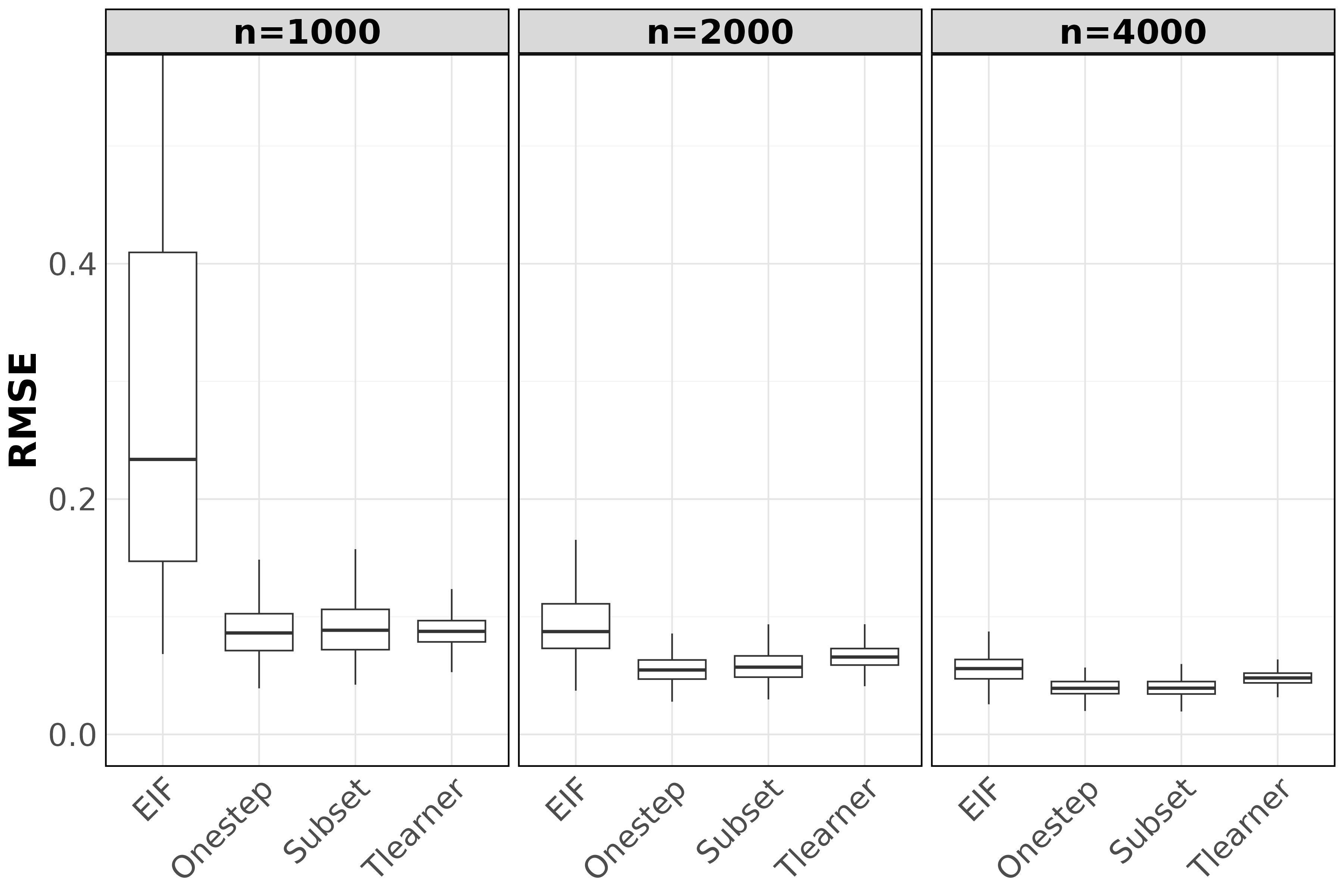}
    \caption{Boxplots show the distribution of RMSE for four estimators (EIF, one-step, subset, and T-learner), using GAM estimation, across three sample sizes when $\tau^u$ is also nonparametric ($n=1000$, $n=2000$, $n=4000$).}
    \label{fig:gam_tau00_non}
\end{figure}

\begin{figure}[!ht]
    \centering
    \includegraphics[width=0.8\linewidth]{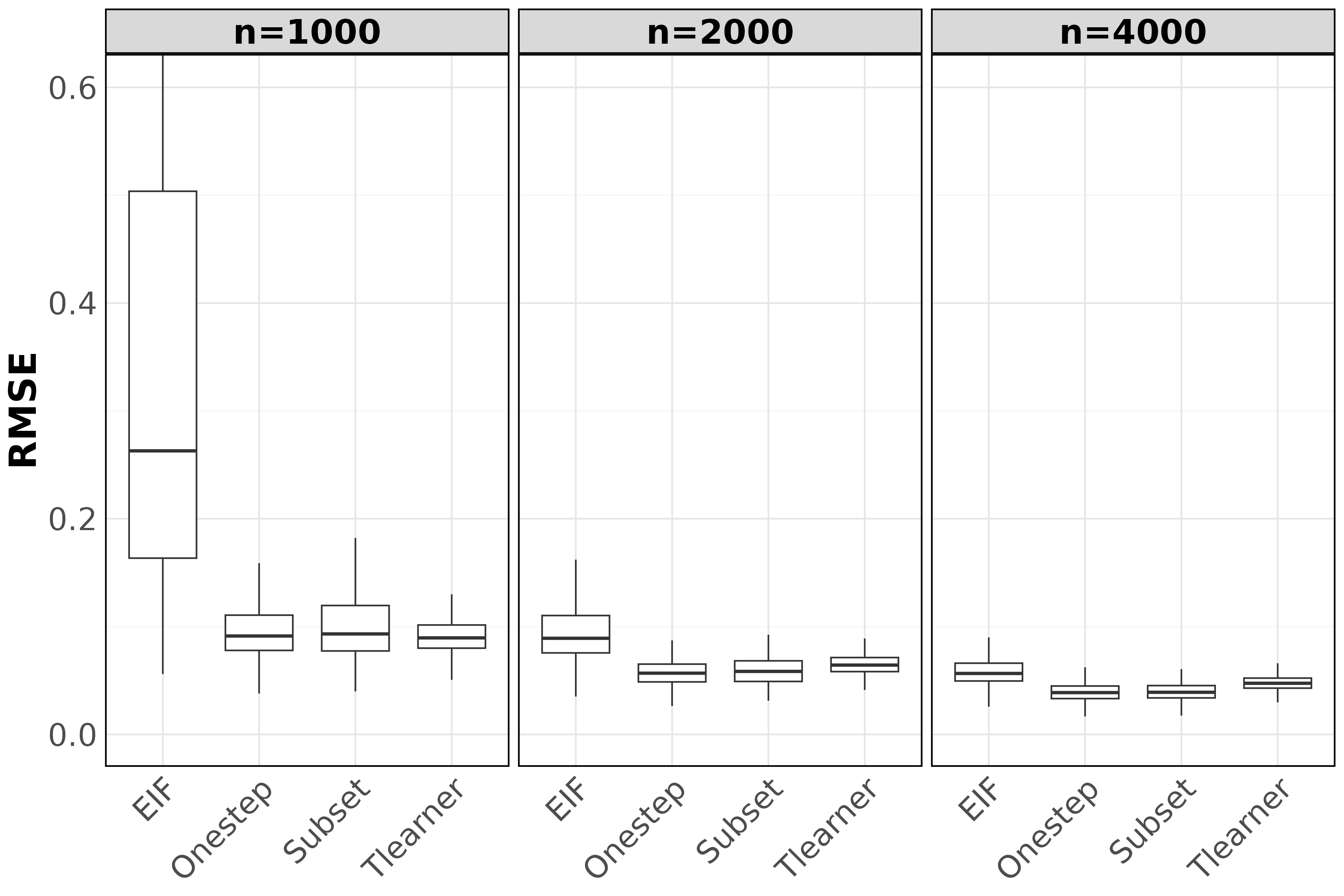}
    \caption{Boxplots show the distribution of RMSE for four estimators (EIF, one-step, subset, and T-learner), using GAM estimation, across three sample sizes when $\tau^u$ is also nonparametric ($n=1000$, $n=2000$, $n=4000$).}
    \label{fig:gam_tau11_non}
\end{figure}

For Study III, we examine the performance of the subset and one-step estimators under overlap violations and unbalanced subset sizes. The true target estimand remains linear, $\tau^u(x)=0.5X_1$, and the outcome mean model is also linear, $b(X)=X_1-0.4X_2+X_3+0.5X_4$. All nuisance functions and the second-stage regression are specified as in the previous GAM setting.

\begin{table}[!ht]
\centering

\begin{tabular}{p{3.2cm} p{10.8cm}}
\hline
Quantity & Definition / components \\
\hline
$\text{logit}\{\pi(X)\}$
& $3.2\,\sin\!\bigl(2\pi X_1X_2\bigr) + 2.6\,(X_3-0.5)^2$ \\

$ \text{logit} \{p_1^{\mathrm{temp}}(X)\}$
& $0.8 + 1.2\,(X_1-0.5)^2 + 0.9\,\sin\!\bigl(2\pi X_3X_4\bigr) - 0.6\,(\pi(X)-0.5)$ \\

$\text{logit}\{p_0^{\mathrm{temp}}(X)\}$
& $ -0.8 - 1.2\,(X_1-0.5)^2 - 0.9\,\sin\!\bigl(2\pi X_3X_4\bigr) - 1.5\,(\pi(X)-0.5)$ \\[2pt]

$\Delta(X)$
& $\Delta(X)=\bigl[p_1^{\mathrm{temp}}(X)-p_0^{\mathrm{temp}}(X)\bigr]$ \\

$m(X)$
& $m(X)=\{p_1^{\mathrm{temp}}(X)+p_0^{\mathrm{temp}}(X)\}/2$ \\[2pt]

$p_1(X)$, $p_0(X)$
& $p_1(X)=\bigl[m(X)+\Delta(X)/2\bigr]$,\quad
  $p_0(X)=\bigl[m(X)-\Delta(X)/2\bigr]$ \\
\hline
\multicolumn{2}{l}{\footnotesize Note: $\pi(x)$, $p_1(x)$, $p_0(x)$ truncated to $[0.01,0.99]$.}\\
\hline
\end{tabular}
\caption{Study III: score-model construction with overlap violations and constrained difference between $p_1(X)$ and $p_0(X)$.}
\label{tab:score_study3}
\end{table}

Table~\ref{tab:score_study3} reports the score-model specifications for $\pi(x)$, $p_1(x)$, and $p_0(x)$. Under this scenario, the observed subset with $S=1$ is highly unbalanced across treatment arms: there are 69 units with $(Z=0,S=1)$ versus 535 units with $(Z=1,S=1)$ (i.e., $69{:}535$). Consequently, we expect the subset estimator for $\tau^{11}(x)$ to perform worse than the one-step estimator. Figure \ref{fig:gam_tau11_s3} proves our thoughts.

\begin{figure}[!ht]
    \centering   \includegraphics[width=0.8\linewidth]{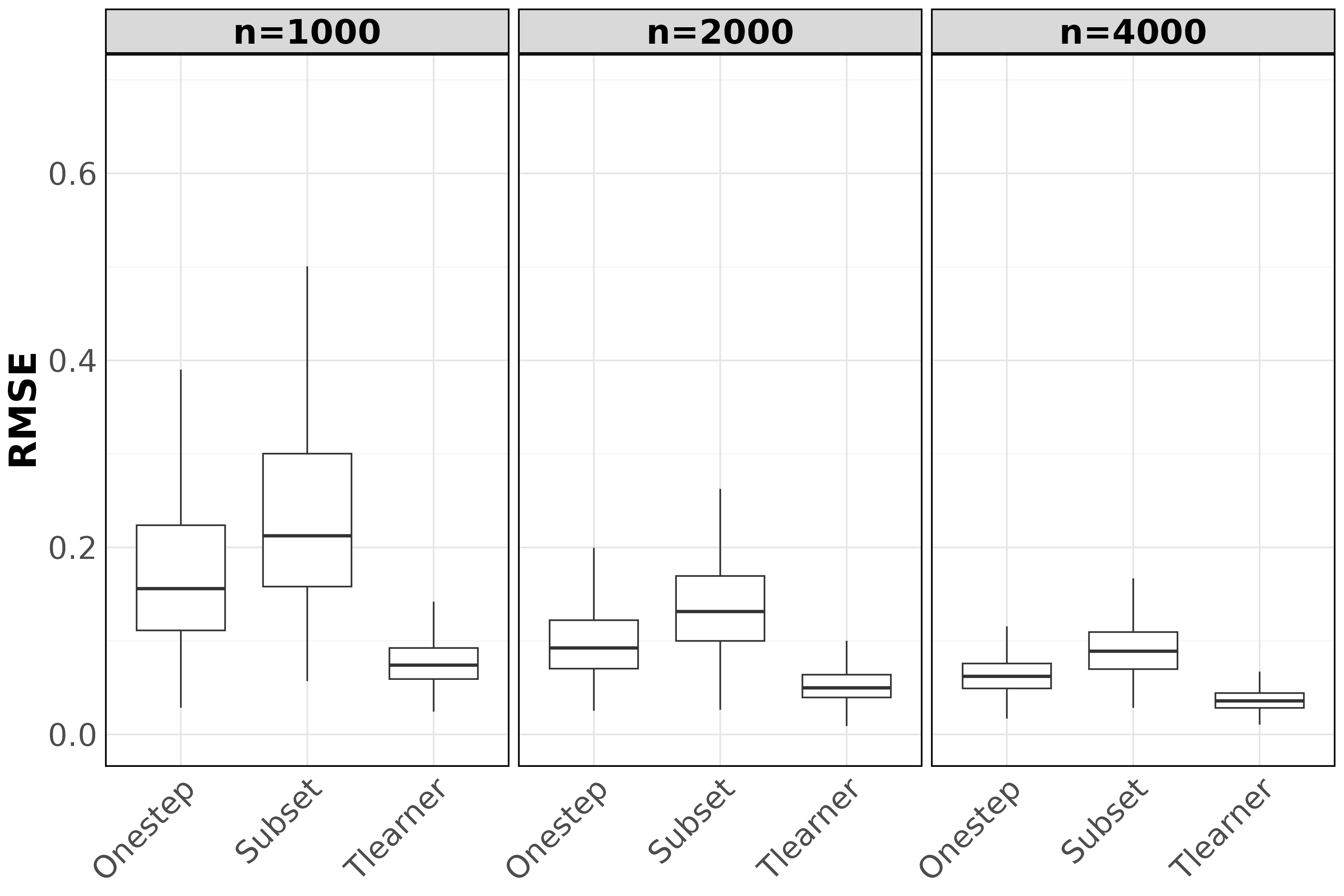}
    \caption{Boxplots show the distribution of RMSE for three estimators ( one-step, subset, and T-learner), using GAM estimation, across three sample sizes when $S=1$ is unbalanced($n=1000$, $n=2000$, $n=4000$).}
    \label{fig:gam_tau11_s3}
\end{figure}

In our simulation studies, we generally do not recommend the EIF estimator, as it performs poorly in small samples and requires a more complex three-stage estimation procedure. 

Between the subset and one-step estimators, our results suggest that when the observed cells are highly unbalanced, the one-step estimator is preferable. However, we also find that the one-step estimator can exhibit substantial variance in small samples, especially for $\tau^{10}(x)$. We suspect this is driven by instability in the denominator $\tilde p_1(x)-\tilde p_0(x)$, which is a difference of two estimated quantities and can therefore be sensitive to sampling variability (particularly when $\tilde p_1(x)$ and $\tilde p_0(x)$ are close).

\subsection{Hotspotting}\label{app:hot}
Desciption about dataset:

We used the publicly available dataset, which differs somewhat from the data used in the JAMA engagement re-analysis, as some variables are unavailable in the public release. The dataset contains 774 patients (after removing all units with any missing data from the original 782), each characterized by 20 baseline covariates covering demographic, socioeconomic, and clinical characteristics.  Engagement is defined using the timing of the first home visit or PCP/specialist appointment, the counts of home visits, phone calls, and other encounters, as well as the duration of the intervention. In our sample, 386 were controls, 137 treated non-engagers, and 251 treated engagers. The covariates includes demographic, socioeconomic, health-status, and clinical utilization covariates. Specifically, these cover sex and race/ethnicity indicators (male, Black, Hispanic, White), employment status, age-group categories, education levels, marital status, housing situation, social support, and self-rated health. The dataset also includes several baseline clinical conditions, such as arthritis admission, mental health condition, substance abuse, HIV/AIDS, COPD/emphysema, congestive heart failure, and diabetes, as well as prior healthcare utilization measures, including the number of inpatient admissions in the 180 days before the index admission and the number of days hospitalized at the index admission. The public dataset includes readmission outcomes measured at 30, 90, and 180 days, and in our analysis we focus on 30-day readmission.

Figure~\ref{fig:CATE} shows the DR-learner estimates of the CATE, using Super Learner to estimate the outcome regression and a generalized random forest (grf) for the second-stage regression. From the results below, there is a clear shift from negative to positive estimates rather than a constant pattern, indicating treatment effect heterogeneity. Figure \ref{fig: one} displays the estimated complier CPCEs and corresponding 95\% confidence
intervals from the one-step estimator.
\begin{figure}
    \centering
    \includegraphics[width=\linewidth]{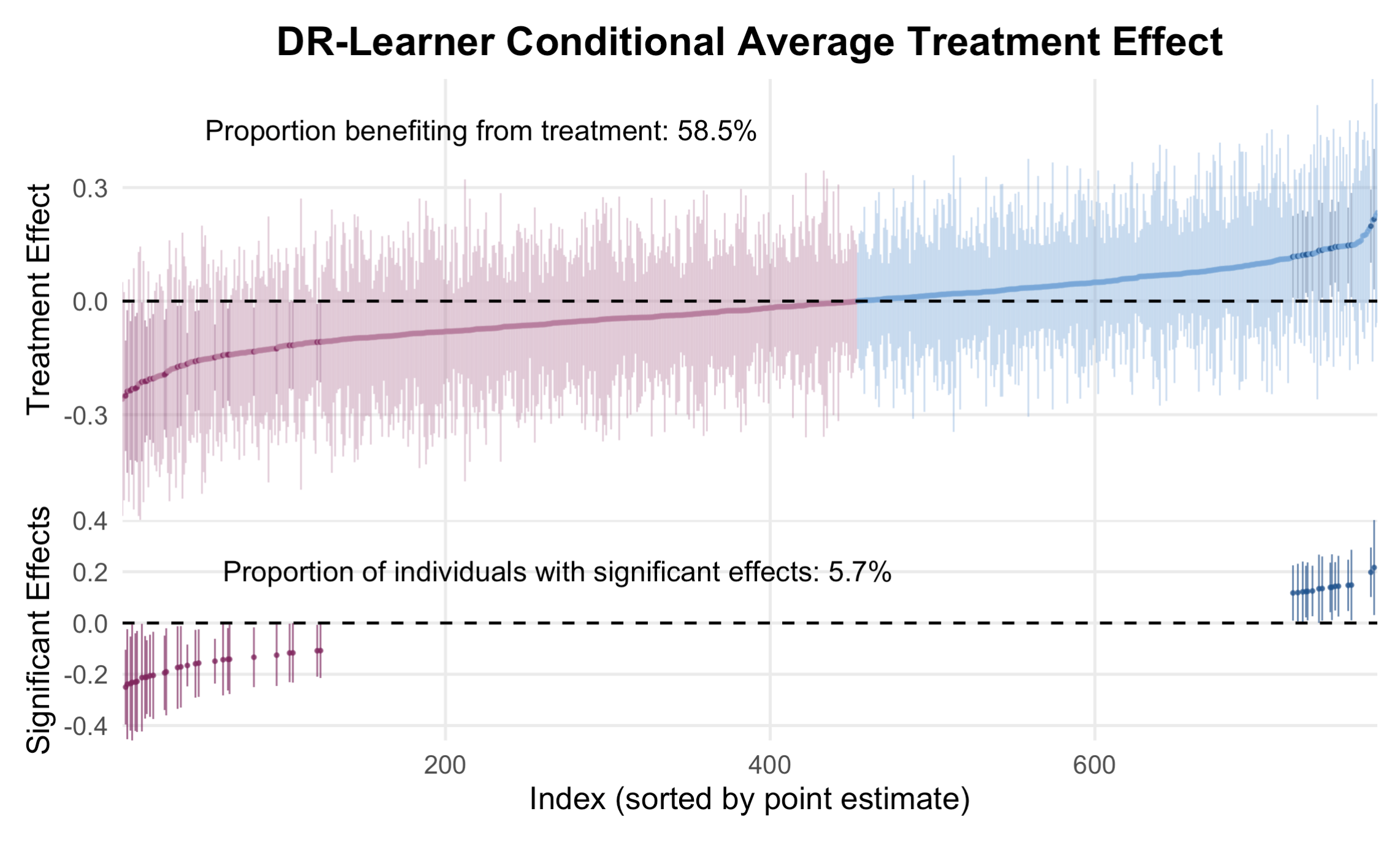}
    \caption{Estimated CATE by DR-learner, ordered by the point estimates, with 95\% confidence intervals computed by \texttt{grf}.}
    \label{fig:CATE}
\end{figure}

\begin{figure}
    \centering
    \includegraphics[width=\linewidth]{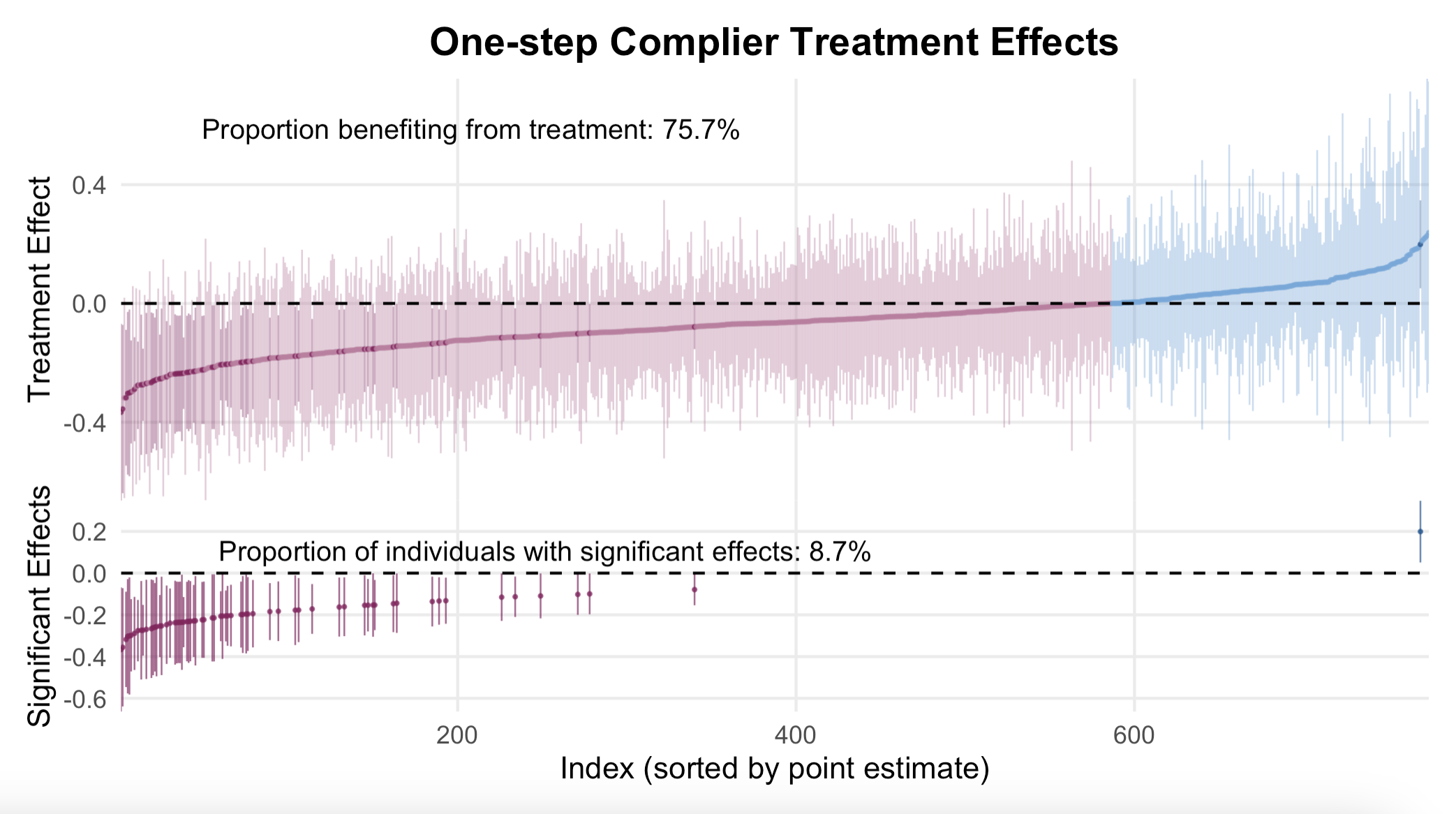}
    \caption{Estimated complier-specific treatment effects by one-step, ordered by the point estimates, with 95\% confidence intervals computed by \texttt{grf}.}
    \label{fig: one}
\end{figure}

For the CPCE among compliers, the following  Figures~\ref{fig:importance} display variable importance from the second-stage regression models used by the subset and one-step estimators, indicating which variables drive heterogeneity in the complier CPCE. The variable importance rankings for the one-step and subset estimators are quite similar. Moreover, the top three variables account for a large proportion of the total importance, which are prior inpatient admissions in the past 180 days, the duration of the initial hospital stay, and sex.

\begin{figure}[htbp]
    \centering
    \begin{subfigure}{\linewidth}
        \centering
        \includegraphics[width=0.8\linewidth]{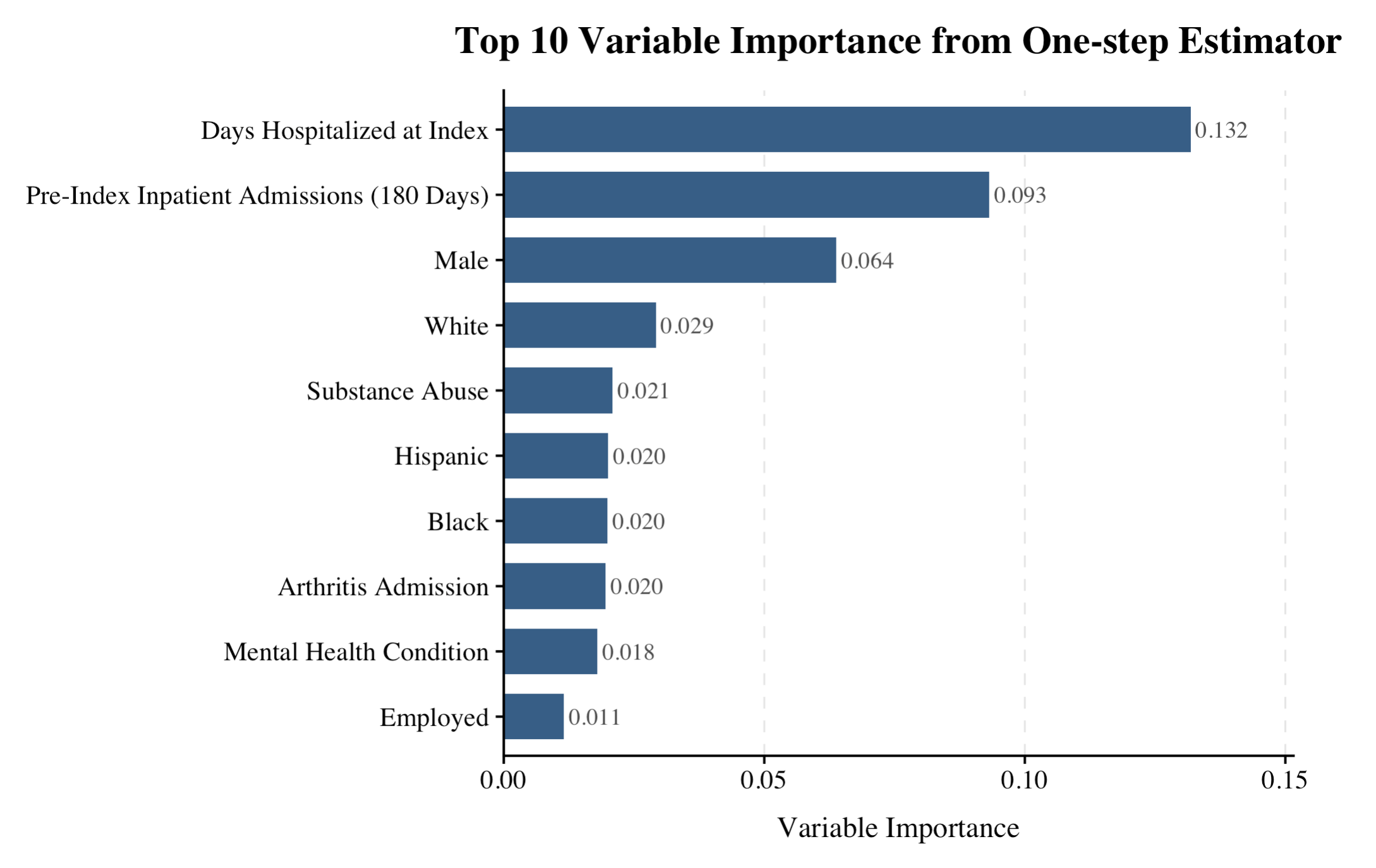}
    \end{subfigure}
    
    \vspace{0.5em}
    
    \begin{subfigure}{\linewidth}
        \centering
        \includegraphics[width=0.8\linewidth]{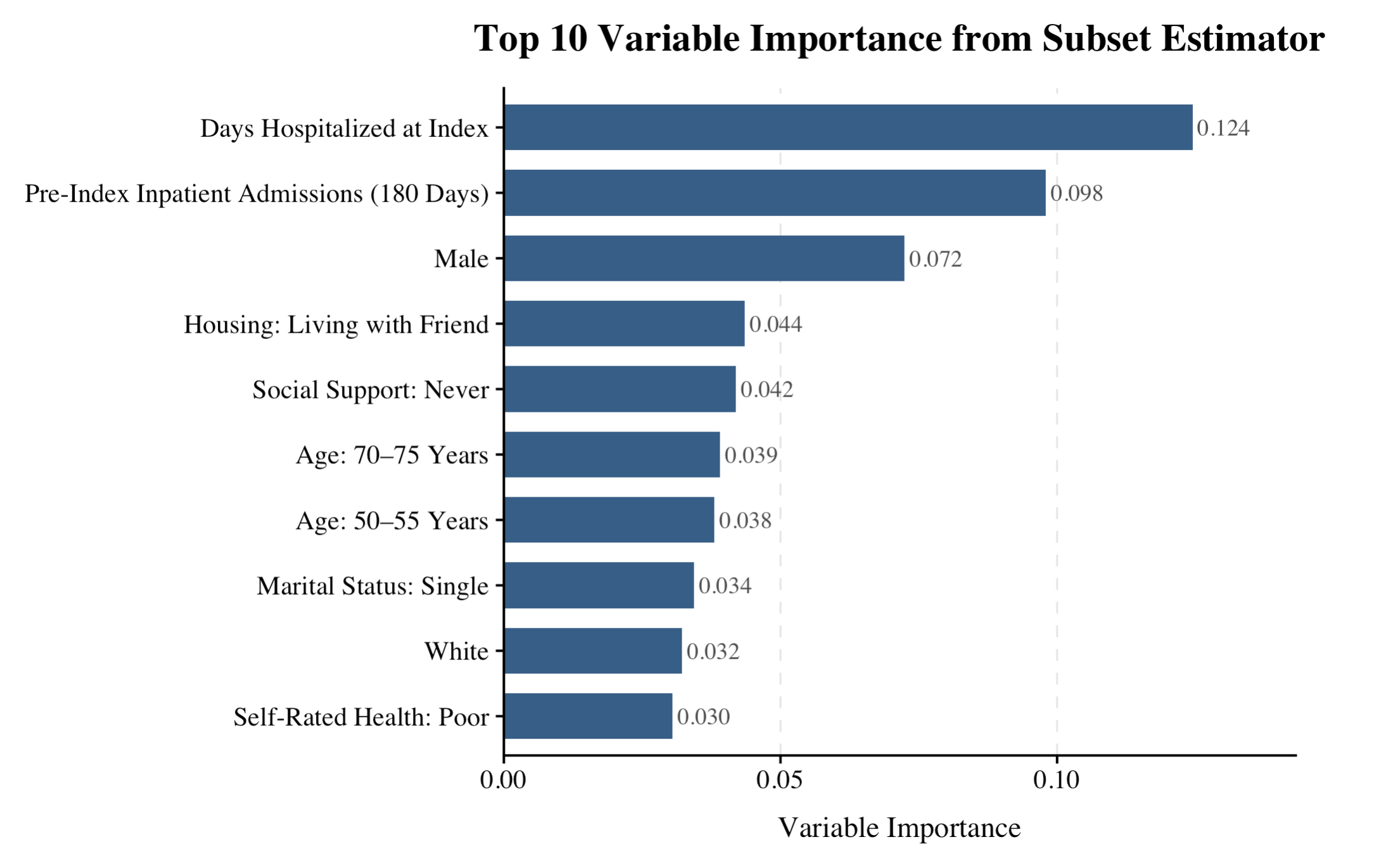}
    \end{subfigure}
    
\caption{Variable importance from the second-stage regression used to estimate $\tau^u(x)$ for the one-step (top) and subset (bottom) estimators.}
    \label{fig:importance}
\end{figure}

We examine how specific variables relate to the estimated effects using boxplots for the binary covariate (sex) and partial dependence plots for the continuous covariates (prior inpatient admissions in the past 180 days and length of the index hospitalization). Boxplot \ref{fig:box} shows that women tend to have a better treatment than the men for both one-step and subset estimator. For the partial dependence plots, we truncate the x-axis at the 99th percentile to avoid extrapolating into regions with sparse data, where the curves tend to flatten. Figure \ref{fig:admission} shows that the estimated complier CPCE becomes more negative with increasing prior inpatient admissions, suggesting larger benefits among patients with greater recent utilization.
Figure \ref{fig:hosp} indicates that the average partial dependence curve (red) is nonlinear but stays mostly below 0, suggesting the estimated complier effect is generally favorable. And the treatment become less effective for longer initial hospital days (probably sicker patients) and more effective when initial hospital days over 10 days (probably serious sicker patients). Moreover, \cite{yang2023hospital} found that education is an important variable for engagement. However, Figure~\ref{fig:education} provides little visual evidence of a statistically significant difference in $\hat\tau^{10}(x)$ across education levels.

\begin{figure}
    \centering
    \includegraphics[width=0.8\linewidth]{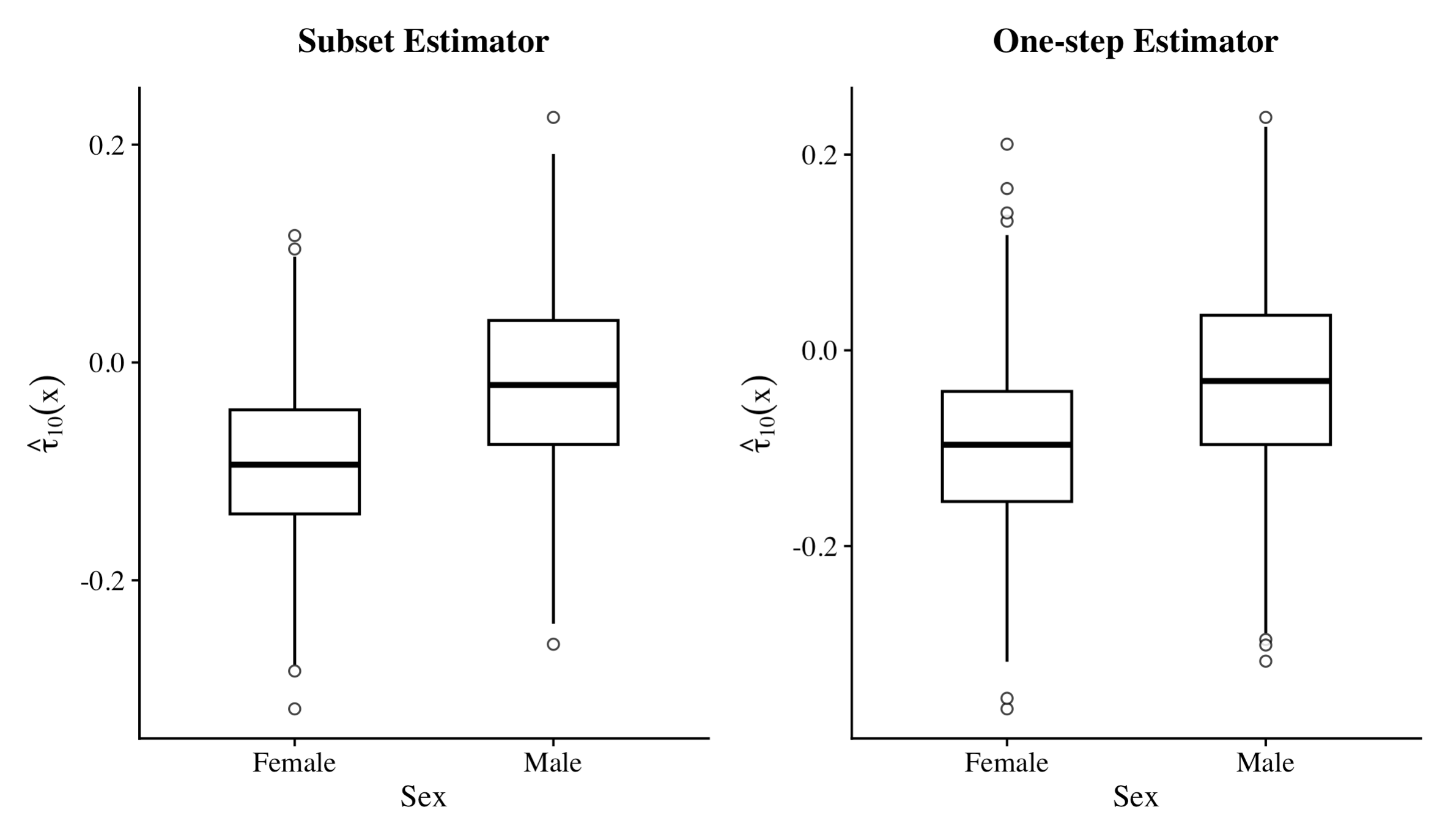}
\caption{Distribution of estimated CPCEs $\hat{\tau}_{10}(X)$ stratified by sex level. Boxplots summarize the estimates within each education category for the subset estimator (left) and the one-step estimator (right).}
    \label{fig:box}
\end{figure}

\begin{figure}
    \centering
    \includegraphics[width=\linewidth]{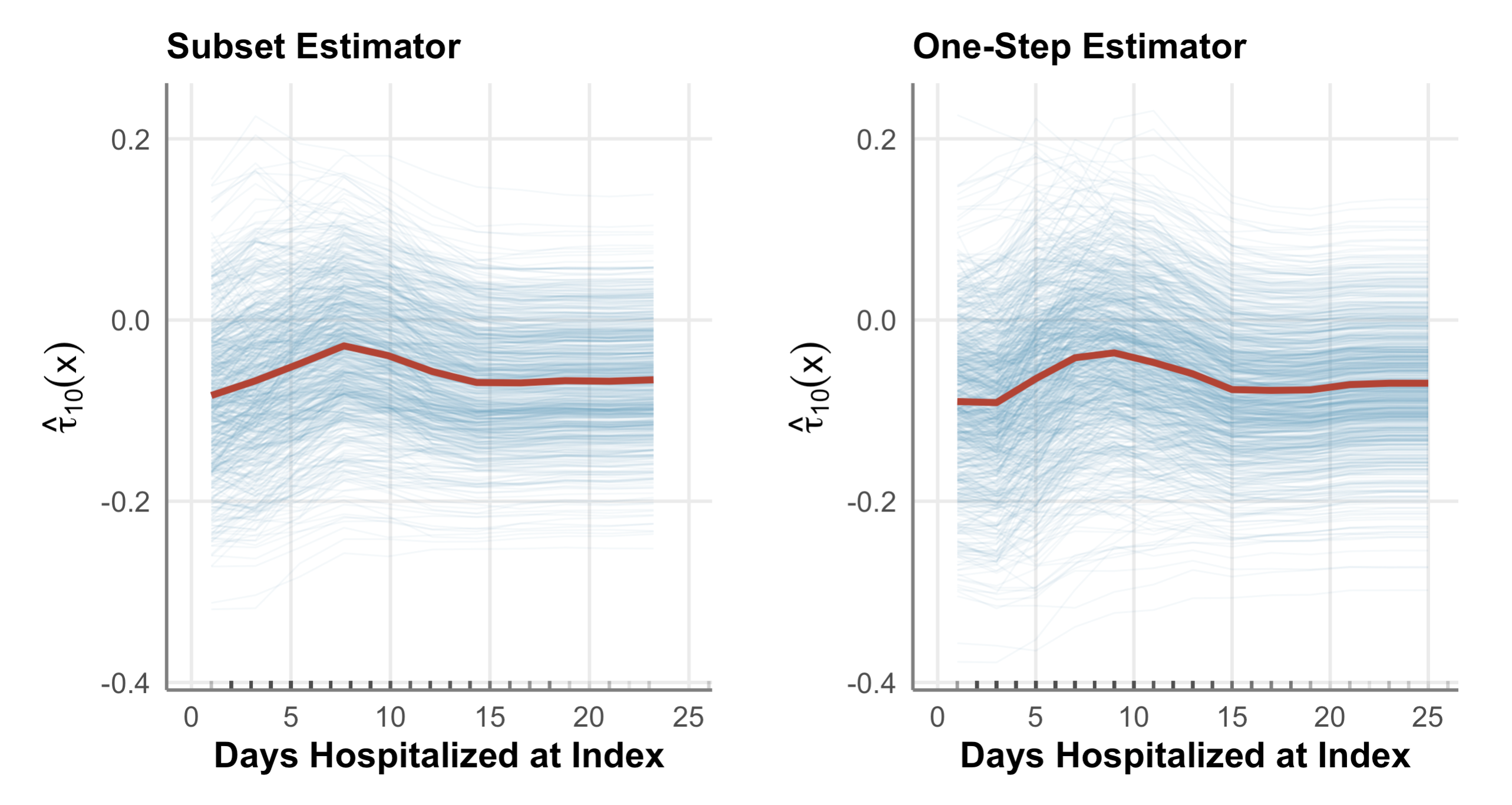}
\caption{Partial dependence of the estimated CPCE $\hat{\tau}_{10}(X)$ on the duration of the initial hospital stay. Thin curves show individual conditional expectation (ICE) trajectories, and the thick curve shows the average partial dependence function, for the subset estimator (left) and the one-step estimator (right). The rug indicates the empirical distribution of the covariate.}
    \label{fig:hosp}
\end{figure}

\begin{figure}
    \centering
    \includegraphics[width=\linewidth]{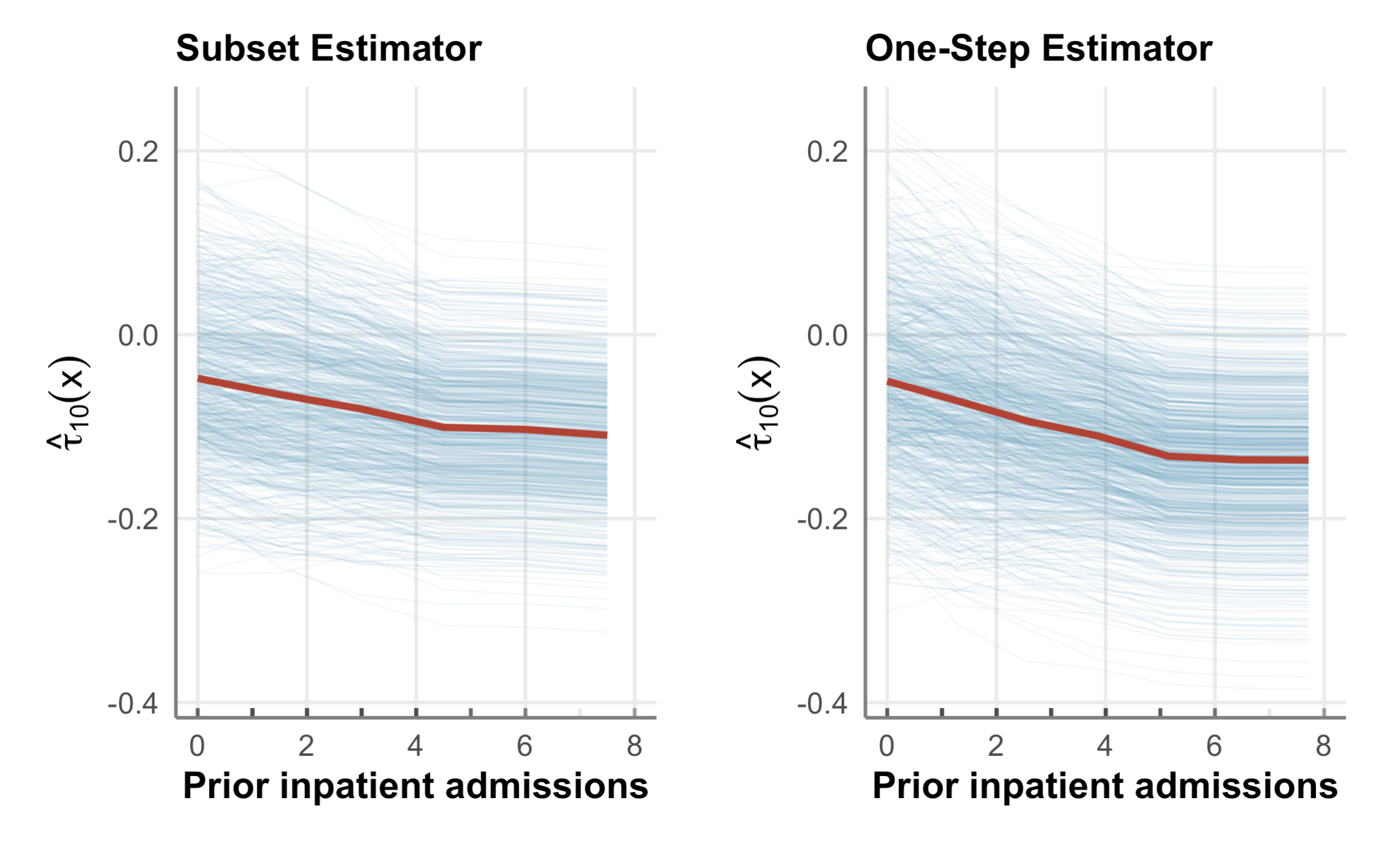}
\caption{Partial dependence of the estimated CPCE $\hat{\tau}_{10}(X)$ on the number of prior inpatient admissions in the past 180 days. Thin curves show individual conditional expectation (ICE) trajectories, and the thick curve shows the average partial dependence function, for the subset estimator (left) and the one-step estimator (right). The rug indicates the empirical distribution of the covariate.}

    \label{fig:admission}
\end{figure}

\begin{figure}
    \centering
    \includegraphics[width=\linewidth]{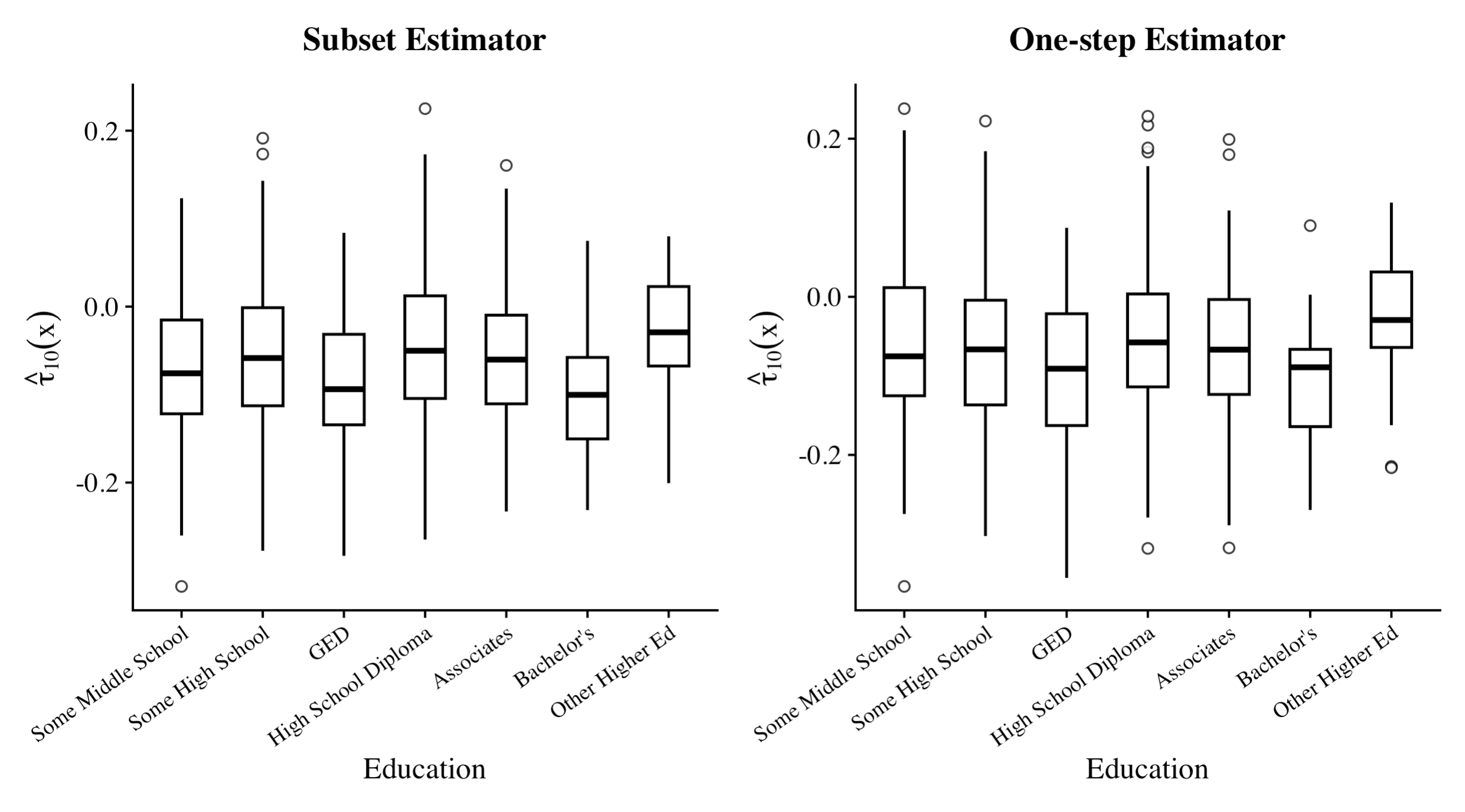}
\caption{Distribution of estimated CPCEs $\hat{\tau}_{10}(X)$ stratified by education level. Boxplots summarize the estimates within each education category for the subset estimator (left) and the one-step estimator (right).}
    \label{fig:education}
\end{figure}

\end{document}